\documentclass[11pt]{article}

\usepackage{amsmath}
\usepackage{amssymb}
\usepackage{amsthm}
\usepackage{amsfonts}
\usepackage{graphicx}
\usepackage{textcomp}
\usepackage{color}
\usepackage{hyperref}
\usepackage{multicol}
\usepackage{slashed}
\usepackage{tikz-cd}
\usepackage[utf8]{inputenc}
\usepackage[english]{babel}
\usepackage[a4paper, total={5.7in, 9.5in}]{geometry}

\numberwithin{equation}{section}

\newtheorem{Proposition}{Proposition}[section]

\newtheorem{Theorem}{Theorem}[section]
\newtheorem{Corollary}{Corollary}[section]
\newtheorem{Definition}{Definition}[section]

\begin{document}

\begin{titlepage}

\title{Metric-Connection Geometries on Pre-Leibniz Algebroids: A Search for Geometrical Structure in String Models}

\author{Tekin Dereli\footnote{tdereli[at]ku.edu.tr}, \quad Keremcan Do\u{g}an\footnote{kedogan[at]ku.edu.tr} \\ \small Department of Physics, Ko\c{c} University, 34450 Sar{\i}yer, \.{I}stanbul, Turkey}

\date{}

\maketitle

\begin{abstract} 
\noindent The metric-affine and generalized geometries, respectively, are arguably the appropriate mathematical frameworks for Einstein's theory of gravity and the low-energy effective massless oriented closed bosonic string field theory. In fact, mathematical structures in a metric-affine geometry are written on the tangent bundle, which is itself a Lie algebroid; whereas those in generalized geometries introduced as the basis of double field theories, are written on Courant algebroids. The Lie, Courant and the higher Courant algebroids used in exceptional field theories, are all special cases of pre-Leibniz algebroids. Provided with some additional ingredients, the construction of such geometries can all be carried over to regular pre-Leibniz algebroids. We define below the notions of locality structures and locality projectors, which are some such necessary ingredients. In terms of these structures, $E$-metric-connection geometries are constructed with (possibly) a minimum number of assumptions. Certain small gaps in the literature are also filled as we go along. $E$-Koszul connections, as a generalization of Levi-Civita connections, are going to be defined and shown to be helpful for some results including a simple generalization of the fundamental theorem of Riemannian geometry. We also show that metric-affine geometries can be constructed in a unique way as special cases of $E$-metric-connection geometries. Moreover, generalized geometries are shown to follow as special cases, and various properties of linear generalized-connections are proven in the present framework. Similarly, uniqueness of the locality projector in the case of exact Courant algebroids is proven; a result that explains why the curvature operator, defined with a projector in the double field theory literature is a necessity.
\end{abstract}

\vskip 2cm

\textit{Keywords}: Pre-Leibniz Algebroids, Generalized Geometries, Metric-Affine Geometries, Lie Algebroids, Locality Structures, $E$-Koszul Connections.

\thispagestyle{empty}

\end{titlepage}

\maketitle

\section{Introduction}
\noindent Geometric structures play an important role in classical field theories. For instance, Einstein's field equations of gravity on a smooth manifold $M$ can be derived from the variations of the Einstein-Hilbert action 
\begin{equation} S_{EH}[g] = \int_M{R(^g \nabla, g)} *_g 1,
\label{ea1}
\end{equation}
where the Lagrangian density is given by a geometric quantity called the Ricci scalar $R(^g \nabla, g)$ corresponding to a metric $g$ and its associated Levi-Civita connection $^g \nabla$, which is metric-$g$-compatible and torsion-free. If an arbitrary affine connection rather than the Levi-Civita connection is considered, then mild generalizations of general relativity can be deduced. For example, Einstein-Cartan gravity can be written in terms of a metric and a metric-compatible but torsionful affine connection \cite{1}. The most general theory would be the case where the torsionful affine connection is not metric-compatible. All of these structures are defined on the tangent bundle, which is a Lie algebroid, and its dual cotangent bundle. By constructing analogous objects, metric-affine geometries can be written on arbitrary Lie algebroids \cite{2}, \cite{3}.

Another example of theories of ``geometric'' origin comes from the low energy effective massless oriented closed bosonic string theory equations, which can be derived from the variations of the following action
\begin{equation}
S[g, H, \phi] = \int_M{e^{-2 \phi} \left[ R(^g \nabla, g) *_g 1 - \frac{1}{2} H \wedge *_g H + 4 d \phi \wedge *_g d \phi \right]},
\label{ea2}
\end{equation}
where $\phi$ is a smooth function called the dilaton, and $H$ is a closed 3-from, which is the field strength of the Kalb-Ramond field. This action can be written in an Einstein-Hilbert-like fashion by using local double field theory \cite{4}. It is a classical field theory on a so-called ``doubled-manifold'' in order to incorporate $T$-duality of string theory as a true symmetry. The constructions of this theory are closely related to Hitchin's generalized geometry \cite{5}\footnote{In local double field theory, almost identical structures are used, but they are constrained by section conditions. In the double field theory literature, these structures are often mathematically ``ill-defined''.}. In this setting, one can rewrite the above action (\ref{ea2}) in terms of a generalized geometric quantity called the generalized-Ricci scalar corresponding to a generalized-metric and a generalized-Levi-Civita connection \cite{6}\footnote{This action can be written in the usual geometric setting if one considers Riemann-Cartan-Weyl geometry \cite{7}.}. In order to define these analogous structures, one needs to use the language of exact Courant algebroids and ``generalize'' the quantities on the tangent bundle. Moreover, exceptional field theories, which include also $U$-duality, can be constructed on ``higher'' exact Courant algebroids \cite{8}.

Lie algebroids, Courant algebroids and higher Courant algebroids are all special cases of pre-Leibniz algebroids. Hence, a natural question arises about the formulation of metric-connection geometries on an arbitrary pre-Leibniz algebroid. Regular pre-Leibniz algebroids that are endowed with a locality structure allow one to construct structures such as linear connections, metric, torsion, curvature and non-metricity tensors \cite{9}. This can be done in a way that restrictions on Lie, Courant and higher Courant algebroids yield the structures in the usual, generalized and exceptional generalized geometries respectively. Most of the works on local double field theory and generalized geometry literature focus on generalized-Levi-Civita connections with a small number of exceptions such as the teleparallel local double field theory \cite{10} and deformed Weitzenb\"{o}ck connections \cite{11}. The most arbitrary linear generalized-connections should be studied to have a better understanding of generalized geometries. Working on such linear connections in the pre-Leibniz algebroid setting is one of the main purposes of this paper.

The organization of the paper is as follows. In section 2, after introducing the notation of the paper, a detailed summary of metric-affine geometries will be given in order to see the analogies between geometries. Constructions such as metric, affine connection, corresponding torsion, curvature, non-metricity, Ricci tensors, and Ricci scalar will be defined. In section 3, pre-Leibniz algebroids will be introduced and ``generalized'' $E$-versions of the previous structures will be constructed for regular local pre-Leibniz algebroids endowed with locality projectors. In this section, we will fill in some small gaps in the literature by making the constructions with a possibly minimum number of assumptions and parallel to metric-affine case. We will define locality structures and locality projectors in order to define $E$-curvature. As analogous to Levi-Civita connections, we will define $E$-Koszul connections, which will satisfy some interesting properties and will be useful for some results. For example, we will prove a ``simple'' generalization of the fundamental theorem of Riemannian geometry. Moreover, we will show that $E$-metric-connection geometry yields a unique metric-affine geometry. In section 4, generalized geometries on pre-Courant algebroids will be constructed as a special case, and some general results which will hold for the case of exact Courant algebroids will be proven. Moreover, we will prove that for exact almost-Courant algebroids, there is a unique locality projector, which is the one already used in the double field theory literature. In the last section, concluding remarks and possible future research directions will be discussed.


\section{Metric-Affine Geometries on a Smooth Manifold}

\noindent In this paper, every construction is assumed to be in the smooth category, and Einstein's summation convention for repeated indices is used. $M$ denotes an orientable (second countable, Hausdorff) manifold of dimension $n \in \mathbb{N}$. Its tangent bundle $T(M)$ is a real vector bundle of rank $n$ whose sections are vector fields. The set of vector fields is denoted by $\mathfrak{X}(M)$ and it forms a Lie algebra with the Lie bracket $[\cdot,\cdot]: \mathfrak{X}(M) \times \mathfrak{X}(M) \to \mathfrak{X}(M)$, which is anti-symmetric and satisfies the Jacobi identity and the Leibniz rule. $\mathfrak{X}(M)$ is isomorphic to derivations on the set of smooth functions $C^{\infty}(M, \mathbb{R})$, and the action of a vector field $V$ on a smooth function $f$ will be denoted by $V(f)$. The term, local frame, will be used only for a local basis of $T(M)$. On a local frame $(X_a)$, the Lie bracket satisfies $[X_a, X_b] = \gamma^c_{\ a b} X_c$ for some $\{ \gamma^c_{\ a b} \}$, which are called the anholonomy coefficients. A local frame is called holonomic if all anholonomy coefficients vanish\footnote{Such a local frame always exists as it is induced by the coordinate maps on a local trivialization chart.}. The tangent bundle's dual is the cotangent bundle $T^*(M)$ whose sections are exterior differential 1-forms. A local frame $(X_a)$ has the dual $\left( e^a \ | \ e^a(X_b) = \delta^a_{\ b} \right)$ called a local coframe, which is a local basis for $T^*(M)$, where $\delta^a_{\ b}$ is the Kronecker delta symbol. $(q, r)$-type tensors over $M$ are defined as the elements of
\begin{equation} Tens^{(q, r)}(M) := \Gamma \left( \bigotimes_{i = 1}^q T(M) \otimes \bigotimes_{j = 1}^r T^*(M) \right),
\label{eb1}
\end{equation}
where the set of sections of any fiber bundle $E$ over $M$ is denoted by $\Gamma(E)$. On a local frame $(X_a)$ over a coordinate chart $U$, the components of a $(q, r)$-type tensor $Z$ are defined by
\begin{equation} Z^{a_1 \ldots a_q}_{\ \ \ \ \ \ \ b_1 \ldots b_r} := Z(e^{a_1}, \ldots, e^{a_q}, X_{b_1}, \ldots, X_{b_r}),
\label{eb2}
\end{equation}
which are smooth functions over $U$. $\Omega^p(M)$ denotes the set of exterior differential $p$-forms, which are anti-symmetric $(0, p)$-type tensors. On the set of all exterior differential forms, there is a degree-$1$ graded derivation $d: \Omega^p(M) \to \Omega^{p + 1}(M)$ called the exterior derivative\footnote{$d^2 = 0$, so it defines the de Rham cohomology $H_{dR}(M)$.}, and a degree-$(-1)$ graded anti-derivation $\iota_V: \Omega^p(M) \to \Omega^{p - 1}(M)$ called the interior product with respect to a vector field $V$. The action of the Lie bracket $[V, .]$ can be extended into whole tensor algebra by the Lie derivative $\mathcal{L}_V: Tens^{(q, r)}(M) \to Tens^{(q, r)}(M)$ with respect to the vector field $V$. 

Metric-affine geometries are defined to be a triplet $(M, g, \nabla)$\footnote{As we will see, a more appropriate notation for the next sections would be the quadruplet $(M, (T(M), id_{T(M)}, [\cdot,\cdot], [0], 0), g, \nabla)$, where $id_X$ stands for the identity map in a set $X$.} where $M$ is a (smooth, orientable) manifold, $g$ is a symmetric and non-degenerate $(0, 2)$-type tensor called the metric\footnote{The metric $g$ induces an isomorphism, which will be denoted by the same symbol $g: \mathfrak{X}(M) \to \Omega^1(M)$ given by $g(u)(v) := g(u, v)$, for all $u, v \in T(M)$. It also induces another isomorphism $*_g: \Omega^p(M) \to \Omega^{n - p}(M)$, which is called the Hodge star isomorphism.}  and $\nabla$ is an affine connection on $M$\footnote{This is a linear vector bundle connection on $T(M)$.} defined to be an $\mathbb{R}$-bilinear map $\nabla: \mathfrak{X}(M) \times \mathfrak{X}(M) \to \mathfrak{X}(M)$, $(U, V) \mapsto \nabla_U V$ satisfying
\begin{align} \nabla_U(f V) &= U(f) V + f \nabla_U V, \nonumber\\
\nabla_{f U} V &= f \nabla_U V,
\label{eb3}
\end{align}
for all $U, V \in \mathfrak{X}(M), f \in C^{\infty}(M, \mathbb{R})$. On a local frame $(X_a)$, the connection coefficients $\{ \Gamma(\nabla)^c_{\ a b} \}$ are defined by
\begin{equation} \Gamma(\nabla)^a_{\ b c} := \langle e^a, \nabla_{X_b} X_c \rangle,
\label{eb4}
\end{equation}
where the map $\langle \cdot,\cdot \rangle : \Omega^p(M) \times \mathfrak{X}(M) \to \Omega^{p - 1}(M)$ is given by
\begin{equation} \langle \omega, V \rangle  := \iota_V \omega ,
\label{eb5}
\end{equation}
for all $\omega \in \Omega^p(M), V \in \mathfrak{X}(M)$\footnote{Note that $\Omega^0(M) = C^{\infty}(M, \mathbb{R})$.}. The action of an affine connection can be extended into the whole tensor algebra by the Leibniz rule, and it induces a map called the exterior covariant derivative, which will be denoted by the same symbol $\nabla: Tens^{(q, r)}(M) \to Tens^{(q, r + 1)}(M)$, $Z \mapsto \nabla Z$
\begin{align} (\nabla Z)(\omega_1, \ldots, \omega_q, U, V_1, \ldots V_r) &:= (\nabla_U Z)(\omega_1, \ldots, \omega_q, V_1, \ldots V_r) \nonumber\\
&:= U \left( Z(\omega_1, \ldots, \omega_q, V_1, \ldots, V_r) \right) \nonumber\\
& \qquad - \sum_{i = 1}^q Z(\omega_1, \ldots, \nabla_U \omega_i, \dots, \omega_q, V_1, \ldots, V_r) \nonumber\\ 
& \qquad - \sum_{j = 1}^r Z(\omega_1, \dots, \omega_q, V_1, \ldots, \nabla_U V_j, \ldots, V_r),
\label{eb6}
\end{align}
for all $\omega_i \in \Omega^1(M), V_j, U \in \mathfrak{X}(M)$.

The non-metricity tensor corresponding to an affine connection $\nabla$ and a metric $g$ is defined to be the following $(0, 3)$-type tensor
\begin{equation} Q(\nabla, g) := \nabla g.
\label{eb7}
\end{equation}
If $Q(\nabla, g) = 0$, then the affine connection $\nabla$ is called metric-$g$-compatible. Every manifold with a metric $g$ admits a metric-$g$-compatible connection. On a local frame $(X_a)$, the non-metricity components read
\begin{equation} Q(\nabla, g)_{a b c} = X_a(g_{bc}) - \Gamma(\nabla)^d_{\ a b} g_{d c} - \Gamma(\nabla)^d_{\ a c} g_{b d}.
\label{eb8}
\end{equation}

The torsion operator of an affine connection $\nabla$ is defined as a map $T(\nabla): \mathfrak{X}(M) \times \mathfrak{X}(M) \to \mathfrak{X}(M)$,
\begin{equation} T(\nabla)(U, V) := \nabla_U V - \nabla_V U - [U, V],
\label{eb9}
\end{equation}
for all $U, V \in \mathfrak{X}(M)$. If $T(\nabla)(U, V)$ vanishes for all $U, V \in \mathfrak{X}(M)$, then $\nabla$ is called torsion-free. $T(U, V)$ is $C^{\infty}(M, \mathbb{R})$-bilinear, so the torsion tensor, which will be denoted by the same symbol, is a $(1, 2)$-type tensor defined as
\begin{equation} T(\nabla)(\omega, U, V) := \langle \omega, T(\nabla)(U, V) \rangle ,
\label{eb10}
\end{equation}
for all $\omega \in \Omega^1(M), U, V \in \mathfrak{X}(M)$. On a local frame $(X_a)$, the torsion components read
\begin{equation} T(\nabla)^a_{\ b c} = \Gamma(\nabla)^a_{\ b c} - \Gamma(\nabla)^a_{\ c b} - \gamma^a_{\ b c}.
\label{eb11}
\end{equation}
Note that due to anti-symmetry of the Lie bracket, the torsion operator is also anti-symmetric so that $T(\nabla)^a_{\ b c}$ is anti-symmetric in $b$ and $c$.

The curvature operator of an affine connection $\nabla$ is defined as a map $R(\nabla): \mathfrak{X}(M) \times \mathfrak{X}(M) \times \mathfrak{X}(M) \to \mathfrak{X}(M)$, 
\begin{equation} R(\nabla)(U, V, W) := \nabla_U \nabla_V W - \nabla_V \nabla_U W - \nabla_{[U, V]}W,
\label{eb12}
\end{equation}
for all $U, V, W \in \mathfrak{X}(M)$. $R(\nabla)(U, V, W)$ is $C^{\infty}(M, \mathbb{R})$-multilinear, so the curvature tensor is defined as a $(1, 3)$-type tensor given by
\begin{equation} R(\nabla)( \omega, U, V, W) := \langle \omega, R(\nabla)(U, V, W) \rangle ,
\label{eb13}
\end{equation}
for all $\omega \in \Omega^1(M), U, V, W \in \mathfrak{X}(M)$. On a local frame $(X_a)$, the curvature components read
\begin{align} R(\nabla)^a_{\ b c d} = & \ X_b \left( \Gamma(\nabla)^a_{\ c d} \right) - X_c \left( \Gamma(\nabla)^a_{\ b d} \right) + \Gamma(\nabla)^e_{\ c d} \Gamma(\nabla)^a_{\ b e} \nonumber\\
& - \Gamma(\nabla)^e_{\ b d} \Gamma(\nabla)^a_{\ c e} - \gamma^e_{\ b c} \Gamma(\nabla)^a_{\ e d}.
\label{eb14}
\end{align}
Ricci tensor $Ric(\nabla)$ of an affine connection $\nabla$ is defined as the trace of the linear map $U \mapsto R(\nabla)(U, V, W)$. On a local frame $(X_a)$, its components can be written as
\begin{equation} Ric(\nabla)_{a b} = R(\nabla)^c_{\ c a b}.
\label{eb15}
\end{equation}
The Ricci scalar or scalar curvature $R(\nabla, g)$ of an affine connection $\nabla$ and a metric $g$ is defined as the trace of $Ric(\nabla)$ with respect to $g$. On a local frame $(X_a)$, it can be written as
\begin{equation} R(\nabla, g) = Ric(\nabla)_{a b} g^{a b}.
\label{eb16}
\end{equation}
According to Vermeil's theorem, the Ricci scalar $R(\nabla, g)$ is the only scalar invariant that is linear in the second derivatives of the metric field $g$.

The fundamental theorem of Riemannian geometry states that there is a unique torsion-free, metric-$g$-compatible affine connection $^g \nabla$ called the Levi-Civita connection associated to $g$ given by the Koszul formula:
\begin{align} 2 g(^g \nabla_U V, W) = & \ U(g(V, W)) + V(g(U, W)) - W(g(U, V)) \nonumber\\
& - g([V, W], U) - g([U, W], V) + g([U, V], W),
\label{eb17}
\end{align}
for all $U, V, W \in \mathfrak{X}(M)$ whose components read
\begin{equation} \Gamma(^g \nabla)^a_{\ b c} = \frac{1}{2} g^{a d} \left[ X_b \left( g_{c d} \right) + X_c \left( g_{b d} \right) - X_d \left( g_{b c} \right) - \gamma^e_{\ c d} g_{e b} - \gamma^e_{\ b d} g_{e c}  + \gamma^e_{\ b c} g_{e d} \right]
\label{eb18}
\end{equation}
on a local frame $(X_a)$. Moreover, given a metric $g$, torsion $T(\nabla)$ and non-metricity $Q(\nabla, g)$ tensors, one can uniquely determine the affine connection $\nabla$ with the specified torsion and non-metricity tensors \cite{12}:
\begin{align} \Gamma(\nabla)^a_{\ b c} = & \Gamma(^g \nabla)^a_{\ b c} + \frac{1}{2} g^{a d} \Big[- Q(\nabla, g)_{b d c} + Q(\nabla, g)_{d c b} - Q(\nabla, g)_{c b d} \nonumber\\
& \qquad \qquad \qquad \quad \ - g_{e c} T(\nabla)^e_{\ b d} + g_{e d} T(\nabla)^e_{\ b c} - g_{e b} T(\nabla)^e_{\ c d} \Big].
\label{eb19} 
\end{align}


\section{Metric-Connection Geometries on Local Pre-Leibniz Algebroids}

\noindent The aim of this section is to construct the geometric objects from the previous section in a more general setting and to fill some small gaps in the literature while proving some important results\footnote{One should be careful about the definitions of this section when comparing them with the ones that exist in literature. Assumptions of the defining properties of algebroids might change from paper to paper.}. Closely following \cite{9}, these constructions will be done with the minimum number of assumptions by copying the structures and properties of the tangent bundle, which is a vector bundle with the Lie bracket on its sections. For instance, tensors on a manifold can be easily generalized on arbitrary vector bundles because one can define dual vector bundles and tensor product of vector bundles. Let $E$ be a real vector bundle over a manifold $M$, with an abuse of notation, $(q, r)$-type $E$-tensors on $M$ are defined as 
\begin{equation}
Tens^{(q, r)}(E) := \Gamma \left( \bigotimes_{i = 1}^q E \otimes \bigotimes_{j = 1}^r E^* \right).
\label{ec1}
\end{equation}
Elements of $\mathfrak{X}(E) := \Gamma(E)$ are called $E$-vector fields. For any real vector bundle $E$, $\Gamma(E)$ is a module over $C^{\infty}(M, \mathbb{R})$, so one can construct a local basis for $\mathfrak{X}(E)$, and such a local basis $(X_a)$ is called a local $E$-frame. Its dual $\left( e^a \ | \ e^a(X_b) = \delta^a_{\ b} \right)$ is called a local $E$-coframe. Components of an $E$-tensor are defined in the usual way with respect to local $E$-frames and local $E$-coframes. Anti-symmetric $(0, p)$-type $E$-tensors are called $E$-exterior $p$-forms, and their set is denoted by $\Omega^p(E)$. With the anti-symmetrized tensor product, the set of all $E$-exterior $p$-forms becomes an anti-commutative graded algebra over $C^{\infty}(M, \mathbb{R})$. It has a degree-$(-1)$ graded anti-derivation $\iota_v : \Omega^p(E) \to \Omega^{p - 1}(E)$ defined by $\iota_v \Omega(u_1, \ldots, u_{p - 1}) := \Omega(v, u_1, \ldots, u_{p - 1})$ for all $\Omega \in \Omega^p(E), v, u_i \in \mathfrak{X}(E)$, which is called the $E$-interior product with respect to the $E$-vector field $v$. 

\begin{Definition} A $(0, 2)$-type $E$-tensor is called an $E$-metric if it is symmetric and non-degenerate\footnote{It is just a fiber-wise metric on $E$.}. 
\label{dc1}
\end{Definition}

\noindent Every $E$-metric $g$ induces an isomorphism, which will be denoted by the same symbol $g: \Gamma(E) \to \Gamma(E^*)$ given by $g(u)(v) := g(u, v)$ for all $u, v \in \Gamma(E)$. Let $\{ U_i \}$ be the connected components of $M$ and $rank(E|_{U_i}) = m_i \in \mathbb{N}$, then $g$ also induces an isomorphism $*_g: \Omega^p(E|_{U_i}) \to \Omega^{m_i - p}(E|_{U_i})$ for all $U_i$, which is called $E$-Hodge-star isomorphism, generalizing the usual Hodge-star \cite{13}. 

In order to define the $E$-version of affine connections, one needs to find a way to act on smooth functions via $E$-vector fields. If one considers anchored vector bundles, this can be done with the help of usual vector fields. An anchored vector bundle over $M$ is a doublet $(E, \rho)$ where $E$ is a vector bundle over $M$, and $\rho: E \to T(M)$ is a vector bundle morphism over $id_M$\footnote{Any vector bundle morphism $\psi: E \to F$ over $id_M$ induces a map, which will be denoted by the same letter, $\psi: \Gamma(E) \to \Gamma(F)$ defined by $\psi(u)(m) := \psi(u(m))$ for all $u \in \Gamma(E), m \in M$. The opposite is also true, so these two maps will be used interchangeably.}. 
\begin{Definition} For an anchored vector bundle $(E, \rho)$, a linear $E$-connection is defined as an $\mathbb{R}$-bilinear map $\nabla: \mathfrak{X}(E) \times \mathfrak{X}(E) \to \mathfrak{X}(E), (u, v) \mapsto \nabla_u v$ satisfying
\begin{align} \nabla_u (f v) &= \rho(u)(f) v + f \nabla_u v, \nonumber\\
\nabla_{f u} v &= f \nabla_u v,
\label{ec2}
\end{align}
for all $u, v \in \mathfrak{X}(E), f \in C^{\infty}(M, \mathbb{R})$ \cite{2}\footnote{As in the usual case, an $E$-connection can be defined on any vector bundle $F$ as a map $\nabla: \mathfrak{X}(E) \times \mathfrak{X}(F) \to \mathfrak{X}(F)$ satisfying the same properties. An $E$-connection on $E$ itself is called a linear $E$-connection.}. 
\label{dc2}
\end{Definition}

\noindent As in the usual case, on a local $E$-frame $(X_a)$, $E$-connection coefficients can be defined as
\begin{equation} \Gamma(\nabla)^a_{\ b c} := \langle e^a, \nabla_{X_b} X_c \rangle, 
\label{ec3}
\end{equation}
where the map $\langle \cdot,\cdot \rangle : \Omega^p(E) \times \mathfrak{X}(E) \to \Omega^{p - 1}(E)$ is defined by $\langle \Omega, v \rangle  := \iota_v(\Omega)$, for all $\Omega \in \Omega^p(E), v \in \mathfrak{X}(E)$\footnote{Note that $\Omega^0(E) = C^{\infty}(M, \mathbb{R})$, as in the usual case.}. A usual vector bundle connection $\tilde{\nabla}$ on $E$ induces a linear $E$-connection by $\nabla_u v := \tilde{\nabla}_{\rho(u)} v$. Hence, the existence of vector bundle connections implies that linear $E$-connections exist. 

\begin{Proposition} The set of all linear $E$-connections is an affine space $\mathfrak{C}$ modeled on $Tens^{(1, 2)}(E)$.
\label{pc1}
\end{Proposition}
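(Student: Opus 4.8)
\textit{Proof proposal.} The plan is to produce, for any fixed linear $E$-connection, a bijection between $\mathfrak{C}$ and $Tens^{(1,2)}(E)$ that is compatible with the vector-space structure on the latter; together with the nonemptiness of $\mathfrak{C}$ (already established above, since any vector bundle connection $\tilde\nabla$ on $E$ induces a linear $E$-connection via $\nabla_u v := \tilde\nabla_{\rho(u)} v$), this is precisely the data of an affine space modeled on $Tens^{(1,2)}(E)$.

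First I would take two linear $E$-connections $\nabla, \nabla' \in \mathfrak{C}$ and consider their difference $D(u,v) := \nabla'_u v - \nabla_u v$. This map is $\mathbb{R}$-bilinear by construction, and it is in fact $C^\infty(M,\mathbb{R})$-bilinear: linearity in the first argument is immediate from the second line of \eqref{ec2}, while in the second argument the anomalous term $\rho(u)(f) v$ appearing in the first line of \eqref{ec2} is common to $\nabla'_u(fv)$ and $\nabla_u(fv)$ and therefore cancels, leaving $D(u, fv) = f D(u,v)$. By the standard tensoriality criterion for sections of vector bundles — an $\mathbb{R}$-multilinear operation on sections that is in addition $C^\infty(M,\mathbb{R})$-multilinear is induced pointwise by a bundle morphism and hence defines a tensor — the map $D$ corresponds to a unique element of $Tens^{(1,2)}(E)$ through $D(\omega, u, v) := \langle \omega, D(u,v) \rangle$. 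In components with respect to a local $E$-frame $(X_a)$ this reads $D^a_{\ bc} = \Gamma(\nabla')^a_{\ bc} - \Gamma(\nabla)^a_{\ bc}$.

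Conversely, I would check that adding a $(1,2)$-type $E$-tensor to a linear $E$-connection gives again a linear $E$-connection: given $\nabla \in \mathfrak{C}$ and $D \in Tens^{(1,2)}(E)$, regarded as a $C^\infty(M,\mathbb{R})$-bilinear map $\mathfrak{X}(E) \times \mathfrak{X}(E) \to \mathfrak{X}(E)$, the map $\nabla'_u v := \nabla_u v + D(u,v)$ is $\mathbb{R}$-bilinear and satisfies both identities in \eqref{ec2}, the extra contributions from $D$ being $C^\infty(M,\mathbb{R})$-linear and hence producing no anomalous term. Thus for fixed $\nabla$ the assignment $D \mapsto \nabla + D$ sends $Tens^{(1,2)}(E)$ into $\mathfrak{C}$; by the previous paragraph it is surjective (every $\nabla' \in \mathfrak{C}$ equals $\nabla + (\nabla' - \nabla)$) and injective (if $\nabla + D = \nabla + D'$ then $D = D'$, testing on a frame). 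Writing $\nabla + D$ for this action, the affine-space axioms $(\nabla + D) + D' = \nabla + (D + D')$ and $\nabla + 0 = \nabla$ are immediate, and the difference map $(\nabla,\nabla') \mapsto \nabla' - \nabla$ is the associated vector-valued subtraction, satisfying $\nabla + (\nabla' - \nabla) = \nabla'$. This exhibits $\mathfrak{C}$ as an affine space modeled on $Tens^{(1,2)}(E)$.

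The only genuinely non-formal ingredient is the tensoriality criterion invoked in the second step; everything else is a direct manipulation of the defining identities \eqref{ec2}. I expect the main (and only minor) obstacle to be phrasing that criterion at the right level of generality — it holds because $\Gamma(E)$ is the module of sections of a vector bundle, finitely generated and projective over $C^\infty(M,\mathbb{R})$, so $C^\infty(M,\mathbb{R})$-multilinear maps are automatically local and pointwise, hence tensorial — which is the exact analogue of the classical fact that the difference of two affine connections on $T(M)$ is a $(1,2)$-type tensor.
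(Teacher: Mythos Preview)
Your proposal is correct and follows essentially the same approach as the paper: the key observation in both is that the anomalous term $\rho(u)(f)v$ in \eqref{ec2} cancels when taking the difference of two linear $E$-connections, so the difference is $C^{\infty}(M,\mathbb{R})$-bilinear and hence a $(1,2)$-type $E$-tensor. Your write-up is in fact more complete than the paper's, which stops after establishing tensoriality of the difference and leaves the converse direction and the affine-space axioms implicit; note only that the paper's convention is $\Delta(\nabla,\nabla') := \nabla - \nabla'$ rather than your $\nabla' - \nabla$.
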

\begin{proof} Due to the $E$-connection independent term $\rho(u)(f) v$ in the definition (\ref{ec2}) of a linear $E$-connection, the difference between two linear $E$-connections $\nabla, \nabla'$ is $C^{\infty}(M, \mathbb{R})$-bilinear:
\begin{equation} \Delta(\nabla, \nabla')(u, v) := \nabla_u v - \nabla'_u v,
\label{ec4}
\end{equation}
for all $u, v \in \mathfrak{X}(E)$, so that one can define the difference $E$-tensor as a $(1, 2)$-type $E$-tensor, which will be denoted by the same symbol
\begin{align} \Delta(\nabla, \nabla')(\Omega, u, v) &:= \langle \Omega, \Delta(\nabla, \nabla')(u, v) \rangle, \nonumber\\
\Delta(\nabla, \nabla')^a_{\ b c} &= \Gamma(\nabla)^a_{\ b c} - \Gamma(\nabla')^a_{\ b c}.
\label{ec5}
\end{align}
\end{proof}

\noindent Similar to the usual case (\ref{eb6}), the action of a linear $E$-connection can be also extended to all $E$-tensors by the Leibniz rule, and it induces a map $\nabla: Tens^{(q, r)}(E) \to Tens^{(q, r + 1)}(E)$, which is called the $E$-exterior covariant derivative . The only difference is that the second line in the equation (\ref{eb6}) should start with $\rho(u)$. 

\begin{Definition} The $E$-non-metricity tensor\footnote{A better name would be $E$-non-metricity $E$-tensor, but we will omit the second $E$ for aesthetic reasons.} corresponding to a linear $E$-connection $\nabla$ and an $E$-metric $g$ is defined as the $(0, 3)$-type $E$-tensor 
\begin{equation}
Q(\nabla, g) := \nabla g.
\label{ec6}
\end{equation}
\label{dc3}
\end{Definition}
\noindent If $Q(\nabla, g) = 0$, then $\nabla$ is called $E$-metric-$g$-compatible. Every vector bundle admits usual fiber-wise metric-compatible vector bundle connections, so the corresponding induced ones define $E$-metric-compatible linear $E$-connections. On a local $E$-frame $(X_a)$, components of the $E$-non-metricity tensor read
\begin{equation} Q(\nabla, g)_{a b c} = \rho(X_a)(g_{b c}) - \Gamma(\nabla)^d_{\ a b} g_{d c} - \Gamma(\nabla)^d_{\ a c}g_{b d}.
\label{ec7}
\end{equation}

For the constructions such as torsion and curvature operators, one needs to introduce a bracket on $E$-vector fields. A bracket on a vector bundle $E$ is an $\mathbb{R}$-bilinear map $[\cdot,\cdot]_E: \mathfrak{X}(E) \times \mathfrak{X}(E) \to \mathfrak{X}(E)$. For a vector bundle $E$ with a bracket $[\cdot,\cdot]_E$, $E$-anholonomy coefficients $\{ \gamma^a_{\ b c} \}$ can be defined identical to the usual case: $\gamma^a_{\ b c} := \langle e^a, [X_b, X_c]_E \rangle$ over a local $E$-frame $(X_a)$. For an anchored vector bundle $(E, \rho)$ with a bracket $[\cdot,\cdot]_E$, the pseudo-$E$-torsion map $T^{(0)}(\nabla): \mathfrak{X}(E) \times \mathfrak{X}(E) \to \mathfrak{X}(E)$ and the pseudo-$E$-curvature map $R^{(0)}(\nabla): \mathfrak{X}(E) \times \mathfrak{X}(E) \times \mathfrak{X}(E) \to \mathfrak{X}(E)$ of a linear $E$-connection $\nabla$ are defined by
\begin{equation} T^{(0)}(\nabla)(u, v) := \nabla_u v - \nabla_v u - [u, v]_E,
\label{ec8}
\end{equation}
\begin{equation} R^{(0)}(\nabla)(u, v, w) := \nabla_u \nabla_v w - \nabla_v \nabla_u w - \nabla_{[u, v]_E} w,
\label{ec9}
\end{equation}
for all $u, v, w \in \mathfrak{X}(E)$. 

In order to define $E$-tensorial objects from these maps, $C^{\infty}(M, \mathbb{R})$-linearity in each component is required. Hence, one needs to know how the $C^{\infty}(M, \mathbb{R})$-module structure is affected by the bracket. 
\begin{Definition} An (right) almost-Leibniz algebroid over $M$ is a triplet $(E, \rho, [\cdot,\cdot]_E)$ where $(E, \rho)$ is an anchored real vector bundle over $M$, $[\cdot,\cdot]_E$ is a bracket on $E$ satisfying the right-Leibniz rule
\begin{equation} [u, f v]_E = \rho(u)(f) v + f [u, v]_E,
\label{ec10}
\end{equation}
for all $u, v \in \mathfrak{X}(E), f \in C^{\infty}(M, \mathbb{R})$.
\label{dc4}
\end{Definition}

\noindent In this case, one can define ``Leibniz derivative''\footnote{Some authors call it Dorfman derivative or $E$-Lie derivative.} of an $E$-tensor completely analogous to the usual Lie bracket and Lie derivative relation. One should also deal with the first entry of the bracket.

\begin{Definition} A local almost-Leibniz algebroid is a quadruplet $(E, \rho, [\cdot,\cdot]_E, L)$ where $(E, \rho, [\cdot,\cdot]_E)$ is an almost-Leibniz algebroid, and $L: \Omega^1(E) \times \mathfrak{X}(E) \times \mathfrak{X}(E) \to \mathfrak{X}(E)$ is a $C^{\infty}(M, \mathbb{R})$-multilinear map, called the locality operator, satisfying the left-Leibniz rule \cite{14}, \cite{9}
\begin{equation} [f u, v]_E = - \rho(v)(f) u + f [u, v]_E + L(Df, u, v),
\label{ec11}
\end{equation}
for all $u, v \in \mathfrak{X}(E), f \in C^{\infty}(M, \mathbb{R})$, where $D: C^{\infty}(M, \mathbb{R}) \to \Omega^1(E)$ is the coboundary map defined by $(Df)(u) := \rho(u)(f)$ \cite{15}\footnote{Such a bracket will be called a local almost-Leibniz bracket, and this naming will be done for other type of brackets too.}.
\label{dc5}
\end{Definition}

\noindent Note that $D$ is related to usual exterior derivative $d: C^{\infty}(M, \mathbb{R}) \to \Omega^1(M)$ by $D = \rho^* \circ d$, where $\rho^*: T^*(M) \to E^*$ is the coanchor defined as the transpose of $\rho$, i. e. $\rho^*(\omega)(u) := \rho(u)(\omega)$, for all $\omega \in \Omega^1(M), u \in \mathfrak{X}(E)$.

Since $L$ is $C^{\infty}(M, \mathbb{R})$-multilinear, the $(2, 2)$-type locality $E$-tensor can be defined as
\begin{equation} L(\Omega, \Upsilon, u, v) := \langle \Omega, L(\Upsilon, u, v) \rangle,
\label{ec12}
\end{equation}
for all $\Omega, \Upsilon \in \Omega^1(E), u, v \in \mathfrak{X}(E)$. Note that in the equation (\ref{ec11}), one only deals with the $E$-exterior 1-forms which are in the image of the coboundary map $D$. This leads us to the following definition:

\begin{Definition} On an anchored vector bundle $(E, \rho)$, any two $C^{\infty}(M, \mathbb{R})$-multilinear maps $L, \tilde{L}: \Omega^1(E) \times \mathfrak{X}(E) \times \mathfrak{X}(E) \to \mathfrak{X}(E)$ will be called locally equivalent if
\begin{equation} L(Df, u, v) = \tilde{L}(Df, u, v), \nonumber
\end{equation}
for all $f \in C^{\infty}(M, \mathbb{R})$, $u, v \in \mathfrak{X}(E)$. Clearly, this is an equivalence relation, and an equivalence class $[\tilde{L}]$ will be called a locality structure on $(E, \rho)$.
\label{dc6}
\end{Definition}

\noindent Any two maps from the same locality structure define the same left-Leibniz rule, trivially. Hence, we will denote a local almost-Leibniz algebroid also by $(E, \rho, [\cdot,\cdot]_E, [L])$. Locality structures allow one to relate linear $E$-connections and local almost-Leibniz algebroid brackets. 

\begin{Proposition} Let $[\tilde{L}]$ be a locality structure on an anchored vector bundle $(E, \rho)$. Then the set of all local almost-Leibniz brackets corresponding to that locality structure defines an affine space $^{[\tilde{L}]} \mathfrak{B}$ modeled on $Tens^{(1, 2)}(E)$, and there is an affine map $^N \mathfrak{T}: \mathfrak{C} \to \ ^{[\tilde{L}]} \mathfrak{B}, \nabla \mapsto \ ^N [\cdot,\cdot]_{\nabla}$ defined by
\begin{equation} ^N [u, v]_{\nabla} := \nabla_u v - \nabla_v u + N(\nabla, L)(u, v),
\label{ec13}
\end{equation}
for any $\mathbb{R}$-bilinear map $N(\nabla, L): \mathfrak{X}(E) \times \mathfrak{X}(E) \to \mathfrak{X}(E)$ satisfying for any $L \in [\tilde{L}]$,
\begin{align} N(\nabla, L)(f u, v) &= L(Df, u, v) + f N(\nabla, L)(u, v), \nonumber\\
N(\nabla, L)(u, f v) &= f N(\nabla, L)(u, v),
\label{ec14}
\end{align}
for all $u, v \in \mathfrak{X}(E), f \in C^{\infty}(M, \mathbb{R})$.
\label{pc2}
\end{Proposition}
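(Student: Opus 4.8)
The plan is to split the claim into three parts: (a) $^{[\tilde{L}]}\mathfrak{B}$ is an affine space modeled on $Tens^{(1,2)}(E)$; (b) for any admissible $N(\nabla,L)$ the formula (\ref{ec13}) actually lands in $^{[\tilde{L}]}\mathfrak{B}$, i.e.\ $^N\mathfrak{T}$ is well defined, and at least one such $N$ exists; (c) $^N\mathfrak{T}$ is affine. For (a) I would first observe that the locality structure is an invariant of the bracket: if a bracket obeys (\ref{ec11}) with two $C^\infty(M,\mathbb{R})$-multilinear maps $L,L'$, then $L(Df,u,v)=L'(Df,u,v)$ for all $f,u,v$, so $L$ and $L'$ are locally equivalent; hence ``the local almost-Leibniz brackets corresponding to $[\tilde{L}]$'' form an unambiguous set, and every such bracket obeys (\ref{ec11}) with $\tilde{L}$ itself (all representatives agree on coboundaries). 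Given two of them, $[\cdot,\cdot]_E$ and $[\cdot,\cdot]'_E$, both obey (\ref{ec10}) and both obey (\ref{ec11}) with $\tilde{L}$; subtracting, the anomalies $\rho(u)(f)v$, $-\rho(v)(f)u$ and $\tilde{L}(Df,u,v)$ cancel, so $B(u,v):=[u,v]_E-[u,v]'_E$ is $C^\infty(M,\mathbb{R})$-bilinear, i.e.\ a $(1,2)$-type $E$-tensor. Conversely, adding any $B\in Tens^{(1,2)}(E)$ (read as a $C^\infty(M,\mathbb{R})$-bilinear map via $\langle\cdot,\cdot\rangle$) to a bracket in $^{[\tilde{L}]}\mathfrak{B}$ gives an $\mathbb{R}$-bilinear map still obeying (\ref{ec10}) and (\ref{ec11}) with the same $\tilde{L}$. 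Thus $Tens^{(1,2)}(E)$ acts freely and transitively on $^{[\tilde{L}]}\mathfrak{B}$; non-emptiness will come for free from (b) since $\mathfrak{C}\neq\emptyset$.

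For (b) I fix $\nabla\in\mathfrak{C}$, a representative $L\in[\tilde{L}]$, and an $\mathbb{R}$-bilinear $N(\nabla,L)$ satisfying (\ref{ec14}), and set $[\cdot,\cdot]:={}^N[\cdot,\cdot]_\nabla$ as in (\ref{ec13}). It is $\mathbb{R}$-bilinear since $\nabla$ and $N(\nabla,L)$ are. Substituting $fv$ in the second slot and using the two defining relations (\ref{ec2}) of a linear $E$-connection (with $\nabla_{fv}u=f\nabla_v u$) together with the second line of (\ref{ec14}) reproduces exactly the right-Leibniz rule (\ref{ec10}); substituting $fu$ in the first slot and using (\ref{ec2}) and the first line of (\ref{ec14}) reproduces the left-Leibniz rule (\ref{ec11}) with precisely this $L$. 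Hence $[\cdot,\cdot]\in{}^{[\tilde{L}]}\mathfrak{B}$, so $^N\mathfrak{T}$ is well defined and $^{[\tilde{L}]}\mathfrak{B}\neq\emptyset$.

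The one step that is not pure Leibniz-rule bookkeeping — and the point I expect to be the main obstacle — is showing that a map $N(\nabla,L)$ obeying (\ref{ec14}) exists at all. I would build it locally and patch: on a chart $U$ with local $E$-frame $(X_a)$ and coframe $(e^a)$, set $N_U(u,v):=L\big(D(\langle e^a,u\rangle),X_a,v\big)$ (sum over $a$). Using the Leibniz property $D(fg)=fDg+gDf$ of the coboundary map and the $C^\infty(M,\mathbb{R})$-multilinearity of $L$, one checks directly that $N_U$ is $\mathbb{R}$-bilinear and satisfies (\ref{ec14}) over $U$; note the inhomogeneous term on the right of (\ref{ec14}) involves $L$ only through $L(D\,\cdot\,,\cdot,\cdot)$, hence is the same for every representative of $[\tilde{L}]$. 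Choosing an open cover $\{U_\alpha\}$ by such charts with a subordinate partition of unity $\{\chi_\alpha\}$ and setting $N:=\sum_\alpha\chi_\alpha N_{U_\alpha}$ (each summand extended by zero), the relations (\ref{ec14}) survive because $\sum_\alpha\chi_\alpha=1$ reassembles the global inhomogeneous term $L(Df,u,v)$; the given $\nabla$ is never used, so $N$ may in fact be taken to depend on $[\tilde{L}]$ alone.

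Finally, for (c), given $\nabla,\nabla'\in\mathfrak{C}$ the $\nabla$-independent term $N(\nabla,L)$ cancels in the difference and, by (\ref{ec4})--(\ref{ec5}),
\[
{}^N[u,v]_\nabla-{}^N[u,v]_{\nabla'}=\Delta(\nabla,\nabla')(u,v)-\Delta(\nabla,\nabla')(v,u).
\]
Equivalently, writing $\nabla'=\nabla+S$ with $S\in Tens^{(1,2)}(E)$ in the sense of Proposition \ref{pc1}, one obtains ${}^N\mathfrak{T}(\nabla+S)={}^N\mathfrak{T}(\nabla)+\mathcal{A}(S)$, where $\mathcal{A}:Tens^{(1,2)}(E)\to Tens^{(1,2)}(E)$, $\mathcal{A}(S)(\Omega,u,v):=S(\Omega,u,v)-S(\Omega,v,u)$, is $C^\infty(M,\mathbb{R})$-linear and independent of the base point. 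This is exactly the statement that $^N\mathfrak{T}$ is an affine map, with linear part $\mathcal{A}$ (twice the antisymmetrisation in the two lower slots), which completes the proof.
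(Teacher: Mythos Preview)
Your argument is correct and follows the same skeleton as the paper's own proof: show that the difference of two brackets in $^{[\tilde{L}]}\mathfrak{B}$ is $C^\infty(M,\mathbb{R})$-bilinear, verify that (\ref{ec13}) satisfies both Leibniz rules (\ref{ec10}) and (\ref{ec11}), and then argue affineness. The paper's version is in fact terser than yours: it writes out only the left-Leibniz check for $^N\mathfrak{T}(\nabla)(fu,v)$, dismisses the right-Leibniz rule and the affineness with ``similarly'' and ``one can directly check'', and does not address the existence of an admissible $N$ at all --- that is deferred to the next Corollary, which takes $N(\nabla,L)(u,v)=L(e^a,\nabla_{X_a}u,v)$ rather than your $\nabla$-independent choice $L\big(D\langle e^a,u\rangle,X_a,v\big)$. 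Your proof is therefore more complete: you supply the converse direction for the affine-space claim, an explicit construction of $N$ (with a clean partition-of-unity globalisation), and the explicit linear part $\mathcal{A}$ of $^N\mathfrak{T}$. One small caveat worth flagging: your step (c) relies on $N(\nabla,L)$ being independent of $\nabla$, which holds for the $N$ you build but not for the paper's later $\nabla$-dependent choice; in that generality one needs $\nabla\mapsto N(\nabla,L)$ itself to be affine for the conclusion to go through, a point the paper's proof also glosses over.
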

\begin{proof} Let $[\cdot,\cdot]_E$ and $[\cdot,\cdot]'_E$ be two local almost-Leibniz brackets with the locality operator $L, L'$ which are locally equivalent. Then as $L$ and $L'$ are locally equivalent
\begin{align} [f u, v]_E - [f u, v]'_E &= \left\{ - \rho(v)(f) u + f [u, v]_E + L(Df, u, v) \right\} \nonumber\\
& \qquad - \left\{ - \rho(v)(f) u + f [u, v]'_E + L'(Df, u, v) \right\} \nonumber\\
&= f \left\{ [u, v]_E - [u, v]'_E \right\}, \nonumber
\end{align}
for all $f \in C^{\infty}(M, \mathbb{R}), u, v \in \mathfrak{X}(E)$. Similarly, it can be shown that the difference is $C^{\infty}(M, \mathbb{R})$-linear in $v$. Hence, $^{[\tilde{L}]} \mathfrak{B}$ is an affine space modeled on $Tens^{(1, 2)}(E)$. Now, let $\nabla$ be a linear $E$-connection
\begin{align} ^N \mathfrak{T}(\nabla)(f u, v) &= \nabla_{f u} v - \nabla_v (f u) + N(\nabla, L)(f u, v) \nonumber\\
&= f \nabla_u v - \left[ \rho(v)(f) u + f \nabla_v u \right] + L(Df, u, v) + f N(\nabla, L)(u, v) \nonumber\\
&= - \rho(v)(f) u + f \ ^N \mathfrak{T}(\nabla)(u, v) + L(Df, u, v), \nonumber
\end{align}
which is the same as (\ref{ec11}) by the assumption (\ref{ec14}) on $N(\nabla, L)$. Similarly, it can be shown that it satisfies (\ref{ec10}) for $^N \mathfrak{T}(\nabla)(u, f v)$. Hence $^N \mathfrak{T}(\nabla)$ is a map from $\mathfrak{C}$ to $^{[\tilde{L}]} \mathfrak{B}$. One can directly check that it is affine by using similar arguments.
\end{proof}

\noindent Similarly, one can explicitly check that $T^{(0)}(\nabla)$ is not $C^{\infty}(M, \mathbb{R})$-linear in the first entry. As in the proposition (\ref{pc2}), locality structures allow one to modify pseudo-$E$-torsion map to have $C^{\infty}(M, \mathbb{R})$-bilinearity.

\begin{Corollary} Given a locality structure $[\tilde{L}]$ on a local almost-Leibniz algebroid, then for any $N(\nabla, L)$ satisfying (\ref{ec14})
\begin{equation} ^N \mathcal{T}(\nabla)(u, v) := T^{(0)}(\nabla)(u, v) + N(\nabla, L)(u, v)
\label{ec15}
\end{equation}
is $C^{\infty}(M, \mathbb{R})$-bilinear for any $L \in [\tilde{L}]$.
\label{cc1}
\end{Corollary}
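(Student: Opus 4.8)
The plan is to verify $C^{\infty}(M,\mathbb{R})$-homogeneity of $^N\mathcal{T}(\nabla)$ separately in each of its two slots by a direct term-by-term substitution of the rules at our disposal: the right- and left-Leibniz rules (\ref{ec10})--(\ref{ec11}) for $[\cdot,\cdot]_E$, the two connection axioms (\ref{ec2}), and the two homogeneity properties (\ref{ec14}) of $N(\nabla,L)$. Since $\nabla$, $[\cdot,\cdot]_E$ and $N(\nabla,L)$ are all $\mathbb{R}$-bilinear, $\mathbb{R}$-bilinearity of $^N\mathcal{T}(\nabla)$ is immediate, so the only content is the two $f$-rescaling identities.

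For the first slot I would expand $^N\mathcal{T}(\nabla)(fu,v)=\nabla_{fu}v-\nabla_v(fu)-[fu,v]_E+N(\nabla,L)(fu,v)$ and substitute, in order: $\nabla_{fu}v=f\nabla_u v$; $\nabla_v(fu)=\rho(v)(f)u+f\nabla_v u$; $[fu,v]_E=-\rho(v)(f)u+f[u,v]_E+L(Df,u,v)$; and $N(\nabla,L)(fu,v)=L(Df,u,v)+fN(\nabla,L)(u,v)$. Collecting terms, the two occurrences of $\rho(v)(f)u$ cancel and the two occurrences of $L(Df,u,v)$ cancel, leaving precisely $f\,{}^N\mathcal{T}(\nabla)(u,v)$. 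The whole point of the correction $N(\nabla,L)$ — and the reason its defining property (\ref{ec14}) carries exactly the $L(Df,u,v)$ obstruction produced by the left-Leibniz rule — is to force this cancellation; this is the same mechanism already exploited in Proposition \ref{pc2}.

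For the second slot I would expand $^N\mathcal{T}(\nabla)(u,fv)=\nabla_u(fv)-\nabla_{fv}u-[u,fv]_E+N(\nabla,L)(u,fv)$ and substitute $\nabla_u(fv)=\rho(u)(f)v+f\nabla_u v$, $\nabla_{fv}u=f\nabla_v u$, $[u,fv]_E=\rho(u)(f)v+f[u,v]_E$, and $N(\nabla,L)(u,fv)=fN(\nabla,L)(u,v)$. Here the two $\rho(u)(f)v$ terms cancel, no $L$-term appears, and the result is again $f\,{}^N\mathcal{T}(\nabla)(u,v)$, as desired.

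I do not expect any genuine obstacle: the statement is a bookkeeping verification following the template of Proposition \ref{pc2}. The one point worth flagging is that the correction term is not canonical — $N(\nabla,L)$ is pinned down by (\ref{ec14}) only up to an element of $Tens^{(1,2)}(E)$, and it also depends on the chosen representative $L\in[\tilde L]$ — so the resulting $(1,2)$-type $E$-tensor $^N\mathcal{T}(\nabla)$ is not unique; what the corollary asserts is merely that \emph{every} admissible choice of $L$ and $N$ produces a bona fide $E$-tensor, an ambiguity to keep in mind when subsequently extracting an $E$-torsion tensor.
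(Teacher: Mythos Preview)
Your proof is correct and is essentially the same argument as the paper's, only spelled out: the paper's proof is the single sentence ``It is actually the same thing as Proposition~\ref{pc2} stated differently,'' since $^N\mathcal{T}(\nabla)={}^N[\cdot,\cdot]_\nabla-[\cdot,\cdot]_E$ is the difference of two brackets in the affine space $^{[\tilde L]}\mathfrak{B}$ and hence a $(1,2)$-type $E$-tensor. Your direct slot-by-slot verification is precisely the computation that underlies that citation, and your remark about non-uniqueness of $N$ is apt though not part of the formal statement.
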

\begin{proof} It is actually same thing as the proposition (\ref{pc2}) stated differently.
\end{proof}

\begin{Corollary} Given a locality structure $[\tilde{L}]$ on a local almost-Leibniz algebroid, then 
\begin{equation} ^L T(\nabla)(u, v) := T^{(0)}(\nabla)(u, v) + L(e^a, \nabla_{X_a} u, v)
\label{ec16}
\end{equation}
is $C^{\infty}(M, \mathbb{R})$-bilinear for any $L \in [\tilde{L}]$, where $(X_a)$ is a local $E$-frame \cite{9}.
\label{cc2}
\end{Corollary}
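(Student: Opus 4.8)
The plan is to exhibit $^L T(\nabla)$ as a special case of the construction in Corollary \ref{cc1}. Concretely, I would set, on an arbitrary local $E$-frame $(X_a)$ with dual coframe $(e^a)$,
\begin{equation}
N(\nabla, L)(u, v) := L(e^a, \nabla_{X_a} u, v), \nonumber
\end{equation}
so that $^L T(\nabla) = T^{(0)}(\nabla) + N(\nabla, L)$, and then verify that this particular $N(\nabla, L)$ satisfies the two defining relations (\ref{ec14}). Once that is done, Corollary \ref{cc1} applies verbatim and delivers the claimed $C^{\infty}(M, \mathbb{R})$-bilinearity for any $L \in [\tilde{L}]$.

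The linearity in the second slot is immediate: since $L$ is $C^{\infty}(M, \mathbb{R})$-multilinear and the factor $\nabla_{X_a} u$ does not involve $v$, we get $N(\nabla, L)(u, f v) = L(e^a, \nabla_{X_a} u, f v) = f\, L(e^a, \nabla_{X_a} u, v) = f\, N(\nabla, L)(u, v)$. For the first slot I would expand using the first Leibniz rule (\ref{ec2}) of the linear $E$-connection, $\nabla_{X_a}(f u) = \rho(X_a)(f)\, u + f\, \nabla_{X_a} u$, and then use multilinearity of $L$ to obtain $N(\nabla, L)(f u, v) = \rho(X_a)(f)\, L(e^a, u, v) + f\, L(e^a, \nabla_{X_a} u, v)$. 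The key observation is the coframe identity $D f = \rho(X_a)(f)\, e^a$, which holds because both sides evaluate to $\rho(X_b)(f)$ on each $X_b$; combined with $C^{\infty}(M, \mathbb{R})$-linearity of $L$ in the first argument, this gives $\rho(X_a)(f)\, L(e^a, u, v) = L(D f, u, v)$, hence $N(\nabla, L)(f u, v) = L(D f, u, v) + f\, N(\nabla, L)(u, v)$, which is exactly the first line of (\ref{ec14}).

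Before invoking Corollary \ref{cc1} I would also note, for the sake of well-definedness, that $N(\nabla, L)$ is independent of the chosen local $E$-frame: under a change $X'_a = A^b{}_a X_b$ one has $e'^a = (A^{-1})^a{}_b e^b$ and $\nabla_{X'_a} u = A^b{}_a \nabla_{X_b} u$, so the factors $A$ and $A^{-1}$ cancel inside $L$ by its $C^{\infty}(M, \mathbb{R})$-bilinearity in the relevant slots. This makes $^L T(\nabla)$ a genuinely globally defined $E$-bilinear map rather than a frame-dependent expression.

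I do not expect a real obstacle here: the statement is essentially Corollary \ref{cc1} read off with the explicit choice $N(\nabla, L)(u, v) = L(e^a, \nabla_{X_a} u, v)$. The only points demanding a moment's care are the coframe identity $D f = \rho(X_a)(f)\, e^a$ (which is what lets the $\rho(X_a)(f)$ term reassemble into $L(D f, u, v)$) and the brief frame-independence check; everything else is a routine application of the Leibniz rule and the multilinearity of $L$.
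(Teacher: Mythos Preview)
Your proposal is correct and follows essentially the same approach as the paper: both verify that $N(\nabla, L)(u, v) = L(e^a, \nabla_{X_a} u, v)$ satisfies the conditions (\ref{ec14}) by using the coframe identity $Df = \rho(X_a)(f)\, e^a$, and then invoke Corollary \ref{cc1}. Your version is more detailed (the paper leaves the verification as ``one can explicitly check'') and adds a frame-independence remark that the paper omits.
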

\begin{proof} By the fact that $\rho(X_a)(f) e^a = (Df)(X_a) e^a = Df$ for all $f \in C^{\infty}(M, \mathbb{R})$, one can explicitly check that for any $L \in [\tilde{L}], L(e^a, \nabla_{X_a} u, v)$ satisfies the necessary properties in (\ref{ec14}), and this makes $^L T(\nabla)(u, v)$ in the equation (\ref{ec16}) $C^{\infty}(M, \mathbb{R})$-bilinear.
\end{proof}

\noindent This is actually valid for $L(e^a, \nabla'_{X_a} u, v)$ for any linear $E$-connection $\nabla'$, but using the same linear $E$-connection $\nabla$ is more natural. One should note that the extra term's first entry is $e^a$, which can be outside of the image of the coboundary operator. Moreover, with this notation $T^{(0)}(\nabla) = \ ^0 T(\nabla)$.
\begin{Definition} Given a fixed locality structure representative $L$ on a local almost-Leibniz algebroid, the operator $^L T(\nabla): \mathfrak{X}(E) \times \mathfrak{X}(E) \to \mathfrak{X}(E)$ defined by (\ref{ec16}) is called the $E$-torsion operator of the linear $E$-connection $\nabla$, and the $E$-torsion tensor is defined as the $(1, 2)$-type $E$-tensor
\begin{equation} ^L T(\nabla)(\Omega, u, v) := \langle \Omega, \ ^L T(\nabla)(u, v) \rangle,
\label{ec17}
\end{equation}
for all $\Omega \in \Omega^1(E), u, v \in \mathfrak{X}(E)$. 
\label{dc7}
\end{Definition}

\noindent If $^L T(\nabla)(u, v) = 0$, for all $u, v \in \mathfrak{X}(E)$, then $\nabla$ is called $E$-torsion-free. On a local $E$-frame $(X_a)$, components of the $E$-torsion tensor read
\begin{equation} ^L T(\nabla)^a_{\ b c} = \Gamma(\nabla)^a_{\ b c} - \Gamma(\nabla)^a_{\ c b} - \gamma^a_{\ b c} + \Gamma(\nabla)^e_{\ d b} L^{a d}_{\ \ e c}.
\label{ec18}
\end{equation}

Modification of the pseudo-$E$-curvature map to have $C^{\infty}(M, \mathbb{R})$-multilinearity requires more assumptions. First, $\rho$ should be a morphism of anchored vector bundles with a bracket, i. e.
\begin{equation} \rho([u, v]_E) = [\rho(u), \rho(v)],
\label{ec19}
\end{equation}
for all $u, v \in \mathfrak{X}(E)$\footnote{And also, $id_{T(M)} \circ \rho = \rho$, which is trivially satisfied.}. 
\begin{Definition} An almost-Leibniz algebroid whose anchor and bracket satisfy (\ref{ec19}) is called a pre-Leibniz algebroid \cite{16}.
\label{dc8}
\end{Definition}

A sufficient condition for an almost-Leibniz algebroid $(E, \rho, [\cdot,\cdot]_E$) to be a pre-Leibniz algebroid is that $(\mathfrak{X}(E), [\cdot,\cdot]_E)$ forms a (right) Leibniz algebra, i. e. it should satisfy the (right) Leibniz identity
\begin{equation} [u, [v, w]_E]_E = [[u, v]_E, w]_E + [v, [u, w]_E]_E,
\label{ec20}
\end{equation}
for all $u, v, w \in \mathfrak{X}(E)$. In this case, $(E, \rho, [\cdot,\cdot]_E)$ is called a (right) Leibniz algebroid. On the other hand, this condition is not necessary; for example, every almost-Leibniz algebroid with $\rho = 0$ is a pre-Leibniz algebroid, trivially. Every Leibniz algebroid satisfies the following property due to (\ref{ec20})
\begin{equation} [[u, v]_E + [v, u]_E, w]_E = 0,
\label{ec21}
\end{equation}
for all $u, v, w \in \mathfrak{X}(E)$. Moreover, for a local Leibniz algebroid $E$ over $M$, there is a natural Leibniz algebroid structure on $E|_U$ for every open $U \subset M$ \cite{17}, so that the name, locality, is well-chosen.

\begin{Proposition} Given a locality structure $[\tilde{L}]$ on a local pre-Leibniz algebroid, the following map is $C^{\infty}(M, \mathbb{R})$-multilinear:
\begin{equation} ^M \mathcal{R}(\nabla)(u, v, w) := \nabla_u \nabla_v w - \nabla_v \nabla_u w - \nabla_{[u, v]_E} w + M(\nabla, L)(u, v, w),
\label{ec22}
\end{equation}
for any $\mathbb{R}$-multilinear map $M(\nabla, L): \mathfrak{X}(E) \times \mathfrak{X}(E) \times \mathfrak{X}(E) \to \mathfrak{X}(E)$ satisfying for any $L \in [\tilde{L}]$
\begin{align} M(\nabla, L)(f u, v, w) &= \nabla_{L(Df, u, v)} w + f M(\nabla, L)(u, v, w), \nonumber\\
M(\nabla, L)(u, f v, w) &= M(\nabla, L)(u, v, f w) = f M(\nabla, L)(u, v, w),
\label{ec23}
\end{align}
for all $u, v, w \in \mathfrak{X}(E), f \in C^{\infty}(M, \mathbb{R})$.
\label{pc3}
\end{Proposition}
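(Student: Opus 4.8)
The plan is to check $C^\infty(M,\mathbb{R})$-linearity of $^M\mathcal{R}(\nabla)$ one slot at a time. Since $R^{(0)}(\nabla)$ from (\ref{ec9}) is visibly $\mathbb{R}$-multilinear and $M(\nabla,L)$ is assumed $\mathbb{R}$-multilinear, additivity in each entry is automatic, so it suffices to compute what happens when a single entry is scaled by $f\in C^\infty(M,\mathbb{R})$. I would organize the three cases in order of increasing difficulty: the middle slot $v$, then the last slot $w$, then the first slot $u$.

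For the $v$-slot I would expand $R^{(0)}(\nabla)(u,fv,w)$ using tensoriality of $\nabla$ in its first argument ($\nabla_{fu}=f\nabla_u$ from (\ref{ec2})), the Leibniz rule of (\ref{ec2}) for $\nabla_u\nabla_{fv}w=\rho(u)(f)\nabla_vw+f\nabla_u\nabla_vw$, and the right-Leibniz rule (\ref{ec10}) to write $[u,fv]_E=\rho(u)(f)v+f[u,v]_E$; the two stray $\rho(u)(f)\nabla_vw$ terms cancel and one is left with $R^{(0)}(\nabla)(u,fv,w)=fR^{(0)}(\nabla)(u,v,w)$, which together with the second line of (\ref{ec23}) gives linearity in $v$. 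For the $w$-slot I would apply (\ref{ec2}) twice to $\nabla_u\nabla_v(fw)$ and $\nabla_v\nabla_u(fw)$; their difference equals $[\rho(u),\rho(v)](f)\,w+f(\nabla_u\nabla_vw-\nabla_v\nabla_uw)$, while $\nabla_{[u,v]_E}(fw)=\rho([u,v]_E)(f)\,w+f\nabla_{[u,v]_E}w$. This is precisely the step where the pre-Leibniz condition (\ref{ec19}), $\rho([u,v]_E)=[\rho(u),\rho(v)]$, is indispensable: it forces the $w$-terms to cancel, yielding $R^{(0)}(\nabla)(u,v,fw)=fR^{(0)}(\nabla)(u,v,w)$, and the second line of (\ref{ec23}) disposes of the $M$-term. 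This is the conceptual reason why the extra hypothesis (\ref{ec19}) — absent in the torsion story — must be imposed here.

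The $u$-slot is the one I expect to be the main obstacle, and it is where the correction $M(\nabla,L)$ genuinely earns its keep. Expanding $R^{(0)}(\nabla)(fu,v,w)$, the term $\nabla_{fu}\nabla_vw=f\nabla_u\nabla_vw$ and $\nabla_v\nabla_{fu}w=\rho(v)(f)\nabla_uw+f\nabla_v\nabla_uw$ are routine, but rewriting $[fu,v]_E$ now requires the \emph{left}-Leibniz rule (\ref{ec11}), giving $[fu,v]_E=-\rho(v)(f)u+f[u,v]_E+L(Df,u,v)$, hence $\nabla_{[fu,v]_E}w=-\rho(v)(f)\nabla_uw+f\nabla_{[u,v]_E}w+\nabla_{L(Df,u,v)}w$. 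After the $\rho(v)(f)\nabla_uw$ terms cancel one is left with the anomaly $R^{(0)}(\nabla)(fu,v,w)=fR^{(0)}(\nabla)(u,v,w)-\nabla_{L(Df,u,v)}w$. The first line of (\ref{ec23}), $M(\nabla,L)(fu,v,w)=\nabla_{L(Df,u,v)}w+fM(\nabla,L)(u,v,w)$, is designed exactly to absorb this anomaly, so $^M\mathcal{R}(\nabla)(fu,v,w)=f\,{}^M\mathcal{R}(\nabla)(u,v,w)$.

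Finally I would remark that everything is consistent with working modulo a locality structure rather than a fixed representative: only $L(Df,\cdot,\cdot)$ ever enters the computations (in (\ref{ec11}) and in the hypotheses (\ref{ec23})), so the statement and the conclusion depend only on the class $[\tilde{L}]$, which is why the proposition is phrased for "any $L\in[\tilde{L}]$." Assembling the three slot computations then establishes that $^M\mathcal{R}(\nabla)$ of (\ref{ec22}) is $C^\infty(M,\mathbb{R})$-multilinear.
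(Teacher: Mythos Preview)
Your proof is correct and follows the same slot-by-slot strategy as the paper. The paper only writes out the $w$-slot computation explicitly (to highlight where the pre-Leibniz condition (\ref{ec19}) enters) and dismisses the $u$ and $v$ slots as ``similar for the previous propositions''; you carry out all three in detail, which is a welcome improvement in clarity but not a different method.
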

\begin{proof} $C^{\infty}(M, \mathbb{R})$-linearity conditions for $u$ and $v$ are similar for the previous propositions. 
\begin{align} ^M \mathcal{R}(\nabla)(u, v, f w) &= \nabla_u \nabla_v (f w) - \nabla_v \nabla_u (f w) - \nabla_{[u, v]_E} (f w) + M(u, v, f w)\nonumber\\
&= \nabla_u \left[ \rho(v)(f) w + f \nabla_v w \right] - \nabla_v \left[ \rho(u)(f) w + f \nabla_u w \right] \nonumber\\
& \qquad - \rho([u, v]_E)(f) w + f \nabla_{[u, v]_E} w + f M(u, v, w) \nonumber\\
&= \left\{ \rho(u)(\rho(v)(f)) - \rho(v)(\rho(u))(f) - \rho([u, v]_E)(f) \right\} w \nonumber\\
& \qquad + f \ ^M \mathcal{R}(\nabla)(u, v, w)  + \rho(u)(f) \nabla_v w - \rho(u)(f) \nabla_v w \nonumber\\
& \qquad +  \rho(v)(f) \nabla_u w - \rho(v)(f) \nabla_u w \nonumber\\
&= \left\{ [\rho(u), \rho(v)](f) - \rho([u, v]_E)(f) \right\} w + f \ ^M \mathcal{R}(\nabla)(u, v, w) \nonumber\\
&= f \ ^M \mathcal{R}(\nabla)(u, v, w) \nonumber
\end{align}
for all $f \in C^{\infty}(M, \mathbb{R}), u, v, w \in \mathfrak{X}(E)$, where the last step follows from the assumption (\ref{ec19}) for pre-Leibniz algebroids.
\end{proof}

\noindent In the following statements, the kernel of the anchor map $\rho$ will be relevant. Vector bundles do not form an abelian category, meaning that kernels and cokernels of vector bundle morphisms do not have to be vector bundles themselves. The kernel of a vector bundle morphism $\psi: E \to F$ is a subbundle if and only if $\psi$ is of locally constant rank, i. e. $ker(\psi)$ has a locally constant rank. 

\begin{Definition} An anchored vector bundle $(E, \rho)$ is said to be regular if $\rho$ is of locally constant rank.
\label{dc9}
\end{Definition}

\begin{Proposition} On a regular anchored vector bundle $(E, \rho)$, let $N(\nabla, L)$ be as in (\ref{ec14}) with $im(N) \subset ker(\rho)$, then $\nabla_{N(\nabla, L)(u, v)} w$ satisfies the properties (\ref{ec23}).
\label{pc4}
\end{Proposition}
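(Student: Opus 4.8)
The plan is to check directly that the map
\[
M(\nabla,L)(u,v,w) := \nabla_{N(\nabla,L)(u,v)}\, w
\]
— which is well defined since $N(\nabla,L)(u,v)$ is again an $E$-vector field — satisfies the three $C^{\infty}(M,\mathbb{R})$-linearity conditions (\ref{ec23}), one argument at a time. The only inputs are the defining properties (\ref{ec2}) of a linear $E$-connection, the two identities (\ref{ec14}) for $N(\nabla,L)$, and the hypothesis $im(N)\subset ker(\rho)$.

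First I would handle the subscript (first) argument: since $\nabla$ is $C^{\infty}(M,\mathbb{R})$-linear in its subscript entry, the first line of (\ref{ec14}) gives
\[
\nabla_{N(\nabla,L)(fu,v)}\, w = \nabla_{L(Df,u,v)}\, w + f\,\nabla_{N(\nabla,L)(u,v)}\, w ,
\]
which is exactly the first line of (\ref{ec23}). The second argument is immediate from the second line of (\ref{ec14}), $N(\nabla,L)(u,fv) = f\,N(\nabla,L)(u,v)$, combined once more with $C^{\infty}(M,\mathbb{R})$-linearity of $\nabla$ in its subscript. For the third argument I would expand by the Leibniz rule (\ref{ec2}) of $\nabla$ in its second entry,
\[
\nabla_{N(\nabla,L)(u,v)}(fw) = \rho\big(N(\nabla,L)(u,v)\big)(f)\, w + f\,\nabla_{N(\nabla,L)(u,v)}\, w ,
\]
and observe that the first term vanishes precisely because $N(\nabla,L)(u,v)\in ker(\rho)$; this is the sole place the hypothesis enters, and it yields the remaining two equalities of (\ref{ec23}).

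Regularity of $(E,\rho)$ plays no role in the computation itself; it is there so that $ker(\rho)$ is a genuine vector subbundle, making the requirement $im(N)\subset ker(\rho)$ meaningful at the level of sections. (That such $N$ exist is not hard: applying $\rho$ to the left-Leibniz rule (\ref{ec11}) and using (\ref{ec19}) gives $\rho\big(L(Df,u,v)\big)=0$ on a pre-Leibniz algebroid, so for any vector-bundle projection $P$ of $E$ onto $ker(\rho)$ the map $N(\nabla,L)$ given by $N(\nabla,L)(u,v):=P\big(L(e^a,\nabla_{X_a}u,v)\big)$ has image in $ker(\rho)$ and, by corollary (\ref{cc2}) together with the fact that $P$ restricts to the identity on $ker(\rho)$, still satisfies (\ref{ec14}).) I do not expect any real obstacle: the statement is essentially bookkeeping, and the only points requiring care are that a linear $E$-connection is $C^{\infty}(M,\mathbb{R})$-linear only in its subscript argument (being Leibniz in the other), and that the term $\rho\big(N(\nabla,L)(u,v)\big)(f)\,w$ produced in the third slot is precisely the obstruction which $im(N)\subset ker(\rho)$ removes.
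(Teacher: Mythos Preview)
Your proposal is correct and follows essentially the same approach as the paper: a direct three-line verification of (\ref{ec23}) using the $C^{\infty}(M,\mathbb{R})$-linearity of $\nabla$ in its subscript for the first two slots and the Leibniz rule together with $im(N)\subset ker(\rho)$ for the third. Your added remarks on the role of regularity and on the existence of such $N$ go slightly beyond what the paper records in its proof, but they are accurate and anticipate the locality-projector construction that follows.
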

\begin{proof} By the properties (\ref{ec2}) of $E$-connections, (\ref{ec14}) of $N(\nabla, L)$
\begin{equation}
\nabla_{N(\nabla, L)(f u, v)} w = \nabla_{L(Df, u, v) + f N(\nabla, L)(u, v)} w = \nabla_{L(Df, u, v)} w + f \nabla_{N(\nabla, L)(u, v)} w, \nonumber
\end{equation}
\begin{equation}
\nabla_{N(\nabla, L)(u, f v)} w = \nabla_{f N(\nabla, L)(u, v)} w = f \nabla_{N(\nabla, L)(u, v)} w, \nonumber
\end{equation}
\begin{equation}
\nabla_{N(\nabla, L)(u, v)} (f w) = \rho(N(\nabla, L)(u, v)) w + f \nabla_{N(\nabla, L)(u, v)} w = f \nabla_{N(\nabla, L)(u, v)} w, \nonumber
\end{equation}
where the last step in the last equation follows from the assumption $im(N) \subset ker(\rho)$.
\end{proof}

\noindent For a local pre-Leibniz algebroid, one has $(\rho \circ L)(Df, u, v) = 0$, for all $u, v \in \mathfrak{X}(E), f \in C^{\infty}(M, \mathbb{R})$. If one can find a way to make this true outside of the image of the coboundary map $D$, then by the proposition (\ref{pc4}), $L(e^a, \nabla_{X_a} u, v)$ would be helpful to define $E$-curvature map. With the assumption that $(E, \rho)$ is regular, the necessary structure for this purpose is a ``projector'' which maps $[\tilde{L}]$ to a sub-equivalence class of $[\tilde{L}]$ whose elements' images are in $ker(\rho)$. 
\begin{Definition} Let $[L]$ be a locality structure on a regular anchored vector bundle $(E, \rho)$. A locality projector is defined to be a $C^{\infty}(M, \mathbb{R})$-linear map $\mathcal{P}: \mathfrak{X}(E) \to \mathfrak{X}(E)$ satisfying $\hat{L} := \mathcal{P} \circ L \in [\tilde{L}]$ for all $L \in [\tilde{L}]$ such that $im(\hat{L}) \subset ker(\rho)$. It is $C^{\infty}(M, \mathbb{R})$-linear, so one can define a $(1, 1)$-type locality projector $E$-tensor as $P(\Omega, u) := \langle \Omega, P(u) \rangle$, for all $\Omega \in \Omega^1(E), u \in \mathfrak{X}(E)$.
\label{dc10}
\end{Definition}

\begin{Corollary} Given a locality structure $[\tilde{L}]$ with a locality projector $\mathcal{P}$ on a regular local pre-Leibniz algebroid,
\begin{equation} ^{\hat{L}} R(\nabla)(u, v, w) := R^{(0)}(\nabla)(u, v, w) + \nabla_{\hat{L}(e^a, \nabla_{X_a} u, v)} w,
\label{ec24}
\end{equation}
is $C^{\infty}(M, \mathbb{R})$-multilinear for any $L \in [\tilde{L}]$, where $\hat{L} = \mathcal{P} \circ L$, and $(X_a)$ is a local $E$-frame. 
\label{cc3}
\end{Corollary}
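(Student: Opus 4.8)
The plan is to realize $^{\hat{L}} R(\nabla)$ as a particular instance of the $C^{\infty}(M,\mathbb{R})$-multilinear map $^M\mathcal{R}(\nabla)$ of Proposition \ref{pc3}, namely with the choice
\[
M(\nabla, L)(u, v, w) := \nabla_{\hat{L}(e^a, \nabla_{X_a} u, v)} w, \qquad \hat{L} = \mathcal{P} \circ L .
\]
The whole task then reduces to checking that this $M$ is $\mathbb{R}$-multilinear and satisfies the three defining relations (\ref{ec23}). Multilinearity over $\mathbb{R}$ is immediate from the $\mathbb{R}$-bilinearity of $\nabla$, the $C^{\infty}(M,\mathbb{R})$-multilinearity of $\hat{L}$ (inherited from that of $L$), and the $\mathbb{R}$-linearity of $u \mapsto \nabla_{X_a} u$.

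First I would argue that the auxiliary map $(u,v) \mapsto \hat{L}(e^a, \nabla_{X_a} u, v)$ is an admissible $N(\nabla, L)$ in the sense of (\ref{ec14}). This is Corollary \ref{cc2} repeated with $L$ replaced by $\hat{L}$: using $\nabla_{X_a}(fu) = \rho(X_a)(f)\, u + f \nabla_{X_a} u$, the $C^{\infty}(M,\mathbb{R})$-linearity of $\hat{L}$ in its first slot, and the identity $\rho(X_a)(f)\, e^a = (Df)(X_a)\, e^a = Df$, one obtains
\[
\hat{L}\big(e^a, \nabla_{X_a}(fu), v\big) = \hat{L}(Df, u, v) + f\,\hat{L}(e^a, \nabla_{X_a} u, v),
\]
while $C^{\infty}(M,\mathbb{R})$-linearity in $v$ is built in. Since $\hat{L} \in [\tilde{L}]$ is locally equivalent to $L$, one has $\hat{L}(Df, u, v) = L(Df, u, v)$, so the first line of (\ref{ec14}) holds with $L$ on the right-hand side, exactly as required. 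I would also record here that the combination appearing in the extra term is independent of the chosen local $E$-frame: under $X_a \mapsto A^b{}_{a} X_b$ the $E$-coframe transforms contragrediently, so $e^a \otimes \nabla_{X_a}(\cdot)$ is unchanged, and hence so is $\hat{L}(e^a, \nabla_{X_a} u, v)$.

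The decisive point --- and the only place where the locality projector is genuinely used --- is the second hypothesis of Proposition \ref{pc4}, i.e.\ $\mathrm{im}(N) \subset \ker(\rho)$. By Definition \ref{dc10} one has $\mathrm{im}(\hat{L}) \subset \ker(\rho)$, and since $\ker(\rho)$ is a (fibre-wise) linear subspace, each value $\hat{L}(e^a, \nabla_{X_a} u, v)$ --- being a sum of elements of $\mathrm{im}(\hat{L})$ --- again lies in $\ker(\rho)$. As $(E,\rho)$ is a \emph{regular} local pre-Leibniz algebroid, Proposition \ref{pc4} now applies and shows that $M(\nabla,L)(u,v,w) = \nabla_{\hat{L}(e^a, \nabla_{X_a} u, v)} w$ satisfies all of (\ref{ec23}). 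Plugging this $M$ into Proposition \ref{pc3}, whose remaining hypothesis $\rho([u,v]_E) = [\rho(u),\rho(v)]$ holds because we are on a pre-Leibniz algebroid, yields that
\[
{}^M\mathcal{R}(\nabla)(u,v,w) = R^{(0)}(\nabla)(u,v,w) + \nabla_{\hat{L}(e^a, \nabla_{X_a} u, v)} w = {}^{\hat{L}} R(\nabla)(u,v,w)
\]
is $C^{\infty}(M,\mathbb{R})$-multilinear, which is the claim. I do not expect any serious obstacle: the entire content is the bookkeeping observation that $\hat{L} = \mathcal{P}\circ L$ simultaneously (i) remains in the locality structure, so it still reproduces the left-Leibniz rule and satisfies $\hat{L}(Df,\cdot,\cdot) = L(Df,\cdot,\cdot)$, and (ii) has image inside $\ker(\rho)$ --- precisely the extra ingredient absent from Corollary \ref{cc2} and exactly what is needed to absorb the $\nabla_{[u,v]_E}$ term when testing $C^{\infty}(M,\mathbb{R})$-linearity in $w$. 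The argument is uniform in the representative $L$, so multilinearity holds for every $L \in [\tilde{L}]$ as stated.
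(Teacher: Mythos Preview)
Your proposal is correct and follows precisely the route the paper takes: the paper's proof is the one-line remark that the statement is a direct consequence of Corollary~\ref{cc2} and Propositions~\ref{pc3} and~\ref{pc4}, and you have simply unpacked this, correctly observing that $\hat{L}\in[\tilde{L}]$ lets Corollary~\ref{cc2} produce an $N(\nabla,L)$ with image in $\ker(\rho)$, so Proposition~\ref{pc4} and then Proposition~\ref{pc3} apply. Your added remarks on frame independence and on where regularity and the pre-Leibniz condition enter are accurate but go beyond what the paper records.
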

\begin{proof} Direct consequence of the corollary (\ref{cc2}) and propositions (\ref{pc3}, \ref{pc4}).
\end{proof}

\begin{Definition} Given a fixed locality structure representative $L$ on a regular local pre-Leibniz algebroid with a locality projector $\mathcal{P}$, the operator $^{\hat{L}} R(\nabla): \mathfrak{X}(E) \times \mathfrak{X}(E) \times \mathfrak{X}(E) \to \mathfrak{X}(E)$ defined by (\ref{ec24}) is called the $E$-curvature operator of the linear $E$-connection $\nabla$, and the $E$-curvature tensor is defined as a $(1, 3)$-type $E$-tensor
\begin{equation}
^{\hat{L}} R(\nabla)(\Omega, u, v, w) := \langle \Omega, \ ^{\hat{L}} R(\nabla)(u, v, w) \rangle,
\label{ec25}
\end{equation}
for all $\Omega \in \Omega^1(E), u, v, w \in \mathfrak{X}(E)$. 
\label{dc11}
\end{Definition}

\noindent On a local $E$-frame $(X_a)$, the $E$-curvature components read
\begin{align} ^{\hat{L}} R(\nabla)^a_{\ b c d} = & \ \rho(X_b) \left( \Gamma(\nabla)^a_{\ c d} \right) - \rho(X_c) \left( \Gamma(\nabla)^a_{\ b d} \right) + \Gamma(\nabla)^e_{\ c d} \Gamma(\nabla)^a_{\ b e} \nonumber\\
& - \Gamma(\nabla)^e_{\ b d} \Gamma(\nabla)^a_{\ c e} - \gamma^e_{\ b c} \Gamma(\nabla)^a_{\ e d} + \Gamma(\nabla)^f_{\ e b} \Gamma(\nabla)^a_{\ g d} \hat{L}^{g e}_{\ \ f c},
\label{ec26}
\end{align}
where $\hat{L}^{a b}_{\ \ c d} = L^{a e}_{\ \ c d} P^b_{\ e}$. $E$-Ricci tensor and $E$-Ricci scalar are defined as in the usual case, and
\begin{equation} ^{\hat{L}} Ric(\nabla)_{a b} = \ ^{\hat{L}} R(\nabla)^c_{\ c a b},
\label{ec27}
\end{equation}
\begin{equation} ^{\hat{L}} R(\nabla, g) = \ ^{\hat{L}} Ric(\nabla)_{a b} g^{a b},
\label{ec28}
\end{equation}
on a local $E$-frame $(X_a)$ \cite{9}. 

\noindent With all these structure, we are ready to give the main definition of this paper.

\begin{Definition} $E$-metric-connection geometries are defined as a quadruplet $(M, (E, \rho,$ $[\cdot,\cdot]_E, [L], \mathcal{P}), g, \nabla)$, where $M$ is a manifold, $(E, \rho, [\cdot,\cdot]_E, [L], \mathcal{P})$ is a regular local pre-Leibniz algebroid over $M$ with a locality projector $\mathcal{P}$ on the locality structure $[L]$, $g$ is an $E$-metric, and $\nabla$ is a linear $E$-connection. 
\label{dc12}
\end{Definition}

\noindent In order to simplify the constructions, the same representative of $[L]$ will be chosen for all structures. Moreover, if a pre-Leibniz algebroid is denoted by a unique locality operator $L$, then it should be understood that the structures are defined with respect to this $L$.

One can prove that $E$-metric-connection geometries ``generalize'' metric-affine geometries:

\begin{Theorem} $(M, (T(M), id_{T(M)}, [\cdot,\cdot], [0], 0), g, \nabla)$ induces a unique metric-affine geometry on $M$.
\label{tc1}
\end{Theorem}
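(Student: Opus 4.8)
The plan is to verify that when we specialize the $E$-metric-connection geometry data to the tangent bundle with its canonical structures, every $E$-version of a geometric object collapses to its classical counterpart from Section 2, and that no choices remain. Concretely, I would proceed object by object, checking the following identifications: the anchor is $\rho = id_{T(M)}$, so the coboundary map $D = \rho^* \circ d = d$ on functions, and $\rho(u)(f) = u(f)$ is just the usual action of a vector field on a function; the bracket $[\cdot,\cdot]_E = [\cdot,\cdot]$ is the Lie bracket, which satisfies the Jacobi identity, hence the right-Leibniz rule, the left-Leibniz rule with $L = 0$, and the pre-Leibniz compatibility $\rho([u,v]) = [\rho(u),\rho(v)]$ trivially; and $T(M) \to T(M)$ via $id$ has kernel of rank $0$, hence constant rank, so $(T(M), id)$ is regular. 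Thus the quintuple $(T(M), id_{T(M)}, [\cdot,\cdot], [0], 0)$ is indeed a regular local pre-Leibniz algebroid with locality projector $\mathcal{P} = 0$, so the data is well-formed.

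Next I would check that the derived objects match. A linear $E$-connection in the sense of Definition \ref{dc2} with $\rho = id$ is exactly an affine connection as in (\ref{eb3}); the $E$-exterior covariant derivative reduces to (\ref{eb6}) because the only difference — the $\rho(u)$ in the second line — becomes $u$; the $E$-non-metricity tensor $Q(\nabla,g) = \nabla g$ is literally (\ref{eb7}), with components (\ref{ec7}) reducing to (\ref{eb8}). For the $E$-torsion, the locality operator is $L = 0$, so the correction term $L(e^a, \nabla_{X_a} u, v)$ in (\ref{ec16}) vanishes, and $^0 T(\nabla)(u,v) = T^{(0)}(\nabla)(u,v) = \nabla_u v - \nabla_v u - [u,v]$ is exactly the classical torsion operator (\ref{eb9}), with components (\ref{ec18}) reducing to (\ref{eb11}) since $L^{ad}_{\ \ ec} = 0$. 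For the $E$-curvature, both $\hat{L} = \mathcal{P}\circ L = 0$ and the projector $P = 0$, so the correction term $\nabla_{\hat{L}(e^a,\nabla_{X_a}u,v)}w$ in (\ref{ec24}) vanishes, and $^0 R(\nabla)$ is the classical curvature operator (\ref{eb12}); likewise the $E$-Ricci tensor and $E$-Ricci scalar reduce to (\ref{eb15})–(\ref{eb16}). Hence all induced geometric quantities coincide with the metric-affine ones, establishing that the specialization \emph{is} a metric-affine geometry $(M, g, \nabla)$.

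Finally I would address uniqueness: the map sending $(M, (T(M), id_{T(M)}, [\cdot,\cdot], [0], 0), g, \nabla)$ to $(M, g, \nabla)$ is manifestly a bijection onto metric-affine geometries on $M$, because the algebroid slot is fixed to the canonical tangent-bundle data (there is nothing to choose: the Lie bracket is the unique natural bracket making $[0]$ a valid locality structure and $0$ the forced projector once $\rho = id$ has zero-rank kernel), and $g$, $\nabla$ are exactly the metric and affine connection of the metric-affine geometry. So the word ``unique'' refers to this canonical correspondence. I expect no real obstacle here — the statement is essentially a consistency check that Section 3's definitions were engineered to restrict correctly — but the one point requiring a sentence of care is the vanishing of the locality and projector correction terms: one must note that $[0]$ as a locality structure means every representative $L$ agrees with $0$ on the image of $D$, and since we have fixed the representative $L = 0$ (per the convention stated after Definition \ref{dc12}) and $\mathcal{P} = 0$, the extra terms in (\ref{ec16}) and (\ref{ec24}) are identically zero, not merely zero on $\mathrm{im}(D)$. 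I would also remark that regularity is what makes Definition \ref{dc10} applicable, and here it holds trivially.
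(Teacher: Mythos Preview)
Your verification that the $E$-structures collapse to their classical counterparts is correct and matches the paper's approach for that part. However, your treatment of uniqueness misses the substantive content of the paper's argument.

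The theorem's data specifies the locality \emph{structure} $[0]$, not a fixed representative. The $E$-torsion and $E$-curvature operators in (\ref{ec16}) and (\ref{ec24}) involve $L(e^a, \nabla_{X_a}u, v)$, which evaluates $L$ on arbitrary $E$-1-forms $e^a$, not only on $\mathrm{im}(D)$. If there existed a nonzero $L' \in [0]$, then choosing $L'$ as the representative would a priori give different torsion and curvature, and the induced metric-affine geometry would not be uniquely determined by the data in the theorem. You sidestep this by invoking the convention after Definition~\ref{dc12}, but that convention does not single out a representative when the algebroid is presented via a locality structure rather than a specific operator. The paper's proof closes this gap by proving that $[0] = \{0\}$ for the tangent bundle: since $\rho = id_{T(M)}$ one has $D = d$, so any $L \in [0]$ satisfies $L(df, U, V) = 0$ for all $f$; then because every 1-form is locally $\omega = \omega_i\, dx^i$ with the $dx^i$ exact, $C^{\infty}$-multilinearity forces $L(\omega, U, V) = \omega_i L(dx^i, U, V) = 0$. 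This is the actual content behind ``unique'' in the statement, and it is precisely the argument your proposal omits.
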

\begin{proof} When one considers $T(M)$ as the vector bundle in the constructions, $T(M)$-tensors are automatically the usual tensors; in particular this applies for the $T(M)$-metric $g$. Due to the fact that the anchor is the identity map $id_{T(M)}$, linear $T(M)$-connections are the usual affine connections. This makes $T(M)$-non-metricity tensors coincide with the usual ones. The Lie bracket $[\cdot,\cdot]$ satisfies the necessary conditions for a local almost-Leibniz bracket with $L = 0$ as it is anti-symmetric (see the proof of the proposition \ref{pc10})). Hence, $(T(M), id_{T(M)}, [\cdot,\cdot], [0])$ is a local almost-Leibniz algebroid. Since, $id_{T(M)}$ is the anchor, it is actually a regular pre-Leibniz algebroid. Moreover, $T(M)$-torsion and $T(M)$-curvature operators coincide with the usual definitions for $L = 0, \mathcal{P} = 0$. Yet, these operators could have been defined in terms of any $0 \neq L \in [0]$. One can show that such $L$ does not exist as the coboundary map $D$ coincides with the exterior derivative $d$ acting on smooth functions: $Df(U) = id_{T(M)}(U)(f) = U(f) = df(U)$ for $f \in C^{\infty}(M, \mathbb{R}), U \in \mathfrak{X}(M)$, so that $D = d$. Let $L \in [0]$ be distinct from 0, then because $L$ and 0 are locally equivalent and $D = d$, one has $L(Df, U, V) = L(df, u, v) = 0(df, u, v) = 0$, for all $f \in C^{\infty}(M, \mathbb{R}), U, V \in \mathfrak{X}(M)$. Remembering the additional term $L(e^a, \nabla_{X_a} u, v)$ for $E$-torsion and $E$-curvature, one needs to show that $L$ should be equal to 0 outside of the image of the exterior derivative, i. e. on non-exact 1-forms. On a local trivialization chart, any 1-form $\omega$ can be written as $\omega = \omega_i dx^i$ for some $\{ \omega_i \} \subset C^{\infty}(M, \mathbb{R})$ as $(dx^i)$ forms a local frame for $T^*(M)$. Hence, as $L$ is $C^{\infty}(M, \mathbb{R})$-multilinear
\begin{equation} L(\omega, U, V) = L(\omega_i dx^i, U, V) = \omega_i L(dx^i, U, V) = \sum_{i = 1}^n (\omega_i 0) = 0, \nonumber
\end{equation}
which is a contradiction. Therefore, one has $[0] = \{ 0 \}$ so that this generalization is unique.
\end{proof}

\begin{Proposition} Given an $E$-metric $g$ and two linear $E$-connections $\nabla$ and $\nabla'$, their $E$-non-metricity, $E$-torsion and $E$-curvature components are related by
\begin{align}  \nonumber\\
Q(\nabla, g)_{a b c} &= Q(\nabla', g)_{a b c} - \Delta(\nabla, \nabla')^d_{\ a b} g_{d c} - \Delta(\nabla, \nabla')^d_{\ a c} g_{b d}, \nonumber\\
^L T(\nabla)^a_{\ b c} &= \ ^L T(\nabla')^a_{\ b c} + \Delta(\nabla, \nabla')^a_{\ b c} - \Delta(\nabla, \nabla')^a_{\ c b} + \Delta(\nabla, \nabla')^d_{\ e b} L^{a e}_{\ \ d c},
\label{ec29}
\end{align}
\begin{align} ^{\hat{L}} R(\nabla)^a_{\ b c d} = & \ ^{\hat{L}} R(\nabla')^a_{\ b c d} + \ ^{\hat{L}} R(\Delta(\nabla, \nabla'))^a_{\ b c d} \nonumber\\
& + \Gamma(\nabla')^e_{\ c d} \Delta(\nabla, \nabla')^a_{\ b e} - \Gamma(\nabla')^e_{\ b d} \Delta(\nabla, \nabla')^a_{\ c e} \nonumber\\
& + \Gamma(\nabla')^a_{\ b e} \Delta(\nabla, \nabla')^e_{\ c d} - \Gamma(\nabla')^a_{\ c e} \Delta(\nabla, \nabla')^e_{\ b d} \nonumber\\
& + \Gamma(\nabla')^f_{\ e b} \Delta(\nabla, \nabla')^a_{\ g d} \hat{L}^{g e}_{\ \ f c} + \Gamma(\nabla')^a_{\ g d} \Delta(\nabla, \nabla')^f_{\ e b} \hat{L}^{g e}_{\ \ f c},
\label{ec30}
\end{align}
where $\{ ^{\hat{L}} R(\Delta(\nabla, \nabla'))^a_{\ b c d} \}$ are defined in a similar way to $E$-curvature tensor components but in terms of the components of the difference $E$-tensor instead of the $E$-connection coefficients.
\label{pc5}
\end{Proposition}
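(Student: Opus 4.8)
The proposition is a statement about local components on a fixed local $E$-frame $(X_a)$, so the plan is to substitute the relation between connection coefficients coming from Proposition \ref{pc1},
\[
\Gamma(\nabla)^a_{\ b c} = \Gamma(\nabla')^a_{\ b c} + \Delta(\nabla, \nabla')^a_{\ b c},
\]
into the component formulas (\ref{ec7}) for $E$-non-metricity, (\ref{ec18}) for $E$-torsion, and (\ref{ec26}) for $E$-curvature, and then regroup monomials. All three formulas are polynomial in the connection coefficients, and their inhomogeneous, connection-independent pieces ($\rho(X_a)(g_{b c})$ for non-metricity, $-\gamma^a_{\ b c}$ for torsion, and the $\gamma$- and $\rho(X_b)$-derivative terms for curvature) are left untouched, so nothing delicate happens; the only content is to recognise which grouping produces the claimed right-hand sides.

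For $E$-non-metricity, (\ref{ec7}) is affine in the $\Gamma$'s, so the terms $-\Gamma(\nabla)^d_{\ a b}g_{d c}$ and $-\Gamma(\nabla)^d_{\ a c}g_{b d}$ split linearly while $\rho(X_a)(g_{b c})$ is unchanged; reading off the $\nabla'$-part and the $\Delta$-part immediately gives the first line of (\ref{ec29}). For $E$-torsion, (\ref{ec18}) is again affine in the $\Gamma$'s, so after substitution the $\nabla'$-part is $\Gamma(\nabla')^a_{\ b c} - \Gamma(\nabla')^a_{\ c b} - \gamma^a_{\ b c} + \Gamma(\nabla')^e_{\ d b} L^{a d}_{\ \ e c} = \ ^L T(\nabla')^a_{\ b c}$ and the remaining $\Delta$-part is $\Delta(\nabla,\nabla')^a_{\ b c} - \Delta(\nabla,\nabla')^a_{\ c b} + \Delta(\nabla,\nabla')^e_{\ d b} L^{a d}_{\ \ e c}$, which after relabelling the two dummy indices is exactly the correction written in the second line of (\ref{ec29}).

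For $E$-curvature one substitutes into (\ref{ec26}). Because $\rho(X_b)$ is a derivation and hence $\mathbb{R}$-linear, the two derivative terms each split into a $\nabla'$-piece and a $\Delta$-piece; each of the three quadratic terms $\Gamma^e_{\ c d}\Gamma^a_{\ b e}$, $-\Gamma^e_{\ b d}\Gamma^a_{\ c e}$ and $\Gamma^f_{\ e b}\Gamma^a_{\ g d}\hat{L}^{g e}_{\ \ f c}$ expands into four monomials (pure in $\Gamma(\nabla')$, two mixed, pure in $\Delta$); and $-\gamma^e_{\ b c}\Gamma^a_{\ e d}$ splits into a $\nabla'$-piece and a $\Delta$-piece. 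Collecting every monomial pure in $\Gamma(\nabla')$ together with the $\nabla'$-derivative and $\nabla'$-$\gamma$ pieces rebuilds $^{\hat{L}}R(\nabla')^a_{\ b c d}$ exactly as in (\ref{ec26}); collecting every monomial pure in $\Delta(\nabla,\nabla')$ together with the $\Delta$-derivative and $\Delta$-$\gamma$ pieces rebuilds, by the stated definition, $^{\hat{L}}R(\Delta(\nabla,\nabla'))^a_{\ b c d}$; and the leftover mixed monomials are precisely the six cross terms on the right of (\ref{ec30}), the first four coming from $\Gamma^e_{\ c d}\Gamma^a_{\ b e}$ and $-\Gamma^e_{\ b d}\Gamma^a_{\ c e}$ and the last two from the $\hat{L}$-weighted term. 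The only step demanding genuine care, and hence the main obstacle, is the index bookkeeping of the $\hat{L}$-weighted quadratic term: one must verify that its two mixed monomials are exactly $\Gamma(\nabla')^f_{\ e b}\Delta(\nabla,\nabla')^a_{\ g d}\hat{L}^{g e}_{\ \ f c}$ and $\Gamma(\nabla')^a_{\ g d}\Delta(\nabla,\nabla')^f_{\ e b}\hat{L}^{g e}_{\ \ f c}$ with the $\hat{L}$-contraction unchanged and that no index is accidentally identified; everything else reduces to relabelling dummy indices.
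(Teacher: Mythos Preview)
Your proposal is correct and is essentially the same approach as the paper's own proof, which consists of a single sentence noting that the result follows trivially by substituting (\ref{ec5}) into the component formulas (\ref{ec7}), (\ref{ec18}) and (\ref{ec26}). You have simply spelled out in detail the substitution and regrouping that the paper leaves implicit.
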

\begin{proof} It is a trivial result of the equations (\ref{ec5}, \ref{ec7}, \ref{ec18} and \ref{ec27}).
\end{proof}

\begin{Definition} A linear $E$-connection $\nabla$ is called an $E$-Levi-Civita connection corresponding to an $E$-metric $g$ if it is $E$-torsion-free and $E$-metric-$g$-compatible.
\label{dc13}
\end{Definition}

\begin{Corollary} In order to make a linear $E$-connection $\nabla$ an $E$-Levi-Civita connection, the difference $E$-tensor components for any linear $E$-connection $\nabla'$ should satisfy
\begin{equation} \Delta(\nabla, \nabla')^d_{\ a b} g_{c d} + \Delta(\nabla, \nabla')^d_{\ a c} g_{b d} = Q(\nabla', g)_{a b c},
\label{ec31}
\end{equation}
\begin{equation} \Delta(\nabla, \nabla')^a_{\ b c} - \Delta(\nabla, \nabla')^a_{\ c b} + \Delta(\nabla, \nabla')^d_{\ e b} L^{a e}_{\ \ d c} = - \ ^L T(\nabla')^a_{\ b c}.
\label{ec32}
\end{equation}
\label{cc4}
\end{Corollary}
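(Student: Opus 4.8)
The plan is to obtain both conditions as immediate specializations of Proposition \ref{pc5}, by imposing on $\nabla$ the two defining properties of an $E$-Levi-Civita connection from Definition \ref{dc13}. First I would recall that $\nabla$ being an $E$-Levi-Civita connection for the $E$-metric $g$ means precisely that it is $E$-torsion-free and $E$-metric-$g$-compatible, that is $^L T(\nabla) = 0$ and $Q(\nabla, g) = 0$ as $E$-tensors; equivalently, working on an arbitrary local $E$-frame $(X_a)$, all the components $^L T(\nabla)^a_{\ b c}$ and $Q(\nabla, g)_{a b c}$ vanish. Fixing an arbitrary second linear $E$-connection $\nabla'$ as the reference, the difference $E$-tensor is $\Delta(\nabla, \nabla')$, whose components enter the transformation formulas of Proposition \ref{pc5}.

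The next step is purely substitutional. Setting $Q(\nabla, g)_{a b c} = 0$ in the first line of (\ref{ec29}) and moving the two contractions of $\Delta(\nabla, \nabla')$ with $g$ to the left-hand side isolates $Q(\nabla', g)_{a b c}$ on the right, which is exactly (\ref{ec31}) once the symmetry of $g$ is used to identify $g_{c d}$ with $g_{d c}$. Likewise, setting $^L T(\nabla)^a_{\ b c} = 0$ in the second line of (\ref{ec29}) and transposing $^L T(\nabla')^a_{\ b c}$ to the right-hand side yields (\ref{ec32}) verbatim. Note that the $E$-curvature relation (\ref{ec30}) is not used at all, since an $E$-Levi-Civita connection carries no constraint on curvature.

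Because every step is a substitution into an already-established identity, I do not expect any genuine obstacle in carrying this out. The only point worth flagging is that (\ref{ec31})--(\ref{ec32}) are \emph{necessary} conditions on $\Delta(\nabla, \nabla')$; the complementary question — whether a difference $E$-tensor solving these equations actually exists and is unique, i.e.\ whether an $E$-Levi-Civita connection exists — is a separate issue, to be handled later via the $E$-Koszul connection machinery and the promised generalization of the fundamental theorem of Riemannian geometry.
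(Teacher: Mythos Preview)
Your proposal is correct and follows essentially the same approach as the paper, which simply states that the equations follow directly from (\ref{ec29}) in Proposition~\ref{pc5}. Your slightly more detailed spelling-out of the substitution, together with the remark that these are only necessary conditions, is accurate and does not deviate from the paper's reasoning.
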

\begin{proof} These equations directly follow from the equation (\ref{ec29}).
\end{proof}

\begin{Proposition} $^N \mathfrak{T}: \mathfrak{C} \to \ ^{[\tilde{L}]} \mathfrak{B}$ defined as in the proposition (\ref{pc3}) for $N(\nabla, L)(u, v) = L(e^a, \nabla_{X_a} u, v)$ for a local $E$-frame $(X_a)$ is an affine surjection if 
\begin{equation} L(e^a, Z(X_a, u), v) = Z(v, u) \pm Z(u, v),
\label{ec33}
\end{equation}
for all $C^{\infty}(M, \mathbb{R})$-bilinear maps $Z: \mathfrak{X}(E) \times \mathfrak{X}(E) \to \mathfrak{X}(E), u, v \in \mathfrak{X}(M)$ and for all local $E$-frames $(X_a)$. For both cases, $E$-Levi-Civita connections always exist.
\label{pc6}
\end{Proposition}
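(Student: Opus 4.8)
The plan is to reduce the surjectivity statement to a one‑line computation of the linear part of ${}^{N}\mathfrak{T}$, and then to convert that surjectivity into the existence of a linear $E$‑connection which is simultaneously $E$‑torsion‑free and $E$‑metric‑$g$‑compatible.

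First I would note that, with $N(\nabla,L)(u,v):=L(e^a,\nabla_{X_a}u,v)$, the requirements (\ref{ec14}) are met — this is exactly the observation in the proof of Corollary \ref{cc2}, using $\rho(X_a)(f)e^a=Df$ — so by Proposition \ref{pc2} the map ${}^{N}\mathfrak{T}:\mathfrak{C}\to{}^{[\tilde{L}]}\mathfrak{B}$ is a well‑defined affine map between affine spaces both modeled on $Tens^{(1,2)}(E)$ (Propositions \ref{pc1}, \ref{pc2}); hence it is surjective iff its linear part is surjective. Writing $\nabla=\nabla'+\Delta$ with $\Delta=\Delta(\nabla,\nabla')\in Tens^{(1,2)}(E)$ and using that $L$ is $\mathbb{R}$‑multilinear, the difference ${}^{N}\mathfrak{T}(\nabla)-{}^{N}\mathfrak{T}(\nabla')$ is
\[ \ell:\ \Delta\ \longmapsto\ \big[(u,v)\mapsto\Delta(u,v)-\Delta(v,u)+L(e^a,\Delta(X_a,u),v)\big]. \]
Feeding $Z=\Delta$ into the hypothesis (\ref{ec33}) gives $L(e^a,\Delta(X_a,u),v)=\Delta(v,u)\pm\Delta(u,v)$, so $\ell(\Delta)=\Delta\pm\Delta$. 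For the upper sign $\ell=2\,\mathrm{id}$ is a linear isomorphism, hence ${}^{N}\mathfrak{T}$ is an affine bijection and in particular a surjection; for the lower sign $\ell\equiv 0$, and only the identity $\ell(\Delta)=\Delta\pm\Delta$ will be used below. The sole subtlety in this part is the bookkeeping with the frame term $L(e^a,\cdot,\cdot)$, already handled in Corollary \ref{cc2}.

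For the $E$‑Levi‑Civita part I would use that, by (\ref{ec16}) and the choice of $N$, one has ${}^{L}T(\nabla)={}^{N}\mathfrak{T}(\nabla)-[\cdot,\cdot]_E$, so $\nabla$ is $E$‑torsion‑free iff ${}^{N}\mathfrak{T}(\nabla)=[\cdot,\cdot]_E$. Fixing a representative $L$ and an $E$‑metric‑$g$‑compatible $\bar\nabla$ (these exist, as remarked after Definition \ref{dc3}), I would look for $\nabla=\bar\nabla+A$. By (\ref{ec31}), $\nabla$ is $E$‑metric‑$g$‑compatible iff $g(A(u,v),w)+g(v,A(u,w))=0$; by (\ref{ec32}) it is $E$‑torsion‑free iff $\ell(A)=-{}^{L}T(\bar\nabla)$, i.e., applying (\ref{ec33}) with $Z=A$, iff $2A=-{}^{L}T(\bar\nabla)$ in the upper‑sign case and iff ${}^{L}T(\bar\nabla)=0$ in the lower‑sign case. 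Thus in the upper‑sign case $A=-\tfrac12{}^{L}T(\bar\nabla)$ is forced (consistently with ${}^{N}\mathfrak{T}$ being a bijection), and one must check that $g({}^{L}T(\bar\nabla)(u,v),w)$ is antisymmetric in $v,w$ so that this $A$ preserves $E$‑metric‑$g$‑compatibility; in the lower‑sign case $A$ is free, and one must check that $\bar\nabla$ is already $E$‑torsion‑free. Either way the matter reduces to a symmetry property of the $E$‑torsion of an $E$‑metric‑$g$‑compatible connection.

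The hard part will be that last verification. My plan is a direct computation: expand $g({}^{L}T(\bar\nabla)(u,v),w)$ using (\ref{ec16}), transfer each $\bar\nabla$‑derivative onto the $E$‑metric via $\rho(x)g(y,z)=g(\bar\nabla_xy,z)+g(y,\bar\nabla_xz)$, and then reorganise the surviving $L(e^a,\bar\nabla_{X_a}u,v)$‑terms — after peeling $\bar\nabla_{X_a}u$ off an auxiliary connection so the remainder is a genuine $(1,2)$‑tensor — by one further application of (\ref{ec33}); the symmetry of $g$ together with the pre‑Leibniz identity (\ref{ec19}) should clear what is left, producing the required antisymmetry (upper sign) or vanishing (lower sign). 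Running the same computation in components with (\ref{ec18}), (\ref{ec7}) and (\ref{ec33}) written index‑wise is likely the cleaner bookkeeping. That the resulting $A$ is an honest $E$‑tensor is automatic, since ${}^{L}T(\bar\nabla)$ already is.
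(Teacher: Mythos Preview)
Your surjectivity argument is essentially the paper's: compute the linear part of ${}^{N}\mathfrak{T}$ and check it hits everything. For the upper sign this is fine and agrees with the paper. For the lower sign you correctly obtain $\ell\equiv 0$, but you then quietly move on without confronting the consequence: a constant affine map into a space modeled on $Tens^{(1,2)}(E)$ is not surjective unless that space is a point, so the surjectivity claim fails as stated. (The paper instead arrives at the linear part $\Delta\mapsto -2\Delta(v,u)$, which would give surjectivity, but in the very next sentence asserts that all connections share the same $E$-torsion --- a statement equivalent to $\ell=0$. Your arithmetic is the internally consistent one; the surjectivity assertion in the lower-sign case is simply problematic, and you should flag this rather than pass over it.)

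For the Levi-Civita existence your route diverges from the paper's. In the upper-sign case the paper takes as difference tensor the $E$-contorsion $C(\bar\nabla,g)_{abc}:={}^{L}T(\bar\nabla)_{[abc]}$, whose total antisymmetry automatically preserves metric-$g$-compatibility, and then asserts ``by direct computation'' that the result is torsion-free. You instead solve the torsion-free equation first --- since $\ell=2\,\mathrm{id}$, the only candidate is $A=-\tfrac12\,{}^{L}T(\bar\nabla)$ --- and must then verify that $g({}^{L}T(\bar\nabla)(u,v),w)$ is antisymmetric in $v,w$. Unwound, both approaches reduce to the same unproved claim: that the $g$-lowered $E$-torsion of an $E$-metric-$g$-compatible connection is totally antisymmetric under hypothesis~(\ref{ec33}) with the upper sign. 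Neither the paper nor you actually establishes this. Your sketched plan is not convincing: the appeal to the pre-Leibniz identity~(\ref{ec19}) concerns the anchor $\rho$, not the metric $g$, and plays no evident role in a symmetry statement about $g({}^{L}T(\bar\nabla)(\cdot,\cdot),\cdot)$; and the ``peel $\bar\nabla_{X_a}u$ off an auxiliary connection so the remainder is a genuine $(1,2)$-tensor'' step, followed by another application of~(\ref{ec33}), is not specified precisely enough to see that it closes. In the lower-sign case your reduction to ``$\bar\nabla$ is already $E$-torsion-free'' is correct (indeed it is the only thing left once $\ell=0$), but again you give only a plan; the paper's argument here leans on its surjectivity claim to produce one torsion-free connection, which --- as above --- is exactly the step in question.
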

\begin{proof} Let $\tilde{\nabla}$ be a linear $E$-connection with the corresponding local almost-Leibniz bracket $^N[\cdot,\cdot]_{\tilde{\nabla}}$, and $[\cdot,\cdot]'$ any local almost-Leibniz bracket. By the proposition (\ref{pc2}), $^{[\tilde{L}]} \mathfrak{B}$ is an affine space modeled on $Tens^{(1, 2)}(E)$ so that there exits a $C^{\infty}(M, \mathbb{R})$-bilinear map $\mathcal{K}: \mathfrak{X}(E) \times \mathfrak{X}(E) \to \mathfrak{X}(E)$ such that
\begin{equation} [u, v]' - \ ^N [u, v]_{\tilde{\nabla}} = \mathcal{K}(u, v), \nonumber
\end{equation}
for all $u, v \in \mathfrak{X}(E)$. $\mathfrak{C}$ is also an affine space modeled on $Tens^{(1, 2)}(E)$ so as in the equation (\ref{ec5}), for any linear $E$-connection $\nabla$ 
\begin{equation} \nabla - \tilde{\nabla} = \Delta(\nabla, \tilde{\nabla}) \nonumber
\end{equation}
defines a $(1, 2)$-type $E$-tensor. First, consider $L(e^a, Z(X_a, u), v) = Z(v, u) + Z(u, v)$ case:
\begin{align} 
^N \mathfrak{T}(\nabla)(u, v) &= \ ^N [u, v]_{\nabla} \nonumber\\
&= \nabla_u v - \nabla_v u + L(e^a, \nabla_{X_a} u, v) \nonumber\\
&= \tilde{\nabla}_u v + \Delta(\nabla, \tilde{\nabla})(u, v) - \tilde{\nabla}_v u - \Delta(\nabla, \tilde{\nabla})(v, u) + L(e^a, \tilde{\nabla}_{X_a} u, v) \nonumber\\
& \qquad + L(e^a, \Delta(\nabla, \tilde{\nabla})(X_a, u), v) \nonumber\\
&= \ ^N [u, v]_{\tilde{\nabla}} + \Delta(\nabla, \tilde{\nabla})(u, v) - \Delta(\nabla, \tilde{\nabla})(v, u) \nonumber\\
& \qquad + \Delta(\nabla, \tilde{\nabla})(u, v) + \Delta(\nabla, \tilde{\nabla})(v, u) \nonumber\\
&= \ ^N [u, v]_{\tilde{\nabla}} + 2 \Delta(\nabla, \tilde{\nabla})(u, v), \nonumber
\end{align}
where the assumption (\ref{ec33}) is used. Hence, if one chooses
\begin{equation} \Delta(\nabla, \tilde{\nabla})(u, v) = \frac{1}{2} \mathcal{K}(u, v), \nonumber
\end{equation}
then $^N \mathfrak{T}(\nabla) = [\cdot,\cdot]'$ so that the map is surjective. Since the map is surjective, for every local almost-Leibniz bracket, there is a linear $E$-connection $\nabla$ such that $E$-torsion map defined by (\ref{ec16}) vanishes. Hence, $\nabla$ is $E$-torsion-free. This also means that if the assumption (\ref{ec33}) holds, then the equation (\ref{ec31}) is the decomposition of the $E$-torsion into its anti-symmetric and symmetric parts. 

For this case, one can define the $E$-contorsion tensor $C(\nabla, g)$ as an $E$-tensor whose components are given by
\begin{equation} C(\nabla, g)_{a b c} := T(\nabla)_{[a b c]}, \nonumber
\end{equation}
where the index lowering by the $E$-metric $g$ is used, and $[\cdots]$ denote the total anti-symmetrization of the indices. By direct computation one can show that for a metric-$g$-compatible linear $E$-connection $\nabla$, one can construct an $E$-Levi-Civita connection $\nabla'$ if $\Delta(\nabla, \nabla') = C(\nabla, g)$.

Similarly for the case $L(e^a, Z(X_a, u), v) = Z(v, u) - Z(u, v)$
\begin{equation} ^N \mathfrak{T}(\nabla)(u, v) = \ ^N [u, v]_{\nabla} = \ ^N [u, v]_{\tilde{\nabla}} - 2 \Delta(\nabla, \tilde{\nabla})(v, u), \nonumber
\end{equation}
Therefore, if one chooses 
\begin{equation} \Delta(\nabla, \tilde{\nabla})(u, v) = - \frac{1}{2} \mathcal{K}(v, u), \nonumber
\end{equation}
then $^N \mathfrak{T}(\nabla) = [\cdot,\cdot]'$ so that the map is surjective, and an $E$-torsion-free $E$-connection exists. This case yields a ``pathological'' fact about the $E$-torsion: Every linear $E$-connection has the same $E$-torsion by the equation (\ref{ec29}), so that every $E$-connection is $E$-torsion-free. In particular, every $E$-metric-compatible linear $E$-connections are $E$-torsion-free, so $E$-Levi-Civita connections exist.
\end{proof}

\noindent In general, $E$-Levi-Civita connections might not exist or when they exist they might not be unique, as opposed to the usual case. One can try to find a linear $E$-connection in the form analogous to the usual Koszul formula (\ref{eb17}), but it does not define a linear $E$-connection. Yet, similar to the pseudo-$E$-torsion map and pseudo-$E$-curvature map, one can modify the Koszul formula to have a linear $E$-connection.

\begin{Proposition} On a local almost Leibniz algebroid endowed with an $E$-metric $g$ and a locality structure $[\tilde{L}]$, the following modification of the Koszul formula (\ref{eb17}) defines a linear $E$-connection $\nabla$
\begin{align} 2 g(\nabla_u v, w) + \mathcal{K}(\nabla, g, L)(u, v, w) &= \rho(u)(g(v, w)) + \rho(v)(g(u, w)) - \rho(w)(g(u, v)) \nonumber\\
& \qquad - g([v, w]_E, u) - g([u, w]_E, v) + g([u, v]_E, w),
\label{ec34}
\end{align}
for any $\mathbb{R}$-multilinear map $\mathcal{K}(\nabla, g, L): \mathfrak{X}(E) \times \mathfrak{X}(E) \times \mathfrak{X}(E) \to \mathfrak{X}(E)$ satisfying for any $L \in [\tilde{L}]$
\begin{align}
\mathcal{K}(\nabla, g, L)(f u, v, w) &= - g(L(Df, u, w), v) + g(L(Df, u, v), w) + f \mathcal{K}(\nabla, g, L)(u, v, w), \nonumber\\
\mathcal{K}(\nabla, g, L)(u, f v, w) &= - g(L(Df, v, w), u) + f \mathcal{K}(\nabla, g, L)(u, v, w), \nonumber\\
\mathcal{K}(\nabla, g, L)(u, v, f w) &=  f \mathcal{K}(\nabla, g, L)(u, v, w),
\label{ec35}
\end{align}
for all $f \in C^{\infty}(M, \mathbb{R}), u, v, w \in \mathfrak{X}(E)$.
\label{pc7}
\end{Proposition}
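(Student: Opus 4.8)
The plan is to verify directly that the map $\nabla$ implicitly defined by (\ref{ec34}) satisfies the two defining axioms (\ref{ec2}) of a linear $E$-connection. Since $g$ is non-degenerate, the equation (\ref{ec34}) does determine $\nabla_u v$ uniquely for each $u,v$ once $\mathcal{K}(\nabla,g,L)$ is fixed; strictly speaking $\mathcal{K}$ may itself depend on $\nabla$, so one should first observe that the right-hand side of (\ref{ec34}) together with the $\nabla$-independent constraints (\ref{ec35}) forms a closed system — I would note that any solution of the fixed-point problem suffices, and existence follows because one can take a genuine linear $E$-connection $\tilde\nabla$ (which exists, e.g. induced from a vector-bundle connection), set $\mathcal{K}(\nabla,g,L) := \mathcal{K}(\tilde\nabla,g,L)$, and check the resulting $\nabla$ works; alternatively one treats $\mathcal{K}$ as given data satisfying (\ref{ec35}) and just checks the axioms, which is what the statement literally asks.

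The core computation is the $C^\infty(M,\mathbb{R})$-behaviour. First I would check the Leibniz rule in the \emph{last} slot, i.e.\ compute $2g(\nabla_u v, fw) + \mathcal{K}(\nabla,g,L)(u,v,fw)$ by plugging $fw$ into the right-hand side of (\ref{ec34}). The anchor terms $\rho(u)(g(v,fw))$, $\rho(v)(g(u,fw))$, $-\rho(w')$-type terms produce, via the Leibniz rule for $\rho(\cdot)(\cdot)$, a factor $f$ times the original expression plus correction terms $\rho(u)(f)g(v,w) + \rho(v)(f)g(u,w) - \rho(w)(f)\cdot(\text{none, since }fw\text{ is in the derivative slot})$; meanwhile the bracket terms $-g([v,fw]_E,u)$ and $-g([u,fw]_E,v)$ expand by the right-Leibniz rule (\ref{ec10}) as $-g(f[v,w]_E + \rho(v)(f)w, u) - g(f[u,w]_E + \rho(u)(f)w, v)$, and $g([u,v]_E, fw) = f\,g([u,v]_E,w)$. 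Collecting: the $\rho(u)(f)$ and $\rho(v)(f)$ terms cancel in pairs, leaving exactly $f$ times the right-hand side. Since $\mathcal{K}(\nabla,g,L)(u,v,fw) = f\mathcal{K}(\nabla,g,L)(u,v,w)$ by the third line of (\ref{ec35}), non-degeneracy of $g$ gives $\nabla_u(fw)$... wait — this slot is $v$, not $w$; I mean one concludes $g(\nabla_u v, fw) = f g(\nabla_u v, w)$, which is automatic and carries no information. The informative checks are slots $u$ and $v$.

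So the real work is: (i) plugging $fv$ into (\ref{ec34}) and showing $\nabla_u(fv) = \rho(u)(f)v + f\nabla_u v$, and (ii) plugging $fu$ and showing $\nabla_{fu}v = f\nabla_u v$. For (i): on the right-hand side, $\rho(u)(g(fv,w)) = \rho(u)(f)g(v,w) + f\rho(u)(g(v,w))$; $\rho(fv)$ does not occur — rather $\rho(v)\to\rho(v)$ with argument $g(u,w)$ unchanged except we have $g(fv,w)$... I need to be careful which entries $fv$ lands in. With $v\mapsto fv$: terms are $\rho(u)(g(fv,w))$, $\rho(fv)(g(u,w)) = f\rho(v)(g(u,w))$, $-\rho(w)(g(u,fv)) = -\rho(w)(f)g(u,v) - f\rho(w)(g(u,v))$, $-g([fv,w]_E,u)$, $-g([u,w]_E,fv) = -f g([u,w]_E,v)$, $+g([u,fv]_E,w) = g(\rho(u)(f)v + f[u,v]_E, w) = \rho(u)(f)g(v,w) + f g([u,v]_E,w)$. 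The term $-g([fv,w]_E,u)$ is the delicate one: by the left-Leibniz rule (\ref{ec11}), $[fv,w]_E = -\rho(w)(f)v + f[v,w]_E + L(Df,v,w)$, so $-g([fv,w]_E,u) = \rho(w)(f)g(v,u) - f g([v,w]_E,u) - g(L(Df,v,w),u)$. Now summing everything: the $\rho(u)(f)g(v,w)$ appears twice (from the first and last terms) giving $2\rho(u)(f)g(v,w)$; the $-\rho(w)(f)g(u,v)$ and $+\rho(w)(f)g(v,u)$ cancel; and the stray term $-g(L(Df,v,w),u)$ must be absorbed by $\mathcal{K}$ — indeed the second line of (\ref{ec35}) says $\mathcal{K}(\nabla,g,L)(u,fv,w) = -g(L(Df,v,w),u) + f\mathcal{K}(\nabla,g,L)(u,v,w)$, so moving the $\mathcal{K}$ terms to the left, the $-g(L(Df,v,w),u)$ cancels precisely. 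What remains is $2\rho(u)(f)g(v,w) + 2f g(\nabla_u v, w) + f\mathcal{K}(u,v,w) - f\mathcal{K}(u,v,w) = 2g(\rho(u)(f)v + f\nabla_u v, w)$, hence by non-degeneracy $\nabla_u(fv) = \rho(u)(f)v + f\nabla_u v$. Case (ii) with $u\mapsto fu$ is entirely analogous, using the first line of (\ref{ec35}) to absorb the two $L(Df,\cdot,\cdot)$ terms that arise from $-g([u,w]_E,v)$ becoming $-g([fu,w]_E,v)$ and from $g([u,v]_E,w)$ becoming $g([fu,v]_E,w)$, and one finds the $\rho(u)(f)$-terms all cancel (no inhomogeneous term survives, as required since the second axiom is $C^\infty$-linear in the first slot). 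Finally $\mathbb{R}$-bilinearity of $\nabla$ is immediate from $\mathbb{R}$-bilinearity of both sides.

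The main obstacle is purely bookkeeping: tracking which of the up-to-six terms on the right of (\ref{ec34}) picks up which inhomogeneous piece when $f$ is inserted, and confirming that the three constraints (\ref{ec35}) on $\mathcal{K}$ are \emph{exactly} what is needed — no more, no less — to cancel the anomalous $L(Df,\cdot,\cdot)$ contributions while leaving the correct $\rho(u)(f)v$ term in slot $v$ and nothing in slot $u$. I expect no conceptual difficulty; the content of the proposition is precisely that (\ref{ec35}) has been reverse-engineered to make (\ref{ec34}) consistent, so the verification is mechanical once the sign conventions in (\ref{ec10}) and (\ref{ec11}) are respected.
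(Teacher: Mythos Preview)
Your proposal is correct and follows essentially the same route as the paper's own proof: both check the two $E$-connection axioms (\ref{ec2}) by substituting $fu$ and $fv$ into (\ref{ec34}), expanding the bracket terms via the right- and left-Leibniz rules (\ref{ec10}) and (\ref{ec11}), and observing that the anomalous $L(Df,\cdot,\cdot)$ contributions are precisely cancelled by the conditions (\ref{ec35}) on $\mathcal{K}$; the paper also briefly notes the $C^\infty$-linearity in $w$, which you treated as a consistency check. Your side remark about the fixed-point nature of the definition (since $\mathcal{K}$ may depend on $\nabla$) is an observation the paper does not address, but the verification you outline is otherwise the same mechanical computation.
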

\begin{proof} One should check that the defining properties of linear $E$-connections (\ref{ec2}) hold. By linearity of $g$ and the assumption (\ref{ec35})
\begin{align} 2 g(\nabla_{f u} v, w) + \mathcal{K}(\nabla, g, L)(f u, v, w) &= 2 g(\nabla_{f u} v, w) - g(L(Df, u, w), v) \nonumber\\
& \quad + g(L(Df, u, v), w) + f \mathcal{K}(\nabla, g, L)(u, v, w), \nonumber
\end{align}
\begin{align} 
& \Big\{ \rho(f u)(g(v, w)) + \rho(v)(g(f u, w)) - \rho(w)(g(f u, v)) - g([v, w]_E, f u) - g([f u, w]_E, v) \nonumber\\
& \quad \qquad + g([f u, v]_E, w) \Big\} \nonumber\\
& = f \rho(u)(g(v, w)) + \rho(v)(f g(u, w)) - \rho(w)(f g(u, v)) - f g([v, w]_E, u) \nonumber\\
& \quad - g(- \rho(w)(f)u + f[u, w]_E + L(Df, u, w), v) \nonumber\\
& \quad + g(- \rho(v)(f)u + f[u, v]_E + L(Df, u, v), w) \nonumber\\
& = f \rho(u)(g(v, w)) + \rho(v)(f) g(u, w) + f \rho(v)(g(u, w)) - \rho(w)(f) g(u, v) - f \rho(w)(g(u, v)) \nonumber\\
& \quad - f g([v, w]_E, u) + \rho(w)(f) g(u, v) - f g([u, w]_E, v) - g(L(Df, u, w), v) \nonumber\\
& \quad - \rho(v)(f) g(u, w) + f g([u, v]_E, w) + g(L(Df, u, v, w) \nonumber\\
& = f \Big\{ \rho(u)(g(v, w)) + \rho(v)(g(u, w)) - \rho(w)(g(u, v)) - g([v, w]_E, u) \nonumber\\
& \quad - g([u, w]_E, v) + g([u, v]_E, w) \Big\} - g(L(Df, u, w), v) + g(L(Df, u, v), w) \nonumber\\
& \quad + f \mathcal{K}(\nabla, g, L)(u, v, w) \nonumber
\end{align}
These two equations coincide if $\nabla_{f u} v = f \nabla_u v$ for all $f \in C^{\infty}(M, \mathbb{R}), u, v \in \mathfrak{X}(E)$, which is one of the needed properties. Similarly for the $v$ entry, by using the fact that $g$ is symmetric
\begin{align} 2 g(\nabla_u (f v), w) + \mathcal{K}(\nabla, g, L)(u, f v, w) &= 2 g(\nabla_u (f v), w) - g(L(Df, v, w), u) \nonumber\\
& f \mathcal{K}(\nabla, g, L)(u, v, w), \nonumber
\end{align}
\begin{align} 
& \Big\{ \rho(u)(g(f v, w)) + \rho(f v)(g(u, w)) - \rho(w)(g(u, f v)) - g([f v, w]_E, u) - g([u, w]_E, f v) \nonumber\\
& \quad \qquad + g([u, f v]_E, w) \Big\} \nonumber\\
& = \rho(u)(f g(v, w)) + f \rho(v)(g(u, w)) - \rho(w)(f g(u, v)) \nonumber\\
& \quad  - g(- \rho(w)(f)v + f[v, w]_E + L(Df, v, w), u) - f g([u, w]_E, v) \nonumber\\
& \quad + g(\rho(u)(f)v + f[u, v]_E, w) \nonumber\\
& = \rho(u)(f) g(v, w) + f \rho(u)(g(v, w)) + f \rho(v)(g(u, w)) - \rho(w)(f) g(u, v) - f \rho(w)(g(u, v)) \nonumber\\
& \quad + \rho(w)(f) g(v, u) - f g([v, w]_E, u) - g(L(Df, v, w), u) - f g([u, w]_E, v) \nonumber\\
& \quad + \rho(u)(f) g(v, w) + f g([u, v]_E, w)\nonumber\\
& = \rho(u)(f) g(v, w) + f \Big\{ \rho(u)(g(v, w)) + \rho(v)(g(u, w)) - \rho(w)(g(u, v)) - g([v, w]_E, u) \nonumber\\
& \quad - g([u, w]_E, v) + g([u, v]_E, w) \Big\} - g(L(Df, v, w), u) \nonumber
\end{align}
Similarly these two equations coincide if $\nabla_u (f v) = \rho(u)(f) v + f \nabla_u v$ for all $f \in C^{\infty}(M, \mathbb{R}), u, v \in \mathfrak{X}(E)$, which is the other defining property of linear $E$-connections. Moreover, one can also check for that both sides are $C^{\infty}(M, \mathbb{R})$-linear in $w$. Therefore, the equation (\ref{ec34}) defines a linear $E$-connection. 
\end{proof}

\begin{Proposition}
The following map $K(\nabla, g, L): \mathfrak{X}(E) \times \mathfrak{X}(E) \times \mathfrak{X}(E) \to \mathfrak{X}(E)$ satisfies the necessary conditions (\ref{ec35})
\begin{align} 
K(\nabla, g, L)(u, v, w) := & - g (L ( e^a, \nabla_{X_a} v, w ), u) - g (L ( e^a, \nabla_{X_a} u, w ), v) \nonumber\\
& + g (L ( e^a, \nabla_{X_a} u, v ), w),
\label{ec36}
\end{align}
for all $u, v, w \in \mathfrak{X}(E)$, where $(X_a)$ is a local $E$-frame.
\label{pc8}
\end{Proposition}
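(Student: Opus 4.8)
The proof is a direct verification that the three homogeneity-type relations in (\ref{ec35}) hold for the explicit $K(\nabla, g, L)$ of (\ref{ec36}). The only inputs are $C^\infty(M,\mathbb{R})$-multilinearity of $L$, $C^\infty(M,\mathbb{R})$-bilinearity of $g$, the Leibniz rule (\ref{ec2}) for $\nabla$ in its second argument, and --- the single non-formal ingredient --- the identity $\sum_a \rho(X_a)(f)\,e^a = Df$ already exploited in the proof of Corollary (\ref{cc2}), which is what turns the inhomogeneous term produced by the Leibniz rule into the $L(Df,\cdot,\cdot)$-terms on the right-hand sides of (\ref{ec35}). I would also note at the outset that $\sum_a L(e^a,\nabla_{X_a}u,v)$ does not depend on the chosen local $E$-frame: under $X_a\mapsto A^b{}_a X_b$ one has $e^a\mapsto (A^{-1})^a{}_b e^b$ and $\nabla_{A^b{}_a X_b}u=A^b{}_a\nabla_{X_b}u$ by (\ref{ec2}), so the two matrices contract to a Kronecker delta; hence $K(\nabla,g,L)$ is well defined.

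For the first relation I would substitute $fu$ for $u$ in (\ref{ec36}). In the first term of (\ref{ec36}) the slot occupied by $u$ is merely $C^\infty$-linear, so $f$ simply factors out. In the second and third terms $u$ sits inside $\nabla_{X_a}(\cdot)$; expanding $\nabla_{X_a}(fu)=\rho(X_a)(f)u+f\nabla_{X_a}u$ and using multilinearity of $L$ in its first argument together with $\sum_a\rho(X_a)(f)L(e^a,u,w)=L(Df,u,w)$ (and the same with $v$ in the last slot) produces, besides $f$ times the original term, exactly $-g(L(Df,u,w),v)$ from the second term and $+g(L(Df,u,v),w)$ from the third. Summing the three contributions gives precisely the first line of (\ref{ec35}).

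The second relation is obtained the same way by substituting $fv$ for $v$: now $v$ lies in the connection slot only in the first term of (\ref{ec36}), so expanding $\nabla_{X_a}(fv)$ there yields the single extra term $-g(L(Df,v,w),u)$, while the remaining two terms only contribute a factor $f$; this is the second line of (\ref{ec35}). Finally, $w$ never appears inside a $\nabla_{X_a}(\cdot)$ in (\ref{ec36}), so substituting $fw$ for $w$ and using only $C^\infty$-bilinearity of $g$ and $C^\infty$-multilinearity of $L$ gives $K(\nabla,g,L)(u,v,fw)=f\,K(\nabla,g,L)(u,v,w)$ with no remainder, which is the third line of (\ref{ec35}).

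There is no genuine obstacle here: the computation is mechanical. The only points that require attention are bookkeeping --- tracking which of $u,v,w$ occupies the connection slot in each of the three terms of (\ref{ec36}), since this asymmetry is exactly what reproduces the asymmetric pattern of $L(Df,\cdot,\cdot)$-corrections in (\ref{ec35}) (two such terms when perturbing $u$, one when perturbing $v$, none when perturbing $w$) --- and correctly invoking $\sum_a\rho(X_a)(f)e^a=Df$ at each place where the Leibniz rule is applied.
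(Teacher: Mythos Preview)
Your proof is correct and follows exactly the approach the paper indicates: its own proof reads simply ``Direct calculation by using the definition of a linear $E$-connection (\ref{ec2}) and multilinearity of $g$ and $L$,'' and you have supplied precisely that calculation, correctly tracking which argument sits in the $\nabla_{X_a}(\cdot)$ slot and invoking $\rho(X_a)(f)e^a=Df$ at the right places. Your added remark on frame-independence of $\sum_a L(e^a,\nabla_{X_a}u,v)$ is a welcome bonus not made explicit in the paper.
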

\begin{proof} Direct calculation by using the definition of a linear $E$-connection (\ref{ec2}) and multilinearity of $g$ and $L$.
\end{proof}

\begin{Definition} Given a fixed locality structure representative $L$ on a local almost-Leibniz algebroid, a linear $E$-connection, denoted by $^K \nabla$, will be called an $E$-Koszul connection if it satisfies the equation (\ref{ec34}) with $\mathcal{K}(^K \nabla, g, L) = K(^K \nabla, g, L)$, which is defined by the equation (\ref{ec36}).
\label{dc14}
\end{Definition}
\noindent Clearly, this modification is not useful as the usual Koszul formula due to the extra term $K(^K \nabla, g, L)$ so that one cannot directly compute its coefficients. Moreover, there is no reason to have a unique $E$-Koszul connection for an arbitrary locality operator. Nevertheless, $E$-Koszul connections have some interesting properties. On a local $E$-frame $(X_a)$, the equations (\ref{ec34})  and (\ref{ec36}) yield
\begin{align} \Gamma(^K \nabla)^a_{\ b c} = & \frac{1}{2} g^{a d} \Bigl[ \rho(X_b) \left( g_{c d} \right) + \rho(X_c) \left( g_{b d} \right) - \rho(X_d) \left( g_{b c} \right) \nonumber\\
& \qquad - \gamma^e_{\ c d} g_{e b} - \gamma^e_{\ b d} g_{e c} + \gamma^e_{\ b c} g_{e d} + \Gamma(^K \nabla)^e_{\ g c} L^{f g}_{\ \ e d} g_{f b} \nonumber\\
& \qquad + \Gamma(^K \nabla)^e_{\ g b} L^{f g}_{\ \ e d} g_{f c} - \Gamma(^K \nabla)^e_{\ g b} L^{f g}_{\ \ e c} g_{f d} \Bigr].
\label{ec37}
\end{align}

\begin{Proposition} For an $E$-Koszul connection $^K \nabla$, $E$-torsion and $E$-non-metricity components satisfy
\begin{equation} Q(^K \nabla, g)_{a b c} = - \ ^L T(^K \nabla)^f_{\ b c} g_{f a}.
\label{ec38}
\end{equation} 
\label{pc9}
\end{Proposition}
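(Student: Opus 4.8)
The plan is to imitate the classical argument that reads metric compatibility and vanishing torsion off the Koszul formula, but now retaining the extra $L$-terms throughout. Everything will be extracted from the single identity (\ref{ec34}) with $\mathcal{K} = K$ as in (\ref{ec36}), which is precisely the content of Definition \ref{dc14}; neither regularity, nor the locality projector, nor the pre-Leibniz condition (\ref{ec19}), nor hypothesis (\ref{ec33}) is needed. Write $\nabla$ for $^K\nabla$ and abbreviate $A(y,z) := L(e^a, \nabla_{X_a} y, z)$ for a local $E$-frame $(X_a)$, so that (\ref{ec36}) reads $K(\nabla, g, L)(x,y,z) = -g(A(y,z),x) - g(A(x,z),y) + g(A(x,y),z)$. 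Recall also that from (\ref{ec6}), $Q(\nabla,g)(u,v,w) = \rho(u)(g(v,w)) - g(\nabla_u v, w) - g(v,\nabla_u w)$, while from (\ref{ec8}) and (\ref{ec16}), $g\bigl(^L T(\nabla)(v,w),u\bigr) = g(\nabla_v w, u) - g(\nabla_w v, u) - g([v,w]_E, u) + g(A(v,w), u)$. After passing to a local $E$-frame, $g\bigl(^L T(\nabla)(X_b,X_c),X_a\bigr) = \ ^L T(\nabla)^f_{\ bc} g_{fa}$ and $Q(\nabla,g)(X_a,X_b,X_c) = Q(\nabla,g)_{abc}$, so it suffices to establish the $E$-tensor identity $Q(\nabla,g)(u,v,w) = -g\bigl(^L T(\nabla)(v,w),u\bigr)$.

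First I would extract $Q(\nabla,g)$. Adding (\ref{ec34}) for the arguments $(u,v,w)$ to the same relation with $v$ and $w$ interchanged, the anchor terms $\rho(v)(g(u,w))$ and $\rho(w)(g(u,v))$ cancel in pairs, as do the bracket terms $g([u,w]_E,v)$ and $g([u,v]_E,w)$; since $[\cdot,\cdot]_E$ need not be antisymmetric, the surviving bracket terms combine into the \emph{symmetric} expression $g([v,w]_E + [w,v]_E, u)$. The result is $2\bigl[g(\nabla_u v, w) + g(v, \nabla_u w)\bigr] + K(\nabla,g,L)(u,v,w) + K(\nabla,g,L)(u,w,v) = 2\rho(u)(g(v,w)) - g([v,w]_E + [w,v]_E, u)$, whence $Q(\nabla,g)(u,v,w) = \tfrac12 g([v,w]_E + [w,v]_E, u) + \tfrac12\bigl[K(\nabla,g,L)(u,v,w) + K(\nabla,g,L)(u,w,v)\bigr]$; in the abbreviation above the last bracket equals $-g(A(v,w) + A(w,v), u)$.

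Next comes the torsion. Subtracting (\ref{ec34}) for $(w,v,u)$ from the one for $(v,w,u)$, every anchor and bracket term cancels except the last, leaving $2\bigl[g(\nabla_v w, u) - g(\nabla_w v, u)\bigr] = g([v,w]_E - [w,v]_E, u) - K(\nabla,g,L)(v,w,u) + K(\nabla,g,L)(w,v,u)$. Feeding this into the displayed expression for $g\bigl(^L T(\nabla)(v,w),u\bigr)$ gives $g\bigl(^L T(\nabla)(v,w),u\bigr) = -\tfrac12 g([v,w]_E + [w,v]_E, u) - \tfrac12\bigl[K(\nabla,g,L)(v,w,u) - K(\nabla,g,L)(w,v,u)\bigr] + g(A(v,w), u)$, and again the $K$-difference equals $g(A(v,w) - A(w,v), u)$. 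Adding this to the formula for $Q(\nabla,g)(u,v,w)$, the symmetric-bracket terms cancel and what remains is $\tfrac12\bigl(-g(A(v,w)+A(w,v),u)\bigr) - \tfrac12 g(A(v,w)-A(w,v),u) + g(A(v,w),u) = 0$, which is the claim; evaluating on $(X_a,X_b,X_c)$ produces (\ref{ec38}).

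I expect the only real difficulty to be bookkeeping: one must keep track of the fact that $[\cdot,\cdot]_E$ is not antisymmetric, so the bracket contributions appear in the combinations $[v,w]_E \pm [w,v]_E$ rather than simply cancelling or doubling, and the three arguments of $K$ play genuinely different roles, so the cancellation among the $K$-terms is not invariant under arbitrary permutations. A purely component-level alternative --- substituting (\ref{ec37}) into (\ref{ec7}) and into (\ref{ec18}) and cancelling term by term --- also works and could serve as an independent check, though it is considerably more laborious because of the index contractions carried by the $L$-terms.
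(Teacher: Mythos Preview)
Your proof is correct and takes a cleaner, coordinate-free route than the paper. The paper simply substitutes the component formula (\ref{ec37}) for $\Gamma(^K\nabla)^a_{\ bc}$ into the general expressions (\ref{ec7}) and (\ref{ec18}) for $Q$ and $^LT$, obtaining the explicit component formulas (\ref{ec39}) and (\ref{ec40}) and then observing (\ref{ec38}) by inspection. You instead work directly with the defining Koszul identity (\ref{ec34}): forming the sum over $(u,v,w)$ and $(u,w,v)$ to isolate $Q$, and the difference over $(v,w,u)$ and $(w,v,u)$ to isolate the torsion, so that all anchor and mixed bracket terms cancel and only the symmetric combination $[v,w]_E+[w,v]_E$ survives, which then cancels between the two pieces. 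Your argument makes transparent that the result is a purely algebraic consequence of the modified Koszul formula, with no appeal to regularity, the pre-Leibniz condition, or a locality projector; the paper's component computation has the compensating advantage of producing the explicit expressions (\ref{ec39})--(\ref{ec40}), which it reuses later (e.g.\ in Corollary \ref{cc7} and Proposition \ref{pd5}).
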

\begin{proof}
By the equation (\ref{ec37}), $E$-torsion and $E$-non-metricity components of an $E$-Koszul connection can be evaluated as
\begin{equation} ^L T(^K \nabla)^a_{\ b c} = \frac{1}{2} \left[ \Gamma(^K \nabla)^e_{\ d c} L^{a d}_{\ \ e b} + \Gamma(^K \nabla)^e_{\ d b} L^{a d}_{\ \ e c} - \gamma^a_{\ b c} - \gamma^a_{\ c b} \right],
\label{ec39}
\end{equation}
\begin{equation} Q(^K \nabla, g) = \frac{1}{2} \left[ \gamma^f_{\ b c} + \gamma^f_{\ c b} - \Gamma(^K \nabla)^e_{\ d c} L^{f d}_{\ \ e b} - \Gamma(^K \nabla)^e_{\ d b} L^{f d}_{\ \ e c}  \right] g_{f a}.
\label{ec40}
\end{equation}
Hence, they satisfy the equation (\ref{ec38}).
\end{proof}
\noindent This result shows that conditions for being $E$-torsion-free and $E$-metric-$g$-compatible are not independent from each other for an $E$-Koszul connection. Moreover, if an $E$-Koszul connection is $E$-torsion-free, then it is automatically an $E$-Levi-Civita connection. This hidden property is valid for usual Levi-Civita connections by the theorem (\ref{tc1}). Note that this $E$-torsion is symmetric in the components $b$ and $c$, which is impossible for the usual case except for the torsion-free case. 

\begin{Definition} An almost-Leibniz (respectively pre-Leibniz) algebroid $(E, \rho, [\cdot,\cdot]_E)$ is called an almost-Lie (respectively pre-Lie) algebroid if $[\cdot,\cdot]_E$ is anti-symmetric \cite{16}. Moreover, if the Leibniz identity (\ref{ec20}) holds, it coincides with the Jacobi identity, and $(E, \rho, [\cdot,\cdot]_E)$ is called a Lie algebroid. 
\label{dc15}
\end{Definition}

\begin{Proposition} Any almost-Lie algebroid $(E, \rho, [\cdot,\cdot]_E)$ becomes a local almost-Leibniz algebroid with any locality operator $L \in [0]$.
\end{Proposition}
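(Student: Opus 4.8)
The plan is to verify the two defining requirements of a local almost-Leibniz algebroid in turn: first that $(E,\rho,[\cdot,\cdot]_E)$ is an almost-Leibniz algebroid, i.e.\ the right-Leibniz rule (\ref{ec10}) holds, and second that for every $L\in[0]$ the left-Leibniz rule (\ref{ec11}) is satisfied. The first point is immediate from Definition \ref{dc15}: an almost-Lie algebroid is by definition an almost-Leibniz algebroid with anti-symmetric bracket, so (\ref{ec10}) is already part of the hypothesis and nothing needs to be checked.

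For the second point, I would exploit anti-symmetry of $[\cdot,\cdot]_E$ to reduce the left-Leibniz rule to the right-Leibniz rule. Writing $[fu,v]_E=-[v,fu]_E$ and applying (\ref{ec10}) to the right-hand side gives $[fu,v]_E=-\rho(v)(f)u-f[v,u]_E=-\rho(v)(f)u+f[u,v]_E$, using anti-symmetry once more in the last step. This is exactly (\ref{ec11}) with the locality operator taken to be the zero map, $L=0$; in particular the zero map is trivially $C^\infty(M,\mathbb{R})$-multilinear, so $(E,\rho,[\cdot,\cdot]_E,0)$ is a local almost-Leibniz algebroid.

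It then remains to observe that the same conclusion holds for an arbitrary $L\in[0]$, not merely for $L=0$. By Definition \ref{dc6}, $L\in[0]$ means that $L$ is $C^\infty(M,\mathbb{R})$-multilinear and locally equivalent to $0$, i.e.\ $L(Df,u,v)=0$ for all $f\in C^\infty(M,\mathbb{R})$ and $u,v\in\mathfrak{X}(E)$. Since the only occurrence of the locality operator in (\ref{ec11}) is through the term $L(Df,u,v)$, the computation of the previous paragraph applies verbatim with $L$ in place of $0$, and (\ref{ec11}) is satisfied. Hence $(E,\rho,[\cdot,\cdot]_E,L)$ is a local almost-Leibniz algebroid for every $L\in[0]$, and one may as well record it as $(E,\rho,[\cdot,\cdot]_E,[0])$.

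There is no substantive obstacle in this argument; the only point deserving care is not to conflate ``the locality operator equals $0$'' with ``the locality operator lies in $[0]$''. The statement asserts the latter, for every representative of the class, and this is legitimate precisely because the left-Leibniz rule constrains $L$ only through its values on the image of the coboundary map $D$, where every element of $[0]$ vanishes by construction.
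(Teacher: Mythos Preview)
Your proof is correct and follows essentially the same approach as the paper: use anti-symmetry to write $[fu,v]_E=-[v,fu]_E$, apply the right-Leibniz rule, and use anti-symmetry again to obtain (\ref{ec11}) with vanishing locality term. The paper's proof records only this computation (ending with ``$+\,0$''), whereas you additionally spell out why any $L\in[0]$ works; this extra paragraph is a welcome clarification but not a different argument.
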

\begin{proof} The bracket is anti-symmetric, so
\begin{align} [f u, v]_E &= - [v, f u]_E \nonumber\\
&= - \left\{ \rho(v)(f) u + f [v, u]_E \right\} \nonumber\\
&= - \rho(v)(f) u + f [u, v]_E + 0. \nonumber
\end{align}
\label{pc10}
\end{proof}

\noindent One should note that $L = 0$ does not mean that the bracket is anti-symmetric. Moreover, in general there is no reason to expect that 0 is the only element in $[0]$, as there would not be a local $E$-coframe of the type $(Dx^i)$ for some $\{ x^i \} \subset C^{\infty}(M, \mathbb{R})$. Nevertheless, recall that this was the case for the tangent bundle.

\begin{Corollary} $E$-Levi-Civita connections exist for almost-Lie algebroids endowed with a null locality operator.
\label{cc5}
\end{Corollary}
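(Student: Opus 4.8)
The plan is to produce a concrete $E$-Levi-Civita connection rather than argue existence abstractly, and the natural candidate is an $E$-Koszul connection. Note first that the surjectivity argument of Proposition \ref{pc6} is \emph{not} available here: for $L = 0$ its hypothesis (\ref{ec33}) would demand $0 = Z(v,u) \pm Z(u,v)$ for every $C^{\infty}(M,\mathbb{R})$-bilinear $Z$, which is false. So instead I would fix an $E$-metric $g$ (which exists, being merely a fibre-wise metric on $E$, cf.\ the footnote to Definition \ref{dc1}) and invoke the $E$-Koszul connection of Definition \ref{dc14}.

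The first step is to observe that the correction term trivialises when $L=0$. By Proposition \ref{pc8}, every summand of the map $K(\nabla, g, L)$ of (\ref{ec36}) carries a factor of the form $L(e^a, \cdot, \cdot)$, so $K({}^K\nabla, g, 0) = 0$. Hence, applying Proposition \ref{pc7} with $\mathcal{K} = K = 0$, the defining equation (\ref{ec34}) of an $E$-Koszul connection reduces to the bare Koszul formula (i.e.\ (\ref{eb17}) with $U(\cdot)$ replaced by $\rho(u)(\cdot)$ and $[\cdot,\cdot]$ by $[\cdot,\cdot]_E$), and Proposition \ref{pc7} guarantees that this does define a linear $E$-connection ${}^K\nabla$. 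In particular an $E$-Koszul connection exists in this setting.

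The second step is to read off its $E$-torsion and $E$-non-metricity from the explicit component formulas already established. Setting $L = 0$ in (\ref{ec39}) gives ${}^{0}T({}^K\nabla)^a_{\ bc} = -\tfrac{1}{2}\bigl(\gamma^a_{\ bc} + \gamma^a_{\ cb}\bigr)$, and setting $L = 0$ in (\ref{ec40}) gives $Q({}^K\nabla, g)_{abc} = \tfrac{1}{2}\bigl(\gamma^f_{\ bc} + \gamma^f_{\ cb}\bigr)g_{fa}$, the two being consistent with (\ref{ec38}). Since $(E, \rho, [\cdot,\cdot]_E)$ is an almost-Lie algebroid its bracket is anti-symmetric, so the $E$-anholonomy coefficients obey $\gamma^a_{\ bc} = -\gamma^a_{\ cb}$; therefore $\gamma^a_{\ bc} + \gamma^a_{\ cb}$ vanishes identically, and with it both $E$-torsion and $E$-non-metricity. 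Thus ${}^K\nabla$ is $E$-torsion-free and $E$-metric-$g$-compatible, i.e.\ an $E$-Levi-Civita connection by Definition \ref{dc13}, which proves existence. (No uniqueness is claimed, and indeed none should be expected in general; cf.\ the remark following Proposition \ref{pc10}.) There is no real obstacle here once the earlier machinery is in place: the one point needing care is to notice that $L = 0$ kills the obstruction term $K$, legitimising the naive Koszul formula, after which the $E$-torsion and $E$-non-metricity collapse to purely anholonomic expressions that vanish by anti-symmetry of the bracket.
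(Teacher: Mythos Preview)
Your proof is correct and arguably cleaner than the paper's, but it takes a genuinely different route. The paper's proof of this corollary \emph{does} invoke the surjectivity argument of Proposition~\ref{pc6}: the key observation is that for almost-Lie brackets the target affine space $^{[0]}\mathfrak{B}_A$ is modeled on \emph{anti-symmetric} $(1,2)$-type $E$-tensors, so one need only run the $+$ case of (\ref{ec33}) with anti-symmetric $Z$, for which both sides vanish. Thus your dismissal of Proposition~\ref{pc6} is literally correct (its hypothesis as written fails for general $Z$ when $L=0$) but overlooks the restricted version the paper actually uses.

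Your approach instead constructs the $E$-Koszul connection explicitly and reads off vanishing torsion and non-metricity from (\ref{ec39})--(\ref{ec40}) via the anti-symmetry of the anholonomy coefficients. This is essentially the argument the paper gives later for Corollary~\ref{cc8} (after Theorem~\ref{tc2}), which moreover establishes \emph{uniqueness}---so your parenthetical that uniqueness ``should not be expected'' is off the mark in this specific setting. What your approach buys is concreteness and an explicit Levi-Civita connection; what the paper's approach buys is a direct tie-in to the affine picture of Proposition~\ref{pc2} and the torsion-free/surjectivity correspondence, though at the cost of leaving the passage from torsion-free to full Levi-Civita somewhat implicit.
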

\begin{proof} Almost-Lie brackets are anti-symmetric, so by the proposition (\ref{pc4}), their set defines an affine space $\mathfrak{B}_A$, which is modeled on the anti-symmetric $(1, 2)$-type $E$-tensors. For the locality structure $[0]$, $^0 \mathfrak{T}: \mathfrak{C} \to \ ^{[0]} \mathfrak{B}_A$ is an affine surjection by the $+$ case of the proposition (\ref{pc7}) as one should consider $(1, 2)$-type anti-symmetric $E$-tensor $Z$ in the equation (\ref{ec33}) so that both sides vanish. Hence, $E$-torsion-free connections exist.
\end{proof}

\noindent Due to the anti-symmetry of the bracket, for an almost-Lie algebroid the anholonomy coefficients satisfy 
\begin{equation} \gamma^a_{\ b c} = - \gamma^a_{\ b c},
\label{ec41}
\end{equation}
for every local $E$-frame $(X_a)$.

\begin{Definition} A local almost-Leibniz algebroid whose $E$-anholonomy coefficients for every local $E$-frame satisfy (\ref{ec41}) will be referred as of Lie-type. Moreover, a local $E$-frame that satisfies (\ref{ec41}) will also be referred as a local $E$-frame of Lie-type.
\label{dc16}
\end{Definition}

\noindent The anti-symmetry of the anholonomy coefficients might be useful. For example, on a Lie-type local almost-Leibniz algebroid, $E$-torsion components (\ref{ec39}) of an $E$-Koszul connection simplify as
\begin{equation} ^L T(^K \nabla)^a_{\ b c} = \frac{1}{2} \left[ \Gamma(^K \nabla)^e_{\ d c} L^{a d}_{\ \ e b} + \Gamma(^K \nabla)^e_{\ d b} L^{a d}_{\ \ e c} \right].
\label{ec42}
\end{equation}
Hence the condition for an $E$-Koszul connection to be an $E$-Levi-Civita connection becomes $\Gamma(^K \nabla)^e_{\ d c} L^{a d}_{\ \ e b} = - \Gamma(^K \nabla)^e_{\ d b} L^{a d}_{\ \ e c}$. Note that this is similar to the general condition for $E$-metric-$g$-compatibility on a $g$-orthonormal frame, namely $\Gamma(\nabla)^d_{\ a b} g_{d c} = - \Gamma(\nabla)^d_{\ a c} g_{d b}$ by the equation (\ref{ec7}).

Trivially, all almost-Lie algebroids are of Lie-type, but there are Lie-type local almost-Leibniz algebroids which are not almost-Lie algebroids. Yet, by some additional assumptions, being of Lie-type can be made into a necessary and sufficient condition for being an almost-Lie algebroid.

\begin{Proposition} Let $(E, \rho, [\cdot,\cdot]_E, L)$ be a local almost-Leibniz algebroid such that $L$ satisfies the following property on a local $E$-frame $(X_a)$
\begin{equation} u^a L(D v^b, X_b, X_a) + v^b L(D u^a, X_a, X_b) = 0,
\label{ec43}
\end{equation}
for all $u^a, v^b \in C^{\infty}(M, \mathbb{R})$. Then $(E, \rho, [\cdot,\cdot]_E, L)$ is of Lie-type if and only if $(E, \rho, [\cdot,\cdot]_E)$ is an almost Lie algebroid.
\label{pc11}
\end{Proposition}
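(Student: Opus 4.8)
The plan is to prove the two implications of the biconditional separately. The direction ``$(E,\rho,[\cdot,\cdot]_E)$ almost-Lie $\Rightarrow$ of Lie-type'' is immediate and does not even use the hypothesis (\ref{ec43}): if $[\cdot,\cdot]_E$ is anti-symmetric, then on any local $E$-frame $(X_a)$ we get $\gamma^a_{\ bc} = \langle e^a,[X_b,X_c]_E\rangle = -\langle e^a,[X_c,X_b]_E\rangle = -\gamma^a_{\ cb}$, which is precisely the defining condition of a local almost-Leibniz algebroid of Lie-type (Definition \ref{dc16}).

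For the converse I would fix a local $E$-frame $(X_a)$, take arbitrary $u = u^aX_a$, $v = v^bX_b$, and compute $[u,v]_E + [v,u]_E$, the goal being to show it vanishes identically. The first step is to expand $[u,v]_E$ by applying the left-Leibniz rule (\ref{ec11}) to extract the functions $u^a$ from the first slot and then the right-Leibniz rule (\ref{ec10}) to extract the $v^b$ from the second slot; using $C^{\infty}(M,\mathbb{R})$-linearity of $L$ in its last argument this yields
\[
[u,v]_E = -\rho(v)(u^a)X_a + u^a\rho(X_a)(v^b)X_b + u^av^b\,\gamma^c_{\ ab}X_c + v^b\,L(Du^a,X_a,X_b).
\]
The second step is to add this to the same expression with $u$ and $v$ interchanged; the four anchor terms cancel in two pairs once one uses the $C^{\infty}(M,\mathbb{R})$-linearity of $\rho$, namely $\rho(v)(u^a) = v^b\rho(X_b)(u^a)$ and $\rho(u)(v^b) = u^a\rho(X_a)(v^b)$, leaving
\[
[u,v]_E + [v,u]_E = u^av^b\bigl(\gamma^c_{\ ab}+\gamma^c_{\ ba}\bigr)X_c + u^a\,L(Dv^b,X_b,X_a) + v^b\,L(Du^a,X_a,X_b).
\]
The third step is to annihilate the two surviving groups of terms: the Lie-type hypothesis makes the anholonomy term vanish via $\gamma^c_{\ ab} = -\gamma^c_{\ ba}$, and the hypothesis (\ref{ec43}) is exactly the statement that the two remaining $L$-terms sum to zero. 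Since local $E$-frames cover $M$ and anti-symmetry of a bracket is a pointwise condition, this forces $[\cdot,\cdot]_E$ to be anti-symmetric, i.e.\ $(E,\rho,[\cdot,\cdot]_E)$ is an almost-Lie algebroid (Definition \ref{dc15}).

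I expect the only genuine work to be the careful bookkeeping in the frame expansion: keeping track of the non-tensorial pieces produced by the two Leibniz rules and verifying that the anchor contributions cancel exactly, so that what is left is organised precisely into the combination appearing in (\ref{ec43}). Conceptually there is no obstacle, since (\ref{ec43}) is tailored to kill exactly the residue that survives after the Lie-type assumption removes the anholonomy term. It is also worth remarking that (\ref{ec43}) cannot be dropped — a generic locality operator need not satisfy it — so being of Lie-type is in general strictly weaker than being almost-Lie, as already noted in the discussion preceding the statement.
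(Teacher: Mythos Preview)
Your proposal is correct and follows essentially the same approach as the paper: both expand $[u,v]_E$ on a local $E$-frame via the left- and right-Leibniz rules, then use the Lie-type hypothesis to kill the anholonomy contribution and (\ref{ec43}) to kill the residual $L$-terms. The only cosmetic difference is that you compute $[u,v]_E + [v,u]_E$ and show it vanishes, whereas the paper rewrites $[u,v]_E$ directly as $-[v,u]_E$; the underlying computation is identical.
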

\begin{proof} Let $(E, \rho, [\cdot,\cdot]_E, L)$ is an almost Lie algebroid, then by anti-symmetry, for any local $E$-frame $(X_a)$
\begin{equation} [X_a, X_b]_E = \gamma^c_{\ a b} X_c = - [X_b, X_a] = - \gamma^c_{\ ba} X_c, \nonumber
\end{equation}
and by the linear independence $\gamma^c_{\ a b} = - \gamma^c_{\ b a}$.  Hence, without assuming the equation (\ref{ec43}), $(E, \rho, [\cdot,\cdot]_E, L)$ is of Lie-type. Now assume that $(E, \rho, [\cdot,\cdot]_E, L)$ is of Lie-type such that the equation (\ref{ec43}) is satisfied. Let $u, v \in \mathfrak{X}(E)$, and on a local $E$-frame $(X_a)$, let $u = u^a X_a, v = v^b X_b$ for some $u^a, v^b \in C^{\infty}(M, \mathbb{R})$
\begin{align} 
[u, v]_E &= [u^a X_a, v^b X_b]_E \nonumber\\
&= - v^b \rho(X_b)(u^a) X_a + u^a \rho(X_a)(v^b) X_b + u^a v^b [X_a, X_b] + v^b L(D u^a, X_a, X_b) \nonumber\\
&= - v^b \rho(X_b)(u^a) X_a + u^a \rho(X_a)(v^b) X_b - u^a v^b [X_b, X_a] - u^a L(D v^b, X_b, X_a) \nonumber\\
&= -[v, u]_E, \nonumber
\end{align}
where while reversing the signs, both assumptions are used. Hence the bracket is anti-symmetric so that $(E, \rho, [\cdot,\cdot]_E, L)$ is an almost-Lie algebroid.
\end{proof}

\noindent Note that if $L$ is symmetric in $u$ and $v$, then the condition (\ref{ec43}) is equivalent to that $L \in [0]$ as $D$ is a derivation on smooth functions, and $L$ is $C^{\infty}(M, \mathbb{R})$-multilinear.

\begin{Corollary} Given $L \in [0], (E, \rho, [\cdot,\cdot]_E, L)$ is of Lie-type if and only if $(E, \rho, [\cdot,\cdot]_E)$ is an almost Lie algebroid.
\label{cc6}
\end{Corollary}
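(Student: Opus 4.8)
The plan is to deduce the corollary directly from Proposition~\ref{pc11} by verifying that its running hypothesis~(\ref{ec43}) holds automatically as soon as $L \in [0]$. Recall from Definition~\ref{dc6} that $L \in [0]$ means that $L$ is locally equivalent to the zero map, i.e.\ $L(Df, u, v) = 0$ for every $f \in C^{\infty}(M,\mathbb{R})$ and all $u, v \in \mathfrak{X}(E)$.

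First I would fix a local $E$-frame $(X_a)$ and arbitrary component functions $u^a, v^b \in C^{\infty}(M,\mathbb{R})$. Since $Du^a$ and $Dv^b$ both lie in the image of the coboundary map $D$, the preceding remark applies separately to each term appearing in~(\ref{ec43}), giving $L(Dv^b, X_b, X_a) = 0$ and $L(Du^a, X_a, X_b) = 0$. Hence the left-hand side of~(\ref{ec43}) vanishes identically, so $(E, \rho, [\cdot,\cdot]_E, L)$ satisfies the hypothesis of Proposition~\ref{pc11}.

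With~(\ref{ec43}) established, the conclusion is immediate: Proposition~\ref{pc11} yields that $(E, \rho, [\cdot,\cdot]_E, L)$ is of Lie-type if and only if $(E, \rho, [\cdot,\cdot]_E)$ is an almost-Lie algebroid, which is precisely the assertion of the corollary.

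I do not anticipate a genuine obstacle here, since the statement is essentially a specialization of Proposition~\ref{pc11}; the only point I would be careful to state clearly in the write-up is the logical direction. What is used is the one-sided implication that $L \in [0]$ forces~(\ref{ec43}) — valid because each of the two summands vanishes individually on exact one-forms. The converse fails in general and holds only under the additional assumption that $L$ is symmetric in its last two slots, as noted in the remark following Proposition~\ref{pc11}; that symmetry is not required for the present argument.
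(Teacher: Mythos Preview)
Your proposal is correct and mirrors the paper's own proof essentially verbatim: both observe that $L \in [0]$ forces each summand in~(\ref{ec43}) to vanish because its first argument lies in the image of $D$, and then invoke Proposition~\ref{pc11}. Your added remark on the converse requiring symmetry is a faithful gloss on the paper's comment preceding the corollary and does not alter the argument.
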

\begin{proof} For any $L \in [0], L(Df, u, v) = 0$ for all $f \in C^{\infty}(M, \mathbb{R}), u, v \in \mathfrak{X}(E)$. Hence, in particular both terms in the right-hand side of the equation (\ref{ec43}) vanish, and the equation is satisfied.
\end{proof}

\noindent This little corollary is the reason behind the fact that one would not have to distinguish between the anti-symmetry of the bracket and anti-symmetry of the anholonomy coefficients when dealing with Lie algebroids, and in particular the tangent bundle.

\begin{Corollary} Let $(E, \rho, [\cdot,\cdot]_E, 0)$ be a local almost-Leibniz algebroid which is not of Lie-type. Then, there is no $E$-Levi-Civita connections, and there is no holonomic local $E$-frames.
\label{cc7}
\end{Corollary}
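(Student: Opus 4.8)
The plan is to prove both claims by contradiction, in each case reducing to Corollary \ref{cc6}: since $0 \in [0]$, that corollary says $(E, \rho, [\cdot,\cdot]_E, 0)$ is of Lie-type if and only if $(E, \rho, [\cdot,\cdot]_E)$ is an almost-Lie algebroid, i.e. if and only if the bracket is anti-symmetric. So it suffices, in each case, to extract anti-symmetry of $[\cdot,\cdot]_E$ from the assumed object.

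\textbf{No $E$-Levi-Civita connection.} Suppose $\nabla$ were an $E$-Levi-Civita connection for some $E$-metric $g$. Because the fixed locality representative is $L = 0$, the correction term $L(e^a, \nabla_{X_a} u, v)$ in (\ref{ec16}) vanishes, so the $E$-torsion operator reduces to the pseudo-$E$-torsion map, $^0T(\nabla)(u, v) = \nabla_u v - \nabla_v u - [u, v]_E$. Imposing $E$-torsion-freeness then gives $[u, v]_E = \nabla_u v - \nabla_v u$ for all $u, v \in \mathfrak{X}(E)$, which is visibly anti-symmetric under $u \leftrightarrow v$. Hence $(E, \rho, [\cdot,\cdot]_E)$ is an almost-Lie algebroid (Definition \ref{dc15}), and by Corollary \ref{cc6} it is of Lie-type --- contradicting the hypothesis. (Note this in fact shows that no $E$-torsion-free linear $E$-connection exists at all, so a fortiori no $E$-Levi-Civita connection, for any $E$-metric $g$.)

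\textbf{No holonomic local $E$-frame.} Suppose $(X_a)$ were a holonomic local $E$-frame over an open set $U$, so that $\gamma^a_{\ b c} = 0$ on $U$. With $L = 0$ the symmetric part $S(u, v) := [u, v]_E + [v, u]_E$ of the bracket is $C^\infty(M, \mathbb{R})$-bilinear --- combine the right-Leibniz rule (\ref{ec10}) with the left-Leibniz rule (\ref{ec11}), in which the $L$-term drops out --- hence defines a $(1, 2)$-type $E$-tensor whose components in $(X_a)$ are $\gamma^c_{\ a b} + \gamma^c_{\ b a} = 0$; so $S$ vanishes over $U$, i.e. $(E|_U, \rho|_U, [\cdot,\cdot]_E|_U)$ is an almost-Lie algebroid, hence of Lie-type by Corollary \ref{cc6}, again contradicting the hypothesis. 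Alternatively one may route this through the first claim: over $U$ an $E$-metric always exists, so an $E$-Koszul connection $^K\nabla$ exists there, and by (\ref{ec39})--(\ref{ec40}) with $L = 0$ and $\gamma^a_{\ b c} = 0$ one gets $^0T(^K\nabla) = 0$ and $Q(^K\nabla, g) = 0$ on $U$; thus $^K\nabla|_U$ would be an $E$-Levi-Civita connection for $E|_U$, which the first claim forbids.

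\textbf{Anticipated obstacle.} The manipulations are all routine once one observes that $L = 0$ annihilates every correction term, so the genuine point requiring care is the localization: ``not of Lie-type'' only asserts that \emph{some} local $E$-frame violates (\ref{ec41}), whereas a holonomic frame lives over its own chart $U$; one must therefore either compare the offending frame and the holonomic frame on a common domain, or read the hypothesis as imposed over the chart $U$ in question. The only other small verification is that $S$ is honestly $C^\infty(M, \mathbb{R})$-bilinear, so that its vanishing may be tested in a single frame; this follows immediately from Definitions \ref{dc4} and \ref{dc5} with $L = 0$.
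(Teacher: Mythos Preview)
Your argument for the first claim is correct and takes a more direct route than the paper. The paper introduces the $E$-Koszul connection $^K\nabla$, applies (\ref{ec32}) with $\nabla' = {}^K\nabla$ and $L = 0$ to obtain
\[
\Delta(\nabla, {}^K\nabla)^a_{\ bc} - \Delta(\nabla, {}^K\nabla)^a_{\ cb} = -\tfrac{1}{2}\bigl(\gamma^a_{\ bc} + \gamma^a_{\ cb}\bigr),
\]
and notes the anti-symmetric/symmetric mismatch. You instead bypass the $E$-Koszul connection entirely: with $L = 0$, torsion-freeness reads $[u,v]_E = \nabla_u v - \nabla_v u$, which is manifestly anti-symmetric, and Corollary~\ref{cc6} finishes. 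This is cleaner, coordinate-free, and yields the stronger conclusion you note (no $E$-torsion-free connection exists at all).

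For the second claim, your tensoriality argument for $S(u,v) = [u,v]_E + [v,u]_E$ is correct and makes explicit something the paper uses only implicitly; your alternative route via (\ref{ec39})--(\ref{ec40}) is essentially the paper's own argument. The localization concern you raise in your ``anticipated obstacle'' is legitimate and applies equally to the paper's proof: both arguments show only that the bracket is anti-symmetric over the domain $U$ of the putative holonomic frame, which contradicts ``not of Lie-type'' only if the frame witnessing the failure of (\ref{ec41}) has domain meeting $U$. Since $S$ is a global $(1,2)$-type $E$-tensor, ``of Lie-type'' is equivalent to $S \equiv 0$; a holonomic frame over $U$ forces $S|_U = 0$, but says nothing about $S$ outside $U$. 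The statement is best read locally, as you suggest.
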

\begin{proof} Let $\nabla$ be any $E$-Levi-Civita connection. For the case $L = 0$, the equation (\ref{ec31}) with $\nabla' = \ ^K \nabla$ yields
\begin{equation} \Delta(\nabla, \ ^K \nabla)^a_{\ b c} - \Delta(\nabla, \ ^K \nabla)^a_{\ c b} = - \frac{1}{2} ( \gamma^a_{\ b c} + \gamma^a_{\ c b}), \nonumber
\end{equation}
by the equation (\ref{ec39}). Note that the left-hand side is anti-symmetric in $b$ and $c$ while the right-hand side is symmetric. This forces both sides to vanish. On the other hand, the bracket is not anti-symmetric by the corollary (\ref{cc6}), so the $E$-anholonomy coefficients are not anti-symmetric in lower indices. This creates a contradiction which implies that there is no $E$-Levi-Civita connection. Existence of a holonomic local $E$-frame imply that $E$-torsion and $E$-non-metricity tensors of any $E$-Koszul connection vanish. Yet this is not the case for any non-holonomic frame, which is again contradictory. Therefore, there cannot exist any holonomic frame if $E$ is not of Lie-type.
\end{proof}

\begin{Theorem} On a local almost-Leibniz algebroid equipped with a null locality operator $(E, \rho, [\cdot,\cdot]_E, 0)$, there exists a unique $E$-Koszul connection $\ ^K \nabla$ defined by (\ref{dc14}). Moreover the components of any $E$-connection $\nabla$ can be decomposed as
\begin{align} \Gamma(\nabla)^a_{\ b c} = & \ \Gamma(^K \nabla)^a_{\ b c} + \frac{1}{2} g^{a d} \Big[- Q(\nabla, g)_{b d c} + Q(\nabla, g)_{d c b} - Q(\nabla, g)_{c b d} \nonumber\\
& \qquad \qquad \qquad \qquad - g_{e c} \ ^0 T(\nabla)^e_{\ b d} + g_{e d} \ ^0 T(\nabla)^e_{\ b c} - g_{e b} \ ^0 T(\nabla)^e_{\ c d} \Big].
\label{ec44} 
\end{align}
\label{tc2}
\end{Theorem}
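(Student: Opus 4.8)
The plan is to dispatch the two assertions in turn. For the existence and uniqueness of ${}^K\nabla$, the crucial observation is that when $L = 0$ the correction term $K(\nabla, g, L)$ of (\ref{ec36}) vanishes identically: each of its three summands has the shape $g(L(e^a, \cdot, \cdot), \cdot)$, and $L = 0$ annihilates it. Hence the defining relation (\ref{ec34}) for an $E$-Koszul connection collapses to the bare Koszul formula
\begin{align}
2 g({}^K\nabla_u v, w) &= \rho(u)(g(v, w)) + \rho(v)(g(u, w)) - \rho(w)(g(u, v)) \nonumber \\
& \quad - g([v, w]_E, u) - g([u, w]_E, v) + g([u, v]_E, w), \nonumber
\end{align}
for all $u, v, w \in \mathfrak{X}(E)$. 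By proposition (\ref{pc7}), applied with $\mathcal{K} = 0$ (which trivially obeys (\ref{ec35}) when $L = 0$), this right-hand side genuinely defines a linear $E$-connection; and since $g$ is non-degenerate, the scalars $g({}^K\nabla_u v, w)$, $w \in \mathfrak{X}(E)$, pin down ${}^K\nabla_u v$ uniquely. Thus ${}^K\nabla$ exists and is unique, its coefficients being (\ref{ec37}) specialised to $L = 0$.

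For the decomposition I would mirror the classical derivation of (\ref{eb19}). First I would isolate the algebraic uniqueness lemma: on $(E, \rho, [\cdot,\cdot]_E, 0)$ a linear $E$-connection is completely determined by the triple $(g, Q(\nabla, g), {}^0T(\nabla))$. Indeed, if $\nabla_1$ and $\nabla_2$ share these data, then by proposition (\ref{pc1}) the difference $\delta := \Delta(\nabla_1, \nabla_2)$ is a $(1,2)$-type $E$-tensor, and evaluating the relations (\ref{ec29}) of proposition (\ref{pc5}) for the two connections, with $L = 0$, yields $\delta^d_{\ ab} g_{dc} + \delta^d_{\ ac} g_{bd} = 0$ and $\delta^a_{\ bc} = \delta^a_{\ cb}$. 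Lowering the first index with $g$ and running the standard six-step cyclic relabeling — symmetry in the last two slots from the torsion relation, sign reversal under exchange of the first and third slots from the non-metricity relation — forces $\delta \equiv 0$. This step is purely algebraic and transcribes verbatim from the tangent-bundle case.

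Granting the lemma, it suffices to produce a single linear $E$-connection with the coefficients appearing on the right of (\ref{ec44}) and with $E$-non-metricity $Q(\nabla, g)$ and $E$-torsion ${}^0T(\nabla)$; that connection must then equal $\nabla$. I would take $\nabla'$ to be the map with $\Gamma(\nabla')^a_{\ bc}$ equal to the right-hand side of (\ref{ec44}); it is a bona fide linear $E$-connection because it differs from ${}^K\nabla$ by the $(1,2)$-type $E$-tensor $\tfrac12 g^{ad}[\,\cdots\,]$ built from $Q(\nabla, g)$, ${}^0T(\nabla)$ and $g$. Using the relations (\ref{ec29}) between $\nabla'$ and ${}^K\nabla$ with $L = 0$, together with the non-metricity and torsion of ${}^K\nabla$ recorded in (\ref{ec39}) and (\ref{ec40}) — which by proposition (\ref{pc9}) satisfy $Q({}^K\nabla, g)_{abc} = -{}^0T({}^K\nabla)^f_{\ bc} g_{fa}$ — one checks that the antisymmetric part of $\tfrac12 g^{ad}[\,\cdots\,]$ restores ${}^0T(\nabla)$ and the remaining part restores $Q(\nabla, g)$. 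Equivalently and more directly, one substitutes (\ref{ec7}) and (\ref{ec18}) with $L = 0$ into the bracket $\tfrac12 g^{ad}[\,\cdots\,]$ of (\ref{ec44}) and watches every term carrying a coefficient $\Gamma(\nabla)$ cancel in pairs except the combination $g^{ad} g_{ed} \Gamma(\nabla)^e_{\ bc} = \Gamma(\nabla)^a_{\ bc}$, while the surviving $\rho$-derivative and $E$-anholonomy terms reassemble, via (\ref{ec37}) with $L = 0$, into $-\Gamma({}^K\nabla)^a_{\ bc}$; adding the leading $\Gamma({}^K\nabla)^a_{\ bc}$ of (\ref{ec44}) then reproduces $\Gamma(\nabla)^a_{\ bc}$. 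The one genuine piece of labour here is precisely this index bookkeeping, which is the main obstacle but entirely mechanical; nothing conceptually new occurs beyond (\ref{eb19}), because for $L = 0$ all the locality-correction terms in (\ref{ec7}), (\ref{ec18}) and (\ref{ec37}) drop out and the component formulas become formally identical to the tangent-bundle ones under the replacement $X_a(\cdot) \mapsto \rho(X_a)(\cdot)$, with $\{\gamma^a_{\ bc}\}$ now the $E$-anholonomy coefficients.
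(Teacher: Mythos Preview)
Your argument is correct, and for the first assertion (existence and uniqueness of ${}^K\nabla$) it is essentially identical to the paper's: once $L=0$ kills the $K$-term, formula (\ref{ec37}) becomes an explicit expression with no $\Gamma({}^K\nabla)$ on the right, so non-degeneracy of $g$ finishes it.

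For the decomposition you take a genuinely different route. The paper \emph{derives} (\ref{ec44}) constructively via Schouten's trick: it writes the antisymmetric part of $\Gamma(\nabla)$ from the torsion identity $\Gamma^a_{\ bc}-\Gamma^a_{\ cb} = {}^0T^a_{\ bc}+\gamma^a_{\ bc}$, then forms the cyclic combination $Q_{abc}-Q_{bca}+Q_{cab}$, isolates the symmetric part $\Gamma^e_{\ ac}+\Gamma^e_{\ ca}$, and adds the two pieces. You instead \emph{verify} the stated formula, in two alternative ways: (i) a uniqueness lemma (the six-permutation argument that $\delta_{abc}$ symmetric in the last two slots and antisymmetric in the outer two must vanish) plus checking that the candidate connection has the prescribed $Q$ and ${}^0T$; or (ii) direct substitution of (\ref{ec7}) and (\ref{ec18}) into the bracket and cancellation of all $\Gamma(\nabla)$-terms save $2g_{ed}\Gamma^e_{\ bc}$. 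Both of your routes are sound---I checked the direct expansion and the surviving $\rho$-derivative and $\gamma$-terms do reassemble into $-\Gamma({}^K\nabla)^a_{\ bc}$ exactly as you claim. The paper's approach has the advantage of explaining \emph{where} the formula comes from rather than presupposing it, and it generalises more transparently if one later relaxes $L=0$; your uniqueness-plus-verification approach is cleaner once the target is on the page, and your observation that for $L=0$ the component formulas (\ref{ec7}), (\ref{ec18}), (\ref{ec37}) are formally the tangent-bundle ones under $X_a(\cdot)\mapsto\rho(X_a)(\cdot)$ is a nice conceptual shortcut the paper does not make explicit.
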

\begin{proof} For $L = 0$, the components of an $E$-Koszul connection (\ref{ec37}) yield
\begin{align} \Gamma(^K \nabla)^a_{\ b c} = & \frac{1}{2} g^{a d} \Bigl[ \rho(X_b) \left( g_{c d} \right) + \rho(X_c) \left( g_{b d} \right) - \rho(X_d) \left( g_{b c} \right) \nonumber\\
& \qquad \quad - \gamma^e_{\ c d} g_{e b} - \gamma^e_{\ b d} g_{e c} + \gamma^e_{\ b c} g_{e d} \Bigr]. \nonumber
\end{align}
As the $\Gamma(^K \nabla)$ dependence of the right-hand side vanishes, the solution for the $E$-connection coefficients is automatically unique. Moreover, one can use Schouten's trick \cite{12} exactly in the same way to find the symmetric and anti-symmetric parts (in lower indices) of the $E$-connection components. As $L = 0$, by the equation (\ref{ec18}), one has for the anti-symmetric part (times two)
\begin{equation} \Gamma(\nabla)^a_{\ b c} - \Gamma(\nabla)^a_{\ c b} = \ ^0 T(\nabla)^a_{\ b c} + \gamma^a_{\ b c}, \nonumber
\end{equation}
which is also true for the $E$-Koszul connection coefficients $\Gamma(^K \nabla)$ in particular. For the symmetric part, consider the following combination on a local $E$-frame $(X_a)$
\begin{align}
Q(^K \nabla, g)_{a b c} - Q(^K \nabla, g)_{b c a} + Q(^K \nabla, g)_{c a b} &= \rho(X_a)(g_{b c}) - \rho(X_b)(g_{c a}) + \rho(X_c)(g_{a b}) \nonumber\\
& \quad - g_{c d} \left[ \Gamma(^K \nabla)^d_{\ a b} - \Gamma(^K \nabla)^d_{\ a b} \right] \nonumber\\
& \quad  - g_{b d} \left[ \Gamma(^K \nabla)^d_{\ a c} + \Gamma(^K \nabla)^d_{\ c a} \right] \nonumber\\
& \quad - g_{a d} \left[ \Gamma(^K \nabla)^d_{\ c b} - \Gamma(^K \nabla)^d_{\ b c} \right] \nonumber\\
& = \rho(X_a)(g_{b c}) - \rho(X_b)(g_{c a}) + \rho(X_c)(g_{a b}) \nonumber\\
& \quad - g_{c d} \left[ ^0 T(^K \nabla)^d_{\ a b} - \gamma^d_{\ a b} \right] \nonumber\\
& \quad  - g_{b d} \left[ \Gamma(^K \nabla)^d_{\ a c} + \Gamma(^K \nabla)^d_{\ c a} \right] \nonumber\\
& \quad - g_{a d} \left[ ^0 T(^K \nabla)^d_{\ c b} - \gamma^d_{\ c b} \right]. \nonumber
\end{align}
After singling out $\Gamma(^K \nabla)^e_{\ a c} + \Gamma(^K \nabla)^e_{\ c a}$ and multiplying both sides by $g^{e b}$, this yields the symmetric part (times two)
\begin{align}
\Gamma(^K \nabla)^e_{\ a c} + \Gamma(^K \nabla)^e_{\ c a} &= g^{e b} \Big\{ - Q(^K \nabla, g)_{a b c} + Q(^K \nabla, g)_{b c a} - Q(^K \nabla, g)_{c a b} \nonumber\\
& \quad + \rho(X_a)(g_{b c}) - \rho(X_b)(g_{c a}) + \rho(X_c)(g_{a b}) \nonumber\\
& \quad - g_{c d} \left[ ^0 T(^K \nabla)^d_{\ a b} - \gamma^d_{\ a b} \right] - g_{a d} \left[ ^0 T(^K \nabla)^d_{\ c b} - \gamma^d_{\ b c} \right]. \Big\} \nonumber
\end{align}
Combining the symmetric and anti-symmetric parts, and using $g^{a d} g_{e d} = \delta^a_{\ e}$, one can get the desired result.
\end{proof}

\begin{Corollary} On an almost-Lie algebroid $(E, \rho, [\cdot,\cdot]_E, 0)$ equipped with an $E$-metric, there exists a unique $E$-Levi-Civita connection. 
\label{cc8}
\end{Corollary}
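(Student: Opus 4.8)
The plan is to show that the unique $E$-Koszul connection $^K\nabla$ furnished by Theorem~\ref{tc2} is itself the $E$-Levi-Civita connection, and that it is the only one. First I would handle existence. Since the locality operator here is the null one and $0\in[0]$, Corollary~\ref{cc6} applies: an almost-Lie algebroid is automatically of Lie-type, so the $E$-anholonomy coefficients satisfy $\gamma^a_{\ bc}=-\gamma^a_{\ cb}$ in every local $E$-frame. Setting $L=0$ in the $E$-torsion components (\ref{ec39}) of $^K\nabla$ gives $^0 T(^K\nabla)^a_{\ bc}=-\tfrac12(\gamma^a_{\ bc}+\gamma^a_{\ cb})$, which vanishes by antisymmetry; likewise (\ref{ec40}) with $L=0$ gives $Q(^K\nabla,g)_{abc}=\tfrac12(\gamma^f_{\ bc}+\gamma^f_{\ cb})g_{fa}=0$. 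Hence $^K\nabla$ is $E$-torsion-free and $E$-metric-$g$-compatible, i.e.\ an $E$-Levi-Civita connection in the sense of Definition~\ref{dc13}, so one exists. (Existence alone is also immediate from Corollary~\ref{cc5}, and the vanishing of both tensors is consistent with Proposition~\ref{pc9}, whose right-hand side is zero on a Lie-type algebroid with $L=0$.)

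For uniqueness I would let $\nabla$ be an arbitrary $E$-Levi-Civita connection, so that $Q(\nabla,g)=0$ and $^0 T(\nabla)=0$. Feeding these into the decomposition (\ref{ec44}) of Theorem~\ref{tc2} collapses the entire bracketed term, leaving $\Gamma(\nabla)^a_{\ bc}=\Gamma(^K\nabla)^a_{\ bc}$ in every local $E$-frame; therefore $\nabla={}^K\nabla$. Combining the two parts gives exactly the claim: the $E$-Levi-Civita connection exists and coincides with the unique $E$-Koszul connection.

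There is no genuine obstacle here beyond assembling the earlier results in the right order. The only point requiring a moment's care is the passage from ``almost-Lie'' to ``of Lie-type'' via Corollary~\ref{cc6}, since it is precisely the antisymmetry of the $E$-anholonomy coefficients that forces both the $E$-non-metricity $Q(^K\nabla,g)$ and the would-be $E$-torsion of $^K\nabla$ to vanish; everything else is a direct substitution into formulas (\ref{ec39}), (\ref{ec40}) and (\ref{ec44}) already established above.
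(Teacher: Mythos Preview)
Your proposal is correct and follows essentially the same approach as the paper: invoke Corollary~\ref{cc6} to obtain Lie-type, then substitute $L=0$ into (\ref{ec39}) and (\ref{ec40}) to see that the unique $E$-Koszul connection of Theorem~\ref{tc2} is $E$-Levi-Civita. Your uniqueness argument via the decomposition (\ref{ec44}) is exactly the content implicit in the paper's one-line conclusion that ``$^K\nabla$ coincides with the unique $E$-Levi-Civita connection,'' only spelled out more carefully.
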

\begin{proof}
If $E$ is an almost-Lie algebroid, then it is of Lie-type by the corollary (\ref{cc6}). Hence as $L = 0$, the $E$-torsion (\ref{ec40}) and $E$-non-metricity (\ref{ec41}) components of the unique $E$-Koszul connection $^K \nabla$ vanish so that $^K \nabla$ coincides with the unique $E$-Levi-Civita connection.
\end{proof}
\noindent This corollary trivially implies the fundamental theorem of Riemannian geometry, as the tangent bundle is a Lie algebroid.

An anchored vector bundle $(E, \rho)$ is said to be transitive if the anchor $\rho$ is surjective. A surjective vector bundle morphism is automatically of locally constant rank, so that $(E, \rho)$ is regular. Hence, one trivially has the following short exact sequence of vector bundles
\begin{equation} 0 \to ker(\rho) \xrightarrow{i} E \xrightarrow{\rho} T(M) \to 0,
\label{ec45}
\end{equation}
where $i: ker(\rho) \to E$ is the inclusion map. If $M$ is paracompact, then like any short exact sequence of vector bundles over $M$, this exact sequence splits in the category of vector bundles \cite{18}. Hence, there exists a vector bundle isomorphism $\tau: E \to T(M) \oplus ker(\rho)$ satisfying
\begin{equation} \tau \circ g^{-1} \circ \rho^* = \left( 0, id_{ker(\rho)} \right), \qquad \rho = proj_1 \circ \tau,
\label{ec46}
\end{equation}
where $proj_i$ denotes the projection onto $i$th component in a direct sum. Note that if $(E, \rho, [\cdot,\cdot]_E)$ is a pre-Leibniz algebroid, then $(ker(\rho), \rho|_{ker(\rho)} = 0, [\cdot,\cdot]_E|_{ker(\rho)})$ is a pre-Leibniz algebroid. Hence, in this case (\ref{ec39}) becomes an exact sequence of pre-Leibniz algebroids. Detailed work on this topic can be found in \cite{17}. Note that $proj_2 \circ \tau$ is a locality projector, trivially. Transitive pre-Leibniz algebroids are of importance due to the fact that the usual vector fields generate diffeomorphisms, and one wants to have all the diffeomorphisms in a consistent physical theory.


\section{Generalized Geometry on Exact Courant Algebroids}

\noindent The aim of this section is to construct generalized geometry as a special case of $E$-metric-connection geometries on local pre-Leibniz algebroids. A suitable ``generalization'' for the local double field theory would be extended Riemannian geometry, which uses the language of symplectic pre-$NQ$-manifolds and $L_{\infty}$-structures \cite{19}. Here, generalized geometry will be constructed by using the exact Courant algebroids, which corresponds to the special case for extended geometry when one considers symplectic Lie 2-algebroids \cite{20}. We will define necessary structures, that already exist in the literature, in a parallel way to $E$-metric-connection geometries. Similar to the previous section, we will construct exact Courant algebroids by adding assumptions step by step. Hence, we first consider almost-metric algebroids. 

\begin{Definition} An almost-metric algebroid over $M$ is a quadruplet $(E, \rho, [\cdot,\cdot]_E, g)$, where $(E, \rho)$ is an anchored vector bundle over $M$, $[\cdot,\cdot]_E$ is a bracket on $E$, and $g$ is an $E$-metric such that
\begin{equation} [u, v]_E + [v, u]_E = \mathcal{D}_g (g(u, v)),
\label{ed1}
\end{equation}
for all $u, v \in \mathfrak{X}(E)$, where $\mathcal{D}_g := g^{-1} \circ D$.
\label{dd1}
\end{Definition}

\begin{Definition} An almost-metric algebroid $(E, \rho, [\cdot,\cdot]_E, g)$ is called an almost-Courant algebroid if $(E, \rho, [\cdot,\cdot]_E)$ is an almost-Leibniz algebroid over $M$. 
\label{dd2}
\end{Definition}

\noindent One way to construct an almost-Courant algebroid is to assume a compatibility condition between all the structures of an almost-metric algebroid $(E, \rho, [\cdot,\cdot]_E, g)$ in the sense
\begin{equation} \rho(u)(g(v, w)) = g([u, v]_E, w) + g(u, [v, w]_E),
\label{ed2}
\end{equation}
for all $u, v, w \in \mathfrak{X}(E)$. If this is the case, then an almost-metric algebroid is called a metric-algebroid \cite{21}. 

\begin{Proposition} Any almost-Courant algebroid $(E, \rho, [\cdot,\cdot]_E, g)$ can be seen as a local almost-Leibniz algebroid $(E, \rho, [\cdot,\cdot]_E, \ ^g L)$, where the locality operator $^g L$ is defined by
\begin{equation}
^g L(\Omega, u, v) := g(u, v) g^{-1}(\Omega),
\label{ed3}
\end{equation}
for all $\Omega \in \Omega^1, u, v \in \mathfrak{X}(E)$.
\label{pd1}
\end{Proposition}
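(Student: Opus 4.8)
The plan is to show that the operator $^g L$ defined in (\ref{ed3}) satisfies the left-Leibniz rule (\ref{ec11}) with the appropriate coboundary map, so that $(E, \rho, [\cdot,\cdot]_E, \ ^g L)$ is genuinely a local almost-Leibniz algebroid. First I would verify that $^g L$ is $C^{\infty}(M, \mathbb{R})$-multilinear in all three slots: this is immediate since $g$, $g^{-1}$ are $C^{\infty}(M, \mathbb{R})$-linear bundle maps and the expression $g(u,v) g^{-1}(\Omega)$ is built pointwise. Next, the heart of the matter: starting from $[fu, v]_E$ and using only the right-Leibniz rule (\ref{ec10}) (which holds since an almost-Courant algebroid is by definition an almost-Leibniz algebroid) together with the symmetry relation (\ref{ed1}), I compute
\begin{align}
[fu, v]_E &= \mathcal{D}_g(g(fu, v)) - [v, fu]_E \nonumber\\
&= g^{-1}(D(f\, g(u,v))) - \rho(v)(f) u - f [v, u]_E \nonumber\\
&= g^{-1}\big( f\, D(g(u,v)) + g(u,v)\, Df \big) - \rho(v)(f) u - f [v, u]_E \nonumber\\
&= f\, g^{-1}(D(g(u,v))) - f [v, u]_E - \rho(v)(f) u + g(u,v)\, g^{-1}(Df) \nonumber\\
&= f\, [u, v]_E - \rho(v)(f) u + \ ^g L(Df, u, v),
\end{align}
where in the last step I re-applied (\ref{ed1}) in the form $\mathcal{D}_g(g(u,v)) - [v,u]_E = [u,v]_E$ and recognized $g(u,v)\, g^{-1}(Df) = \ ^g L(Df, u, v)$ from the definition (\ref{ed3}). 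This is exactly (\ref{ec11}), so $^g L$ serves as a locality operator.

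The key computational input is that $D$ is a derivation on $C^{\infty}(M, \mathbb{R})$, i.e. $D(fh) = f\, Dh + h\, Df$, which follows directly from the definition $(Df)(u) := \rho(u)(f)$ and the Leibniz property of $\rho(u)$ acting as a vector field on functions; I would note this explicitly since it is the one place the coboundary map's behavior is used. I would also remark that $\mathcal{D}_g = g^{-1} \circ D$ is precisely the notation already introduced in Definition \ref{dd1}, so no new object is being invented.

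The main obstacle — really the only subtlety — is bookkeeping the symmetry relation (\ref{ed1}) carefully: one must use it twice, once to rewrite $[fu,v]_E$ in terms of $[v,fu]_E$ and the symmetrizer $\mathcal{D}_g(g(fu,v))$, and once again at the end to collapse $f\,\mathcal{D}_g(g(u,v)) - f[v,u]_E$ back into $f[u,v]_E$, rather than leaving it in symmetrized form. A cleaner but equivalent route would be to instead verify the left-Leibniz rule directly by expanding both sides using (\ref{ed1}) symmetrically; either way the computation is short. Having established the left-Leibniz rule, the conclusion that $(E, \rho, [\cdot,\cdot]_E, \ ^g L)$ is a local almost-Leibniz algebroid is immediate from Definition \ref{dc5}, and the identification with the original almost-Courant algebroid $(E,\rho,[\cdot,\cdot]_E,g)$ as an underlying structure is tautological since the bracket and anchor are unchanged.
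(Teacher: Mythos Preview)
Your proposal is correct and follows essentially the same route as the paper's own proof: both apply the symmetry relation (\ref{ed1}) to rewrite $[fu,v]_E$ in terms of $[v,fu]_E$, expand the latter via the right-Leibniz rule (\ref{ec10}), use the derivation property of $D$ on $f\,g(u,v)$, re-apply (\ref{ed1}) to collapse the remaining terms, and identify the leftover piece as $^g L(Df,u,v)$. Your additional remarks on the $C^{\infty}(M,\mathbb{R})$-multilinearity of $^g L$ and the derivation property of $D$ are welcome clarifications but do not change the argument.
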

\begin{proof} By the equation (\ref{ed1})
\begin{align}
[f u, v]_E &= - [v, f u]_E + \mathcal{D}_g(g(f u, v)), \nonumber\\
&= - \left\{ \rho(v)(f) u + f [v, u]_E \right\} + \mathcal{D}_g(f g(u, v)), \nonumber\\
&= - \rho(v)(f) u - f \left\{ - [u, v]_E + \mathcal{D}_g(g(u, v)) \right\} + \mathcal{D}_g(f) g(u, v) + f \mathcal{D}_g(g(u, v)) , \nonumber\\
&= - \rho(v)(f) + f [u, v]_E + \mathcal{D}_g(f) g(u, v), \nonumber
\end{align}
for all $f \in C^{\infty}(M, \mathbb{R}), u, v \in \mathfrak{X}(E)$. For the locality operator in the equation (\ref{ed3}), insert $\Omega = Df$ for some $f \in C^{\infty}(M, \mathbb{R})$
\begin{equation}
^g L(Df, u, v) := g(u, v) g^{-1}(Df) = g(u, v) \mathcal{D}_g(f). \nonumber
\end{equation}
Hence $^g L$ or any other locality operator from the same locality structure satisfy for the right-Leibniz identity.
\end{proof}

\noindent Note that $^g L(\Omega, u, v)$ is symmetric in $u$ and $v$ due to the $E$-metric $g$. The components of the locality $E$-tensor (\ref{ed3}) over a local $E$-frame $(X_a)$ read
\begin{equation} ^g L^{a b}_{\ \ c d} = g^{a b} g_{c d},
\label{ed5}
\end{equation}
where $g^{a b} := g^{-1}(e^a, e^b)$. For a $g$-orthonormal local $E$-frame, the components of this locality tensor are widely used in the double field theory literature.

\begin{Definition} An almost-Courant algebroid $(E, \rho, [\cdot,\cdot]_E, g)$ is called a pre-Courant algebroid if $\rho([u, v]_E) = [\rho(u), \rho(v)]$ for all $u, v \in \mathfrak{X}(E)$\footnote{For a slightly more restrictive definition of pre-Courant algebroids, see \cite{22}.}.
\label{dd3}
\end{Definition}

\begin{Definition} A metric algebroid $(E, \rho, [\cdot,\cdot]_E, g)$ is called a Courant algebroid if $(\mathfrak{X}(E), [\cdot,\cdot]_E)$ is a Leibniz algebra.
\label{dd4}
\end{Definition}

\noindent As metric-algebroids are almost-Courant algebroids, in particular Courant algebroids are almost-Courant algebroids. Moreover, due to the fact that defining property Leibniz identity (\ref{ec20}) implies that the anchor preserves the brackets, all Courant algebroids are pre-Courant algebroids.
 

\begin{Definition} Let $(E, \rho, [\cdot,\cdot]_E, g)$ be an almost-Courant algebroid. If the following sequence is a short exact sequence of vector bundles
\begin{equation} 0 \rightarrow T^*(M) \xrightarrow{g^{-1} \circ \rho^*} E \xrightarrow{\rho} T(M) \rightarrow 0,
\label{ed6}
\end{equation}
then the almost-Courant algebroid is said to be exact. 
\label{dd5}
\end{Definition}

\noindent In this case, $g^{-1} \circ \rho^*$ is injective, and its image is isomorphic to $ker(\rho)$, so that $ker(\rho) \simeq T^*(M)$. Hence, this exact sequence is of the form (\ref{ec45}), and similarly when $M$ is paracompact, one can deduce that there is a vector bundle isomorphism $\tau: E \to T(M) \oplus T^*(M)$ with $\tau \circ g^{-1} \circ \rho^* = \left( 0, id_{T^*(M)} \right), \rho = proj_1 \circ \tau$. Hence, any exact almost-Courant algebroid is a rank $2n$ vector bundle over an $n$ dimensional manifold.

The ``untwisted generalized-tangent bundle'' $\mathbb{T}(M) := T(M) \oplus T^*(M)$ is an exact Courant algebroid, with the anchor being the projection map $proj_1: \mathbb{T}(M) \to T(M)$, the Dorfman bracket\footnote{Its anti-symmetrization is called the Courant bracket.} $[\cdot,\cdot]_D: \Gamma(\mathbb{T}(M)) \times \Gamma(\mathbb{T}(M)) \to \Gamma(\mathbb{T}(M))$
\begin{equation} [U + \omega, V + \eta]_D := [U, V] + \mathcal{L}_U \eta - \iota_V d \omega,
\label{ed7}
\end{equation}
and the $\mathbb{T}(M)$-metric $(\cdot,\cdot): \Gamma(\mathbb{T}(M)) \times \Gamma(\mathbb{T}(M)) \to C^{\infty}(M, \mathbb{R})$ defined by
\begin{equation} (U + \omega, V + \eta) := \frac{1}{2} \left( \iota_U \eta + \iota_V \omega \right),
\label{ed8}
\end{equation}
where $U, V \in \mathfrak{X}(M), \omega, \eta \in \Omega^1(M)$. Note that this $\mathbb{T}(M)$-metric is $(n, n)$ signature. Hence, the group of unitary transformations is the orthogonal group $O(n, n)$, which is suitable for $T$-duality.

Any exact Courant algebroid can be constructed by starting from the untwisted generalized-tangent bundle $\mathbb{T}(M)$ and $H$-twisting it via the bracket $[\cdot,\cdot]_H: \Gamma(\mathbb{T}(M)) \times \Gamma(\mathbb{T}(M)) \to \Gamma(\mathbb{T}(M))$
\begin{equation} [V + \omega, W + \eta]_H := [V + \omega, W + \eta]_D + \iota_W \iota_V H.
\label{ed9}
\end{equation}
for a closed 3-form $H \in \Omega^3_{cl}(M)$. The cohomology class $[H]$, which is called the \v{S}evera class, completely classifies exact Courant algebroids up to Courant algebroid isomorphism \cite{23}. 

One can define the ``generalized'' versions of the geometric structures in the sense of Hitchin. Let $(E, \rho, [\cdot,\cdot]_E, g)$ be an almost-Courant algebroid. Then a $(q, r)$-type generalized-tensor is just a $(q, r)$-type $E$-tensor, and everything about $E$-tensors, $E$-vector fields etc. can be carried on here as generalized versions.

\begin{Definition} On an almost-Courant algebroid $(E, \rho, [\cdot,\cdot]_E, g)$, a generalized-metric $G$ is an $E$-metric which satisfies
\begin{equation} g(u, v) = g^{-1}(G(u), G(v)),
\label{ed10}
\end{equation}
for all $u, v \in \Gamma(E)$.
\label{dd6}
\end{Definition}

\noindent The condition (\ref{ed10}) can be written as 
\begin{equation} g_{a b} = g^{c d} G_{a c} G_{b d}.
\label{ed11}
\end{equation}
on a local generalized-frame $(X_a)$, which is also widely used in double field theory literature. Any generalized-metric on an exact Courant algebroid splits $E$ into $E = C_+ \oplus C_-$ where $(n, n)$ signature $g$ reduces to $(t, s)$ and $(s, t)$ signatures on $C_+$ and $C_-$ respectively for some $t, s \in \mathbb{N}$ with $t + s = n$, where $C_-$ is the orthogonal complement of $C_+$. Note that, $g$ itself becomes a generalized-metric if components of $g$ and $g^{-1}$ coincide, i. e. $g^2 = id_{\mathfrak{X}(M)}$\footnote{This fact has relations with the para-Hermitian formulation of double field theory \cite{24}.}. This is the case for (\ref{ed8}), so it holds for any exact Courant algebroid. Any generalized-metric on the untwisted generalized tangent bundle can be written as
\begin{equation} G = \left( \begin{matrix} \hat{G} - B \hat{G}^{-1} B & B \hat{G}^{-1} \\ -\hat{G}^{-1} B & \hat{G}^{-1} \end{matrix} \right),
\label{ed12}
\end{equation}
where $B \in \Omega^2(M)$, and $\hat{G}$ is a usual $(t, s)$ signature metric over $M$. Moreover any $G$ of this form is a generalized-metric. Twisting corresponds to $B$ being a locally 2-form, but globally a connective structure on a bundle gerbe.

\begin{Definition} On an almost-Courant algebroid $(E, \rho, [\cdot,\cdot]_E, g)$, a linear $E$-connection $\nabla$ on $(E, \rho)$ is called a linear generalized-connection. The generalized-non-metricity tensor corresponding to a linear generalized-connection $\nabla$ and a generalized-metric $G$ is the $(0, 3)$-type $E$-non-metricity tensor of $\nabla$ and $G$. The generalized-torsion operator corresponding to $\nabla$ is the $E$-torsion operator of $\nabla$ on the local almost-Leibniz algebroid $(E, \rho, [\cdot,\cdot]_E, \ ^g L)$.
\label{dd7}
\end{Definition}
\noindent This generalized-torsion operator explicitly reads\footnote{As in the previous section, any $L \in [^g L]$ would work.}
\begin{equation} ^{^g L} T(\nabla)(u, v) = \nabla_u v - \nabla_v u - [u, v]_E + g(\nabla_{X_a} u, v) g^{-1}(e^a),
\label{ed13}
\end{equation}
for all $u, v \in \Gamma(E)$, where $(X_a)$ is a local generalized-frame, with its dual local generalized-coframe $(e^a)$\footnote{i. e. $(X_a)$ is a local $E$-frame on an almost-Courant algebroid $E$, and $(e^a)$ is its dual.}. 

\begin{Definition} On a regular pre-Courant algebroid $(E, \rho, [\cdot,\cdot]_E, g)$, the generalized-curvature operator corresponding to a generalized-connection $\nabla$ is defined as the $E$-curvature operator of $\nabla$ on the local pre-Leibniz algebroid $(E, \rho, [\cdot,\cdot]_E, \ ^g L)$ endowed with a locality projector $\mathcal{P}$ on $[^g L]$
\label{dd8}
\end{Definition}

\noindent This generalized-curvature operator explicitly reads
\begin{equation}
^{\widehat{^g L}} R(\nabla)(u, v, w) = \nabla_u \nabla_v w - \nabla_v \nabla_u w - \nabla_{[u, v]_E} w + \nabla_{{\widehat{^g L}} (e^a, \nabla_{X_a} u, v))} w,
\label{ed14} 
\end{equation}
for all $u, v, w \in \Gamma(E)$, where $(X_a)$ is a local generalized-frame.

Exact almost-Courant algebroids, so in particular exact Courant algebroids, are regular by definition because the anchor is surjective. Moreover, Courant algebroids are pre-Courant algebroids, so generalized-curvature operator can be defined on any exact Courant algebroid without any additional assumption.

\begin{Proposition} Let $(E, \rho, [\cdot,\cdot]_E, \ g)$ be an exact almost-Courant algebroid, then there is a unique locality projector $\mathcal{P}$ given by the projection onto $ker(\rho) \simeq T^*(M)$ given by $proj_2 \circ \tau$, where $\tau: E \to T(M) \oplus T^*(M)$ is the isomorphism coming from the exact sequence (\ref{ed5}).
\label{pd2}
\end{Proposition}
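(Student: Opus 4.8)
\noindent The assertion has two parts — that the displayed map is a locality projector, and that it is the only one — and I would establish them in turn.

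\emph{The candidate is a locality projector.} Since $(E,\rho,[\cdot,\cdot]_E,g)$ is exact, $\rho$ is surjective, so $(E,\rho)$ is regular and (\ref{ed6}) is an instance of (\ref{ec45}); for paracompact $M$ it splits, producing $\tau:E\to T(M)\oplus T^*(M)$ with $\tau\circ g^{-1}\circ\rho^*=(0,id_{T^*(M)})$ and $\rho=proj_1\circ\tau$ as in (\ref{ec46}). Set $\mathcal{P}:=proj_2\circ\tau$, regarded as the endomorphism of $E$ projecting onto $ker(\rho)=im(g^{-1}\circ\rho^*)$ along $\tau^{-1}(T(M)\oplus0)$. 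It is a bundle morphism over $id_M$, hence $C^{\infty}(M,\mathbb{R})$-linear on sections, and by the first relation in (\ref{ec46}) it fixes $ker(\rho)$ pointwise. For any $L\in[{}^{g}L]$ we then have $im(\mathcal{P}\circ L)\subseteq im(\mathcal{P})=ker(\rho)$ directly, and, evaluating on $Df$ and using $C^{\infty}(M,\mathbb{R})$-linearity of $\mathcal{P}$ together with $L\equiv{}^{g}L$ on $im(D)$, $(\mathcal{P}\circ L)(Df,u,v)=\mathcal{P}\big(g(u,v)\,\mathcal{D}_g f\big)=g(u,v)\,\mathcal{P}(\mathcal{D}_g f)$; since $D=\rho^*\circ d$ gives $\mathcal{D}_g f=(g^{-1}\circ\rho^*)(df)\in ker(\rho)$, this equals $g(u,v)\,\mathcal{D}_g f={}^{g}L(Df,u,v)=L(Df,u,v)$. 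Hence $\mathcal{P}\circ L\in[{}^{g}L]$ for every $L$, so $\mathcal{P}$ is a locality projector.

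\emph{Any locality projector reduces to a projector onto $ker(\rho)$.} Let $\mathcal{P}'$ be an arbitrary locality projector on $[{}^{g}L]$ and test the two defining conditions against the representative $L={}^{g}L$. Over a sufficiently small chart one can choose $u,v\in\mathfrak{X}(E)$ with $g(u,v)\equiv1$ (non-degeneracy of $g$), so $\Omega\mapsto{}^{g}L(\Omega,u,v)=g^{-1}(\Omega)$ is onto $\mathfrak{X}(E)$ there; then $im(\mathcal{P}'\circ{}^{g}L)\subseteq ker(\rho)$ forces $im(\mathcal{P}')\subseteq ker(\rho)$. Likewise, $\mathcal{P}'\circ{}^{g}L\in[{}^{g}L]$ evaluated on $Df$ gives $\mathcal{P}'(\mathcal{D}_g f)=\mathcal{D}_g f$ for all $f$; since $\{df\}$ locally spans $T^*(M)$ and $g^{-1}\circ\rho^*$ is an isomorphism onto $ker(\rho)$, the vectors $\mathcal{D}_g f$ locally span $ker(\rho)$, whence $\mathcal{P}'|_{ker(\rho)}=id$. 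So $\mathcal{P}'$ is a $C^{\infty}(M,\mathbb{R})$-linear idempotent of $E$ with image exactly $ker(\rho)$, i.e.\ a projector of $E$ onto $ker(\rho)$.

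\emph{The main obstacle.} What is then left is to identify $ker(\mathcal{P}')$ with $\tau^{-1}(T(M)\oplus0)=ker(\mathcal{P})$ — that is, to exclude the $\mathrm{Hom}(T(M),T^*(M))$-family of rank-$n$ complements that an idempotent of a rank-$2n$ bundle onto a rank-$n$ subbundle can a priori carry. This is the step I expect to be hardest, and the one where exactness of (\ref{ed6}) — which makes $ker(\rho)$ of corank $n$ and isomorphic to $T^*(M)$, in contrast to a generic regular pre-Courant algebroid — must really enter. I would attack it by evaluating the two defining conditions on representatives $L={}^{g}L+\Phi$ with $\Phi(Df,\cdot,\cdot)\equiv0$, written out in a local generalized-frame adapted to $\tau$ in which ${}^{g}L$ has components (\ref{ed5}), and checking whether a nonzero off-diagonal block of $\mathcal{P}'$ relative to the splitting can survive both $\mathcal{P}'\circ L\equiv{}^{g}L$ and $im(\mathcal{P}'\circ L)\subseteq ker(\rho)$ for a suitable $\Phi$. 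Should these alone not suffice, the intended statement must be read with the locality-projector axiom controlling the whole image $\mathcal{P}'\circ[{}^{g}L]$ inside $[{}^{g}L]$ rather than just the values on exact $1$-forms, and I would make that precise before concluding $\mathcal{P}'=\mathcal{P}$.
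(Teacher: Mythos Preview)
Your argument tracks the paper's proof very closely. The paper identifies $E$ with $T(M)\oplus T^*(M)$ via $\tau$, observes that any locality projector must land in $ker(\rho)=T^*(M)$ (your second paragraph), then uses $\mathcal{P}'(\mathcal{D}_g f)=\mathcal{D}_g f$ together with the fact that exact $1$-forms locally span $\Omega^1(M)$ to get $\mathcal{P}'|_{ker(\rho)}=id$ (again your second paragraph), and concludes. Your first paragraph, verifying that $proj_2\circ\tau$ actually is a locality projector, is more explicit than anything in the paper's proof.

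The point you flag as the ``main obstacle'' --- singling out the complement $ker(\mathcal{P}')$ among the $\mathrm{Hom}(T(M),T^*(M))$-family of splittings --- is \emph{not} addressed in the paper either. The paper stops exactly where your second paragraph stops and simply declares the projector unique. In other words, the paper's ``unique locality projector'' should be read as ``any locality projector is a projection onto $ker(\rho)$'', i.e.\ is of the form $proj_2\circ\tau$ for \emph{some} splitting $\tau$; the axioms in Definition~\ref{dc10} do not fix the complement, and different choices of $\tau$ give different (but equally valid) locality projectors. So your proposed further attack via representatives $L={}^gL+\Phi$ cannot succeed --- those conditions only constrain $im(\mathcal{P}')$ and $\mathcal{P}'|_{ker(\rho)}$, which you have already extracted --- and you need not pursue it. Your proof is already at least as complete as the paper's.
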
 
\begin{proof} For notational ease, we will ignore the isomorphism $\tau$ and directly take $E = T(M) \oplus T^*(M)$. As the kernel of the anchor $\rho$ is $T^*(M)$, a locality projector is of the form $(U, \omega) \mapsto (0, \tilde{\omega})$ for $U \in \mathfrak{X}(M), \omega, \tilde{\omega} \in \Omega^1(M)$. If $(U, \omega)$ is in the image of the coboundary map $D$, which coincides with the exterior derivative $d$, then its image under the locality projector will be the same as itself. This implies $(0, \omega) \mapsto (0, \omega)$ for exact $\omega \in \Omega^1(M)$, as $g^{-1} \circ \rho^*$ is just the inclusion. Any 1-form can be written as a sum of exact 1-forms locally. Hence, this gives $(0, \omega) \mapsto (0, \omega)$ for any $\omega \in \Omega^1(M)$. Therefore, there is only one possible locality projector, which is given by the projection map onto the cotangent bundle.
\end{proof}

\noindent This justifies the choice of generalized-curvature operators in the double field theory literature defined with the projection onto $T^*(M)$. 

In the double field theory literature, generalized-torsion and generalized-curvature tensors are usually defined index-down by using the isomorphism $g^{-1}: \Omega^1(E) \to \mathfrak{X}(E)$. Here, as in the previous section, they will be defined completely analogous to the usual case
\begin{align} ^{^g L} T(\nabla)(\Omega, u, v) &:= \langle \Omega, \ ^{^g L} T(\nabla)(u, v) \rangle, \nonumber\\
^{\widehat{^g L}} R(\nabla)(\Omega, u, v, w) &:= \langle \Omega, \ ^{\widehat{^g L}} R(\nabla)(u, v, w) \rangle,
\label{ed15}
\end{align}
for all $\Omega \in \Omega^1(E), u, v, w \in \mathfrak{X}(E)$. For exact Courant algebroids, another way to have $E$-tensorial curvature would be to define a generalized-curvature with respect to the pseudo-curvature map $R^{(0)}(\nabla)$ and then restrict it to a Dirac structure so that it is $C^{\infty}(M, \mathbb{R})$-multilinear \cite{25}. Generalized-Ricci tensor and generalized-Ricci scalar can be defined as in the pre-Leibniz algebroid case. 

\noindent In this setting, we have the necessary structures to define generalized-metric-connection geometries.

\begin{Definition} Generalized-metric-connection geometries are defined as a quadruplet $(M, (E, \rho, [\cdot,\cdot]_E, g, \mathcal{P}), G, \nabla)$ where $M$ is a manifold, $(E, \rho, [\cdot,\cdot]_E, g)$ is an exact Courant algebroid over $M$, $\mathcal{P} = proj_2$ is the unique locality projector in the locality structure $[^g L]$, $G$ is a generalized-metric, and $\nabla$ is a linear generalized-connection. 
\label{dd9}
\end{Definition}

\noindent Note that these geometries can be defined on any regular pre-Courant algebroid. By construction, this definition is equivalent to the $E$-metric-connection geometry $(M, (E, \rho, [\cdot,\cdot]_E, \ ^g L, \mathcal{P}), G, \nabla)$ when $G$ satisfies (\ref{ed10}). Hence, in this sense $E$-metric-connection geometries ``generalize'' generalized geometries.

\begin{Definition} If a linear generalized-connection is generalized-metric-compatible with both $g$ and $G$ and generalized-torsion-free , then it is called a generalized-Levi-Civita connection associated to $G$.
\label{dd10}
\end{Definition}

\noindent Even though there are two generalized-metric-compatibility conditions, generalized-Levi-Civita connections are still not unique, except in some small number of special cases \cite{26}. 

\begin{Proposition} On a local almost-Leibniz algebroid $(E, \rho, [\cdot,\cdot]_E, \ ^g L)$ endowed with an $E$-metric $g$, assume that $\nabla$ is an $E$-Levi-Civita-connection. On a $g$-orthonormal local $E$-frame, if $E$-connection coefficients $\{ \Gamma(\nabla)^a_{\ b c} \}$ are symmetric (respectively anti-symmetric) in $b$ and $c$, then the $E$-anholonomy coefficients $\{ \gamma^a_{\ b c} \}$ have to be also symmetric (respectively anti-symmetric) in $b$ and $c$.
\label{pd3}
\end{Proposition}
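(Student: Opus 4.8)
\noindent The plan is to carry the whole argument out in components on a fixed $g$-orthonormal local $E$-frame $(X_a)$, which exists locally since $g$ is non-degenerate of constant signature. On such a frame the functions $g_{ab}$ are constant, so $\rho(X_a)(g_{bc}) = 0$, and by the component formula (\ref{ec7}) the $E$-metric-$g$-compatibility $Q(\nabla, g) = 0$ of the assumed $E$-Levi-Civita connection collapses to $\Gamma(\nabla)^d_{\ ab} g_{dc} + \Gamma(\nabla)^d_{\ ac} g_{bd} = 0$. Writing $\Gamma_{abc} := g_{ad}\Gamma(\nabla)^d_{\ bc}$ and $\gamma_{abc} := g_{ad}\gamma^d_{\ bc}$ (note that $\{\Gamma(\nabla)^a_{\ bc}\}$ is symmetric, resp.\ antisymmetric, in $b,c$ iff $\{\Gamma_{abc}\}$ is, and likewise for $\gamma$), this says that the fully lowered connection coefficients are antisymmetric under exchanging the first and third index, $\Gamma_{cab} = -\Gamma_{bac}$; this is the only place metric-compatibility enters.

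Next I would feed the components $^g L^{ab}_{\ \ cd} = g^{ab}g_{cd}$ of the locality operator (equation (\ref{ed5})) into the $E$-torsion component formula (\ref{ec18}) and lower the free index with $g$. This produces the single relation to be exploited, namely $\Gamma_{fbc} - \Gamma_{fcb} - \gamma_{fbc} + \Gamma_{cfb} = 0$, since $E$-torsion-freeness of $\nabla$ is exactly the vanishing of the left-hand side.

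For the antisymmetric case, if $\Gamma(\nabla)^a_{\ bc}$ is antisymmetric in $b,c$ then $\Gamma_{fbc} - \Gamma_{fcb} = 2\Gamma_{fbc}$, so the relation reads $\gamma_{fbc} = 2\Gamma_{fbc} + \Gamma_{cfb}$; exchanging $b \leftrightarrow c$ and substituting $\Gamma_{fcb} = -\Gamma_{fbc}$ together with the metric-compatibility identity $\Gamma_{bfc} = -\Gamma_{cfb}$ gives $\gamma_{fcb} = -2\Gamma_{fbc} - \Gamma_{cfb} = -\gamma_{fbc}$, the asserted antisymmetry. For the symmetric case, if $\Gamma(\nabla)^a_{\ bc}$ is symmetric in $b,c$ I would first note that a rank-$3$ array which is symmetric in one pair of slots (here slots $2$ and $3$, by hypothesis) and antisymmetric in an overlapping pair (slots $1$ and $3$, by metric-compatibility) must vanish identically: an alternating chain of applications of the two symmetries yields $\Gamma_{abc} = -\Gamma_{abc}$. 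Hence $\Gamma \equiv 0$, and the torsion-free relation degenerates to $\gamma_{fbc} = \Gamma_{cfb} = 0$, which is trivially symmetric in $b,c$.

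The delicate point is the symmetric case: reasoning from the torsion-free relation alone is misleading, because metric-compatibility identifies $\gamma_{fbc}$ with $\Gamma_{cfb}$, which is an \emph{antisymmetric} combination of $\Gamma$, so one genuinely needs the vanishing of $\Gamma$ itself to reach the stated conclusion rather than its opposite. The antisymmetric case, by contrast, is immediate once the torsion identity is in hand — indeed, symmetrizing that identity in $b \leftrightarrow c$ and using $\Gamma_{cfb} = -\Gamma_{bfc}$ shows that the anholonomy coefficients of \emph{any} $E$-Levi-Civita connection are automatically antisymmetric in the last two indices on a $g$-orthonormal frame, so in that case the hypothesis on $\Gamma$ is in fact superfluous.
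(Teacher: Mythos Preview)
Your proof is correct and follows essentially the same strategy as the paper: work in a $g$-orthonormal frame, extract the identity $\Gamma_{cab}=-\Gamma_{bac}$ from metric-compatibility, plug $^gL^{ab}_{\ \ cd}=g^{ab}g_{cd}$ into the torsion-free condition, and read off the symmetry of $\gamma$. The antisymmetric case is handled identically in both; the paper obtains $\gamma^a_{\ bc}=3\Gamma(\nabla)^a_{\ bc}$, which is your conclusion rewritten.

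The only genuine difference is in the symmetric case. The paper first swaps $d\leftrightarrow b$ in the last term of the torsion equation using the assumed symmetry and \emph{then} applies metric-compatibility, arriving directly at $\gamma^a_{\ bc}=-\Gamma(\nabla)^a_{\ bc}$, which is manifestly symmetric. You instead lower indices first, obtain $\gamma_{fbc}=\Gamma_{cfb}$, and then invoke the standard ``symmetric in one overlapping pair, antisymmetric in another $\Rightarrow$ zero'' lemma to force $\Gamma\equiv 0$, hence $\gamma\equiv 0$. Your route is slightly longer but yields the sharper statement that in the symmetric case both $\Gamma$ and $\gamma$ must vanish on the orthonormal frame---a fact the paper's argument does not make explicit, though it is of course consistent with $\gamma=-\Gamma$. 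Your closing remark that the anholonomy coefficients of \emph{any} $E$-Levi-Civita connection are automatically antisymmetric on a $g$-orthonormal frame is also correct and is essentially the content of the paper's next proposition (for almost-metric algebroids), derived here instead from the torsion identity.
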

\begin{proof} On a $g$-orthonormal local $E$-frame, $E$-non-metricity components (\ref{ec7}) for a linear $E$-connection $\nabla$ become
\begin{equation} Q(\nabla, g)_{a b c} = - \Gamma(\nabla)^d_{\ a b} g_{d c} - \Gamma(\nabla)^d_{\ a c} g_{b d}, \nonumber
\end{equation}
on a local $E$-frame $(X_a)$. Assuming metric-$g$-compatibility and multiplying by $g^{e c}$, this becomes
\begin{equation} \Gamma(\nabla)^e_{\ a b} = - \Gamma(\nabla)^d_{\ a c} g^{e c} g_{d b},
\label{ed16}
\end{equation}
as $g$ is symmetric. By the equation (\ref{ed13}), $E$-torsion-free condition on components can be written as
\begin{equation} \Gamma(\nabla)^a_{\ b c} - \Gamma(\nabla)^a_{\ c b} - \gamma^a_{\ b c} + \Gamma(\nabla)^e_{\ d b} g^{a d} g_{e c} = 0. \label{ed17}
\end{equation}
Note that the last term in the summation is the same as the right-hand side of the equation (\ref{ed16}) if one changes the indices $d$ and $b$ in $\Gamma(\nabla)^e_{\ d b}$. If $\Gamma(\nabla)^e_{\ d b}$ is symmetric or anti-symmetric in $d$ and $b$, one can use this information. First, let us assume that it is symmetric, then the equation (\ref{ed17}) becomes
\begin{equation} \Gamma(\nabla)^a_{\ b c} - \Gamma(\nabla)^a_{\ b c} - \gamma^a_{\ b c} - \Gamma(\nabla)^a_{\ b c} = 0, \nonumber
\end{equation}
by the equation (\ref{ed16}). This yields 
\begin{equation} \gamma^a_{\ b c} = - \Gamma(\nabla)^a_{\ b c}, \nonumber
\end{equation}
forcing $\gamma^a_{\ b c}$ to be also symmetric in $b$ and $c$. Similarly, if one assumes anti-symmetry of $\Gamma(\nabla)^e_{\ d b}$ in $d$ and $b$, then one gets
\begin{equation} \gamma^a_{\ b c} = 3 \Gamma(\nabla)^a_{\ b c}, \nonumber
\end{equation}
which forces $\gamma^a_{\ b c}$ to be anti-symmetric in $b$ and $c$.
\end{proof}

\begin{Proposition} On an almost-metric algebroid $(E, \rho, [\cdot,\cdot]_E, g)$, $g$-orthonormal local $E$-frames are of Lie-type.
\label{pd4}
\end{Proposition}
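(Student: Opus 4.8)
The plan is to unwind the defining identity of an almost-metric algebroid on the given orthonormal frame. First I would fix a $g$-orthonormal local $E$-frame $(X_a)$ over a chart $U$, so that $g(X_a, X_b) = g_{ab}$ where $(g_{ab})$ is a constant symmetric nondegenerate matrix (entries $0, \pm 1$); in particular each component $g_{bc}$ is a locally constant, hence constant, function on $U$. I would also recall that the $E$-anholonomy coefficients are defined by $[X_b, X_c]_E = \gamma^a_{\ b c} X_a$, so that the symmetric part of the bracket evaluated on frame elements is $[X_b, X_c]_E + [X_c, X_b]_E = \left( \gamma^a_{\ b c} + \gamma^a_{\ c b} \right) X_a$.

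Next I would apply the almost-metric condition (\ref{ed1}) with $u = X_b$ and $v = X_c$, which gives $\left( \gamma^a_{\ b c} + \gamma^a_{\ c b} \right) X_a = \mathcal{D}_g \left( g(X_b, X_c) \right) = \mathcal{D}_g(g_{bc})$. Since $g_{bc}$ is constant, $(D g_{bc})(u) = \rho(u)(g_{bc}) = 0$ for every $u \in \mathfrak{X}(E)$, because $\rho(u)$ is a derivation annihilating constant functions; equivalently $D = \rho^* \circ d$ and $d g_{bc} = 0$. Hence $D g_{bc} = 0$ and therefore $\mathcal{D}_g(g_{bc}) = g^{-1}(0) = 0$. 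Equating coefficients and using the linear independence of $(X_a)$ over $C^{\infty}(M, \mathbb{R})$ yields $\gamma^a_{\ b c} + \gamma^a_{\ c b} = 0$ for all $a, b, c$, that is $\gamma^a_{\ b c} = - \gamma^a_{\ c b}$, which is exactly the condition (\ref{ec41}). By Definition (\ref{dc16}), this says precisely that $(X_a)$ is a local $E$-frame of Lie-type.

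There is no serious obstacle here: the whole content is the observation that orthonormality forces the metric components in that frame to be constant, so the coboundary map $D$ kills them and the right-hand side of the almost-metric axiom vanishes. The only points needing a word of care are (i) reading ``$g$-orthonormal'' as ``$g(X_a, X_b)$ equal to a constant symmetric nondegenerate matrix'' rather than merely pointwise orthogonal, so that $g_{bc}$ is genuinely a constant function; and (ii) the elementary fact that $D$ (hence $\mathcal{D}_g$) annihilates constants, which is immediate from $(Df)(u) := \rho(u)(f)$. Equivalently, one could phrase the argument purely in components: the almost-metric axiom written in an arbitrary local $E$-frame reads $\gamma^a_{\ b c} + \gamma^a_{\ c b} = \langle e^a, \mathcal{D}_g(g_{bc}) \rangle$, and specializing to the $g$-orthonormal case makes the right-hand side vanish identically.
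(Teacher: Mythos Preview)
Your proof is correct and follows essentially the same route as the paper: apply the almost-metric identity (\ref{ed1}) to a $g$-orthonormal frame so that $g(X_b,X_c)$ is constant, use $\mathcal{D}_g = g^{-1}\circ\rho^*\circ d$ to see the right-hand side vanishes, and conclude $\gamma^a_{\ b c}+\gamma^a_{\ c b}=0$ by linear independence. Your write-up is simply more explicit about why $D$ annihilates constants and about the component form, but the argument is identical.
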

\begin{proof} By the defining property of almost-metric algebroids (\ref{ed1}),
\begin{equation} [X_a, X_b]_E + [X_b, X_a]_E = \mathcal{D}_g(g(X_a, X_b)), \nonumber
\end{equation}
on a local $E$-frame $(X_a)$. If $(X_a)$ is $g$-orthonormal, the right-hand side vanishes because $\mathcal{D}_g$ can be written as $g^{-1} \circ \rho^* \circ d$. Hence, by the linear independence, one gets $\gamma^c_{\ a b} = - \gamma^c_{b a}$.
\end{proof}

\begin{Corollary} Let $(E, \rho, [\cdot,\cdot]_E, g)$ be an almost-Courant algebroid and $\nabla$ an $E$-Levi-Civita connection. Then, non-zero $E$-connection coefficients $\{ \Gamma(\nabla)^a_{\ b c} \}$ cannot be symmetric in $b$ and $c$ on a $g$-orthonormal local $E$-frame.
\label{cd1}
\end{Corollary}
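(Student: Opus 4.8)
The plan is to argue by contradiction, leaning almost entirely on Propositions \ref{pd3} and \ref{pd4}, together with the observation that an almost-Courant algebroid is in particular an almost-metric algebroid. Suppose that on some $g$-orthonormal local $E$-frame $(X_a)$ the $E$-connection coefficients $\{\Gamma(\nabla)^a_{\ bc}\}$ of the $E$-Levi-Civita connection $\nabla$ are symmetric in the lower indices $b$ and $c$. Applying Proposition \ref{pd4} to the underlying almost-metric algebroid, the chosen $g$-orthonormal frame is of Lie-type, i.e.\ the $E$-anholonomy coefficients satisfy $\gamma^a_{\ bc} = -\gamma^a_{\ cb}$.

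On the other hand, Proposition \ref{pd3} in its symmetric branch tells us that symmetry of $\{\Gamma(\nabla)^a_{\ bc}\}$ in $b$ and $c$ forces $\{\gamma^a_{\ bc}\}$ to be symmetric in $b$ and $c$ as well. An object which is simultaneously symmetric and anti-symmetric in a pair of indices must vanish, so $\gamma^a_{\ bc} = 0$ for all $a,b,c$ on this frame. Feeding this back into the relation $\gamma^a_{\ bc} = -\Gamma(\nabla)^a_{\ bc}$ which is established inside the proof of Proposition \ref{pd3} for the symmetric case then yields $\Gamma(\nabla)^a_{\ bc} = 0$ for all $a,b,c$. Thus every $E$-connection coefficient on this frame vanishes, contradicting the hypothesis that at least one of them is non-zero; hence the non-zero coefficients cannot be symmetric in $b$ and $c$.

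Since everything reduces to invoking the two preceding propositions, there is no genuine obstacle here; the only point that needs care is to use the correct branch of Proposition \ref{pd3} — namely the symmetric case, which supplies not merely symmetry of $\gamma^a_{\ bc}$ but the sharper identity $\gamma^a_{\ bc} = -\Gamma(\nabla)^a_{\ bc}$ needed to close the argument. It is worth remarking, in parallel, that anti-symmetry of the coefficients is by contrast \emph{not} obstructed: in the anti-symmetric branch Proposition \ref{pd3} gives $\gamma^a_{\ bc} = 3\,\Gamma(\nabla)^a_{\ bc}$, which is perfectly compatible with the Lie-type condition, so no analogous contradiction arises.
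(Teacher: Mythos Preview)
Your argument is correct and follows essentially the same route as the paper's own proof: invoke Proposition~\ref{pd4} (via the fact that an almost-Courant algebroid is an almost-metric algebroid) to get anti-symmetry of $\gamma^a_{\ bc}$, then invoke the symmetric branch of Proposition~\ref{pd3} to force a contradiction. The paper's proof is terser and additionally cites Proposition~\ref{pd1} to justify that the almost-Courant algebroid carries the local almost-Leibniz structure with locality operator $^gL$ required by Proposition~\ref{pd3}; you should mention this as well, since otherwise the applicability of Proposition~\ref{pd3} is not fully justified. Conversely, you are more explicit than the paper in closing the contradiction via the identity $\gamma^a_{\ bc} = -\Gamma(\nabla)^a_{\ bc}$ from inside the proof of Proposition~\ref{pd3}, which is a welcome clarification.
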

\begin{proof} Almost-Courant algebroids are almost-metric algebroids by definition. Hence, the proposition (\ref{pd4}) is valid for almost-Courant algebroids, and so $g$-orthonormal local $E$-frames are of Lie-type. Moreover, by the proposition (\ref{pd1}), almost-Courant algebroids are local almost-Leibniz algebroids with the locality operator $^g L$. Therefore, the proposition (\ref{pd3}) is also valid for them, which means that $\Gamma(\nabla)^a_{\ b c}$ cannot be symmetric in $b$ and $c$. 
\end{proof}

\begin{Proposition} On an almost-Courant algebroid $(E, \rho, [\cdot,\cdot]_E, g)$, an $E$-Koszul connection is $E$-torsion-free if and only if it is $E$-metric-$g$-compatible.
\label{pd5}
\end{Proposition}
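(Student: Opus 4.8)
The plan is to reduce the statement to the algebraic identity already established in Proposition \ref{pc9}. First I would note that the hypotheses of that proposition are met in the present situation: by Proposition \ref{pd1} an almost-Courant algebroid $(E, \rho, [\cdot,\cdot]_E, g)$ is a local almost-Leibniz algebroid with locality operator $L := {}^g L$, and it carries the $E$-metric $g$ by definition. Hence an $E$-Koszul connection ${}^K\nabla$ in the sense of Definition \ref{dc14} makes sense here (it exists by Propositions \ref{pc7} and \ref{pc8}), and on any local $E$-frame $(X_a)$ its $E$-non-metricity and $E$-torsion components obey relation (\ref{ec38}), i.e. $Q({}^K\nabla, g)_{abc} = -\,{}^L T({}^K\nabla)^f_{\ bc}\, g_{fa}$.

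From this identity both implications follow in one line each. For the forward direction I would assume ${}^K\nabla$ is $E$-torsion-free, so that every component ${}^L T({}^K\nabla)^f_{\ bc}$ vanishes; then the right-hand side of (\ref{ec38}) is identically zero, whence $Q({}^K\nabla, g) = 0$, i.e. ${}^K\nabla$ is $E$-metric-$g$-compatible. For the converse I would assume $Q({}^K\nabla, g)_{abc} = 0$ for all indices, so that (\ref{ec38}) reads ${}^L T({}^K\nabla)^f_{\ bc}\, g_{fa} = 0$ for all $a$; since $g$ is an $E$-metric it is non-degenerate (Definition \ref{dc1}), so I may contract with the inverse metric $g^{ea}$ and use $g^{ea} g_{fa} = \delta^e_{\ f}$ to obtain ${}^L T({}^K\nabla)^e_{\ bc} = 0$ for all $e, b, c$, i.e. ${}^K\nabla$ is $E$-torsion-free.

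I do not expect any genuine obstacle, since the computational content is entirely absorbed into Proposition \ref{pc9}; the only two points demanding a moment's care are verifying that the almost-Courant structure really does furnish the local almost-Leibniz data required there (handled by Proposition \ref{pd1}) and invoking the non-degeneracy of $g$ to pass from the contracted vanishing ${}^L T({}^K\nabla)^f_{\ bc}\, g_{fa} = 0$ to the vanishing of the torsion components themselves. As a consistency check one may observe that this equivalence is precisely the remark made after Proposition \ref{pc9} — that an $E$-torsion-free $E$-Koszul connection is automatically an $E$-Levi-Civita connection — now specialized to the canonical locality operator ${}^g L$ of an almost-Courant algebroid.
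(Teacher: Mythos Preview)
Your argument is correct and somewhat more direct than the paper's. The paper specializes to a $g$-orthonormal local $E$-frame, invokes Proposition~\ref{pd4} to note that such frames are of Lie-type, and then recomputes the torsion and non-metricity components explicitly there to obtain the inverse relation ${}^L T({}^K\nabla)^a_{\ bc} = -Q({}^K\nabla, g)_{dbc}\, g^{ad}$; from this it reads off ``metric-compatible $\Rightarrow$ torsion-free'' and cites Proposition~\ref{pc9} for the converse. You instead observe that the identity (\ref{ec38}) from Proposition~\ref{pc9} already yields both implications once the non-degeneracy of $g$ is used to invert the contraction with $g_{fa}$, so the orthonormal-frame computation and the appeal to Proposition~\ref{pd4} are unnecessary. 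Your route is shorter; the paper's has the incidental benefit of displaying the explicit form of both tensors in the almost-Courant setting. One small inaccuracy worth flagging: your parenthetical that an $E$-Koszul connection ``exists by Propositions~\ref{pc7} and~\ref{pc8}'' overstates what those results show---they guarantee that any solution of the implicit equation (\ref{ec34}) with $\mathcal{K}=K$ is a genuine linear $E$-connection, but do not assert that a solution exists. This does not affect your proof, which is conditional on such a connection being given.
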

\begin{proof} By the proposition (\ref{pd4}), $g$-orthonormal local $E$-frames are of Lie-type. Hence, by the proposition (\ref{pd1}), the $E$-torsion (\ref{ec39}) and $E$-non-metricity (\ref{ec40}) components of an $E$-Koszul connection become
\begin{align} 
T(^K \nabla)^a_{\ b c} &= \frac{1}{2} \left( \Gamma(^K \nabla)^e_{\ a c} g_{e b} + \Gamma(^K \nabla)^e_{\ a b} g_{e c}\right) g^{a d}, \nonumber\\
Q(^K \nabla, g)_{a b c} &= - \frac{1}{2} \left( \Gamma(^K \nabla)^e_{\ a c} g_{e b} + \Gamma(^K \nabla)^e_{\ a b} g_{e c}\right), \nonumber
\end{align}
on a $g$-orthonormal local $E$-frame $(X_a)$. They clearly satisfy
\begin{equation} T(^K \nabla)^a_{\ b c} = - Q(^K \nabla, g)_{d b c} g^{a d}. \nonumber
\end{equation}
Hence, if it is $E$-metric-$g$-compatible, then it is $E$-torsion-free. By the proposition (\ref{pc9}), the other implication is already proved for a more general case. 
\end{proof}


\section{Concluding Remarks}

\noindent In this paper, $E$-metric-connection geometries are constructed on regular local pre-Leibniz algebroids with a locality projector. This construction is done with the fewest possible number of assumptions and completely parallel to usual metric-affine geometries on a smooth manifold. As a special case, metric-affine geometry is deduced in a unique way, so one can say that with this new geometry one generalizes the general relativity. Moreover, on exact almost-Courant algebroids another uniqueness result is proven for the locality projector, which explains the necessity of the curvature operator used in the double field theory literature. By combining the existing information, especially on Lie algebroids, some new results are proven. For example, $E$-Koszul connections, which are generalizations of Levi-Civita connections, are defined and shown to be helpful for numerous properties. In particular, a generalization of the fundamental theorem of Riemannian geometry is proven for almost-Leibniz algebroids. Moreover, some special cases for the existence of $E$-Levi-Civita connections are investigated. Most importantly, locality structures and locality projectors on local almost-Leibniz algebroids are defined in order to construct $E$-curvature tensor.

One may think of several other structures in the usual geometrical setting that can be ``lifted'' to pre-Leibniz algebroids. For example, Weyl invariant theories \cite{27} on pre-Leibniz algebroids would be an interesting case. The authors' ongoing project on this topic by defining $E$-versions of conformal, projective and Weyl structures is on its way. Such constructions might lead one to the use of an analogous version of Riemann-Cartan-Weyl geometry to explain, for example, the M-theoretic supergravity 3-form $C$-field \cite{28}. Another possible direction to extend our work would be to define a local double field theory in terms of a scalar field derived just from a generalized non-metricity tensor. This will be the generalization of the symmetric teleparallel gravity \cite{29}. Moreover, there are some results on the generalizations of deformations \cite{30} and $H$-twisting \cite{31} of Lie brackets, which will be the subject of future work.

\bigskip
\noindent \textbf{Acknowledgment}

\noindent The authors are thankful to Ahmet Berkay Ke\c{c}eci and F\i rt\i na K\"{u}\c{c}\"{u}k for fruitful discussions on abelian and non-abelian categories. They are also grateful to Frederick Reece for proofreading the paper. 

\newpage

\end{document}